\colorlet{SCcolor}{gray!25}
\newtheorem{definition}{Definition}[section]
\newtheorem{lemma}[definition]{Lemma}
\newtheorem{theorem}[definition]{Theorem}
\newtheorem{corollary}[definition]{Corollary}
\newtheorem{remark}[definition]{Remark}
\newcommand{\cmark}{\ding{51}}%
\newcommand{\etal}{et. al. }%
\def\squareforqed{\hbox{\rlap{$\sqcap$}$\sqcup$}}
\def\qed{\ifmmode\squareforqed\else{\unskip\nobreak\hfil
\penalty50\hskip1em\null\nobreak\hfil\squareforqed
\parfillskip=0pt\finalhyphendemerits=0\endgraf}\fi}
\def\endenv{\ifmmode\;\else{\unskip\nobreak\hfil
\penalty50\hskip1em\null\nobreak\hfil\;
\parfillskip=0pt\finalhyphendemerits=0\endgraf}\fi}
\mathchardef\ordinarycolon\mathcode`\:
\def\vcentcolon{\mathrel{\mathop\ordinarycolon}}
\newcommand{\nc}{\newcommand}
\nc{\rnc}{\renewcommand}
\nc{\beq}{\begin{equation}}
\nc{\eeq}{{\end{equation}}}
\nc{\beqa}{\begin{eqnarray}}
\nc{\eeqa}{\end{eqnarray}}
\nc{\lbar}[1]{\overline{#1}}
\nc{\bra}[1]{\langle#1|}
\nc{\ket}[1]{|#1\rangle}
\nc{\ketbra}[2]{|#1\rangle\!\langle#2|}
\nc{\braket}[2]{\langle#1|#2\rangle}
\nc{\proj}[1]{| #1\rangle\!\langle #1 |}
\nc{\avg}[1]{\langle#1\rangle}
\nc{\Rank}{\operatorname{Rank}}
\nc{\smfrac}[2]{\mbox{$\frac{#1}{#2}$}}
\nc{\tr}{\operatorname{Tr}}
\nc{\ox}{\otimes}
\nc{\dg}{\dagger}
\nc{\dn}{\downarrow}
\DeclareMathOperator{\supp}{supp}
\nc{\cA}{{\cal A}}
\nc{\cB}{{\cal B}}
\nc{\cC}{{\cal C}}
\nc{\cD}{{\cal D}}
\nc{\bD}{{\mathbf{D}}}
\nc{\cE}{{\cal E}}
\nc{\ce}{{\cal e}}
\nc{\cf}{{\cal f}}
\nc{\cd}{{\cal d}}
\nc{\cn}{{\cal n}}
\nc{\cp}{{\cal p}}
\nc{\cq}{{\cal q}}
\nc{\cm}{{\cal m}}
\nc{\cF}{{\cal F}}
\nc{\cG}{{\cal G}}
\nc{\cH}{{\cal H}}
\nc{\cI}{{\cal I}}
\nc{\cJ}{{\cal J}}
\nc{\cK}{{\cal K}}
\nc{\cL}{{\cal L}}
\nc{\cM}{{\cal M}}
\nc{\cN}{{\cal N}}
\nc{\cO}{{\cal O}}
\nc{\cP}{{\cal P}}
\nc{\cQ}{{\cal Q}}
\nc{\cR}{{\cal R}}
\nc{\cS}{{\cal S}}
\nc{\cT}{{\cal T}}
\nc{\cU}{{\cal U}}
\nc{\cV}{{\cal V}}
\nc{\cW}{{\cal W}}
\nc{\cX}{{\cal X}}
\nc{\cY}{{\cal Y}}
\nc{\cZ}{{\cal Z}}
\newcommand{\Id}{{\mathds{1}}}
\nc{\csupp}{{\operatorname{csupp}}}
\nc{\qsupp}{{\operatorname{qsupp}}}
\nc{\var}{{\operatorname{var}}}
\nc{\Var}{{\operatorname{Var}}}
\nc{\rar}{\rightarrow}
\nc{\lrar}{\longrightarrow}
\nc{\polylog}{{\operatorname{polylog}}}
\nc{\wt}{{\operatorname{wt}}}
\nc{\av}[1]{{\left\langle {#1} \right\rangle}}
\newcommand{\eps}{\varepsilon}
\def\r{\rho}
\nc{\RR}{{{\mathbb R}}}
\nc{\CC}{{{\mathbb C}}}
\nc{\DD}{{{\mathbb D}}}
\nc{\FF}{{{\mathbb F}}}
\nc{\NN}{{{\mathbb N}}}
\nc{\ZZ}{{{\mathbb Z}}}
\nc{\PP}{{{\mathbb P}}}
\nc{\QQ}{{{\mathbb Q}}}
\nc{\UU}{{{\mathbb U}}}
\nc{\EE}{{{\mathbb E}}}
\nc{\id}{{\operatorname{id}}}
\nc{\CHSH}{{\operatorname{CHSH}}}
\nc{\be}{\begin{equation}}
\nc{\ee}{{\end{equation}}}
\nc{\bea}{\begin{eqnarray}}
\nc{\eea}{\end{eqnarray}}
\nc{\Hom}[2]{\mbox{Hom}(\CC^{#1},\CC^{#2})}
\nc{\rU}{\mbox{U}}
\nc{\ob}[1]{#1}
\nc{\SEP}{{\text{SEP}}}
\nc{\NS}{{\text{NS}}}
\nc{\LOCC}{{\text{LOCC}}}
\nc{\PPT}{{\text{PPT}}}
\nc{\EXT}{{\text{EXT}}}
\nc{\Sym}{{\operatorname{Sym}}}
\nc{\ERLO}{{E_{\text{r,LO}}}}
\nc{\ERLOCC}{{E_{\text{r,LOCC}}}}
\nc{\ERPPT}{{E_{\text{r,PPT}}}}
\nc{\ERLOCCinfty}{{E^{\infty}_{\text{r,LOCC}}}}
\nc{\Aram}{{\operatorname{\sf A}}}
\begin{document}

\title{Quantum Network Discrimination}

\date{}

\author{Christoph Hirche}
\email{christoph.hirche@gmail.com}
\affiliation{QMATH, Department of Mathematical Sciences, University of Copenhagen, Universitetsparken 5, 2100 Copenhagen, Denmark}

\begin{abstract}
Discrimination between objects, in particular quantum states, is one of the most fundamental tasks in (quantum) information theory. Recent years have seen significant progress towards extending the framework to point-to-point quantum channels. However, with technological progress the focus of the field is shifting to more complex structures: Quantum networks. In contrast to channels, networks allow for intermediate access points where information can be received, processed and reintroduced into the network. In this work we study the discrimination of quantum networks and its fundamental limitations. 

In particular when multiple uses of the network are at hand, the roster of available strategies becomes increasingly complex. The simplest quantum network that captures the structure of the problem is given by a quantum superchannel. We discuss the available classes of strategies when considering $n$ copies of a superchannel and give fundamental bounds on the asymptotically achievable rates in an asymmetric discrimination setting. Furthermore, we discuss achievability, symmetric network discrimination, the strong converse exponent, generalization to arbitrary quantum networks and finally an application to an active version of the quantum illumination problem. 
\end{abstract}

\maketitle

\section{Introduction}
\label{sec:intro}

Hypothesis testing not only allows us to investigate the usually unavoidable error occurring when discriminating between two possible quantum states or channels. The framework has also proven useful in giving bounds, determining properties and proof operational interpretations of quantities such as the capacity of a quantum channel~\cite{wang2012one,datta2013smooth}, the entanglement in a quantum state~\cite{brandao2020adversarial} and many more~\cite{cooney2016operational, hirche2017discrimination}. 
In the case of discriminating two quantum states, a wide body of literature is available determining the optimal single-copy and asymptotic errors in several different scenarios, see e.g.~\cite{Audenaert2008,BBH}. In particular, it is known that when several copies of a quantum state are available, measuring each copy individually does not lead to the optimal error but rather one needs to use a joint measurement. 

The case of discriminating between two quantum channels is much more complex because one can also choose the input of the channels in order to facilitate the discrimination. In particular, when multiple copies or uses are available, this additional freedom allows to pick the inputs adaptively based on earlier outputs of the channel. Consequently, determining the asymptotically optimal error in this setting has long remained an open problem. Recently, a series of publications has finally lead to significant progress on the problem, in particular in the asymmetric asymptotic Stein's setting~\cite{BHKW,wang2019resource,fang2019chain}. While it was known that in the classical setting adaptive strategies and even jointly distributed channel inputs do not lead to any advantage~\cite{Hayashi09}, general converse bounds in the quantum setting were only recently shown in~\cite{BHKW}, which in particular allowed to extend the classical result to classical-quantum channels, see also~\cite{BHKW2}. Subsequent work then showed that in the Stein's setting the bounds from~\cite{BHKW} are indeed optimal also for general channels~\cite{wang2019resource} and that adaptive strategies do not outperform parallel strategies with entangled inputs~\cite{fang2019chain}. However, generally, entangled inputs are necessary and give an advantage over product state inputs~\cite{fang2019chain}. A close investigation of the power of adaptive strategies in several settings followed in~\cite{salek2020adaptive}. 

Both, quantum states and channels, already play an important role when implementing quantum technologies. However, with  advances in experimental research and implementations it becomes increasingly more relevant to investigate the properties of more complex structures such as quantum networks. These networks allow for even more ways of interacting. In particular, one could at some point receive an intermediate output of the network, process it and then reintroduce it to the remainder of the network. Relevant examples of such networks on different size scales include quantum circuits, distributed computational resources or even a quantum internet. These additional possibilities make the problem even more complex than the channel case. One can start with product or entangled states, use individual or joint measurements and process intermediate access points with individual or joint quantum channels. Additionally all the different tools can be chosen adaptively and while channels can be used either in parallel or successively, one can run the first part of a quantum network, get a state from an access point and run it through an entirely different copy of the network before reintroducing it to the next part of the first network. All these possibilities lead to a wide range of available strategies one has to consider when searching for the optimal error rates. 

In this work we classify the possible strategies and give bounds on the optimal error when discriminating between networks. We focus on the asymptotic setting and here particularly the asymmetric Stein's setting. For all classes we give converse bounds and discuss their optimality. As evident from the above description the problem is extremely complex, which is why for much of this work we discuss the results for the special case of networks with exactly one intermediate access point. This subclass is also known as superchannels since it can be understood as transforming quantum channels, as input to the access point, into quantum channels. Afterwards we extend the results to general networks and settings beyond Stein's Lemma. 

The remainder of this paper is organized as follows. First we discuss the general notation and definitions in Section~\ref{sec:preliminaries}, including definitions of quantum networks and superchannels in Section~\ref{sec:int-networks}. As a warm-up we then briefly discuss one-shot discrimination of superchannels in Section~\ref{sec:one-shot}. In Section~\ref{sec:divergences} we discuss the primary technical tool of this investigation: amortized divergences for superchannels. The main part of this work is Section~\ref{sec:asymDisc}, where we discuss different classes of strategies for discriminating multiple copies of superchannels with focus on converse bounds in the asymptotic Stein's setting. In Section~\ref{sec:examples} we discuss several examples, such as classical channels and some channels with particular parameterizations. Next, we discuss other asymptotic settings such as the symmetric Chernoff setting and the strong converse exponent in Section~\ref{sec:BeyondStein}. In Section~\ref{sec:networks} we discuss the generalization to quantum networks with an arbitrary number of access points. Finally we discuss applications including an active variant of quantum illumination in Section~\ref{sec:applications} and conclude in Section~\ref{sec:conclusions}. 

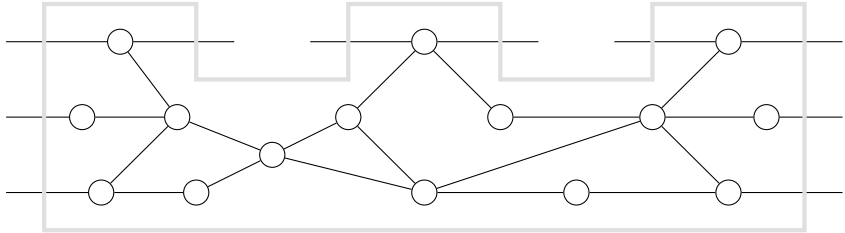
\begin{figure}[t]
\centering
\begin{tikzpicture}[]
\draw (2,0) node[draw=black,circle] (N1) {};
\draw (7,-1) node[draw=black,circle] (N2) {};
\draw (1.5,-1) node[draw=black,circle] (N3) {};
\draw (1.75,-2) node[draw=black,circle] (N4) {};
\draw (2.75,-1) node[draw=black,circle] (N5) {};
\draw (3,-2) node[draw=black,circle] (N6) {};
\draw (4,-1.5) node[draw=black,circle] (N7) {};
\draw (5,-1) node[draw=black,circle] (N8) {};
\draw (6,-2) node[draw=black,circle] (N9) {};
\draw (6,0) node[draw=black,circle] (N10) {};
\draw (9,-1) node[draw=black,circle] (N11) {};
\draw (8,-2) node[draw=black,circle] (N12) {};
\draw (10,0) node[draw=black,circle] (N13) {};
\draw (10.5,-1) node[draw=black,circle] (N14) {};
\draw (10,-2) node[draw=black,circle] (N15) {};

\draw (N11) -- (N15);
\draw (N1) -- (N5);
\draw (N3) -- (N5);
\draw (N4) -- (N6);
\draw (N4) -- (N5);
\draw (N5) -- (N7);
\draw (N6) -- (N7);
\draw (N7) -- (N8);
\draw (N7) -- (N9);
\draw (N8) -- (N10);
\draw (N8) -- (N9);
\draw (N9) -- (N12);
\draw (N9) -- (N11);
\draw (N12) -- (N15);
\draw (N11) -- (N13);
\draw (N11) -- (N14);
\draw (N11) -- (N2);
\draw (N10) -- (N2);

\draw (0.5,0) -- (N1);
\draw (0.5,-1) -- (N3);
\draw (0.5,-2) -- (N4);
\draw (N1) -- (3.5,0); 
\draw (4.5,0) -- (N10); 
\draw (N10) -- (7.5,0); 
\draw (8.5,0) -- (N13); 
\draw (N13) -- (11.5,0); 
\draw (N14) -- (11.5,-1); 
\draw (N15) -- (11.5,-2); 

\draw[ultra thick,color=SCcolor] (1,0.5) -- (3,0.5) -- (3,-0.5) -- (5,-0.5) -- (5,0.5) -- (7,0.5) -- (7,-0.5) -- (9,-0.5) -- (9,0.5) -- (11,0.5) -- (11,-2.5) -- (1,-2.5) -- cycle;

\end{tikzpicture}
\caption{\label{Fig:3comb} Any network can be considered as a $k$-comb, where $k-1$ is the number of intermediate access points. In this example, the chosen network takes the form of a $3$-comb with $2$ intermediate access points. }
\end{figure}

\section{Preliminaries}\label{sec:preliminaries}

Here we introduce our notation and give the relevant definitions needed later. In particular, Section~\ref{sec:entropies} introduces the required entropic quantities and Section~\ref{sec:int-networks} discuses definitions of quantum superchannels and networks. 


\subsection{Setup}

Throughout, quantum systems are denoted by capital letters $A$, $B$, etc. and have finite dimensions $|A|$, $|B|$, etc., respectively. Linear operators acting on system $A$ are denoted by $L_A\in\cL(A)$ and positive semi-definite operators by $P_A\in\cP(A)$. Quantum states of system $A$ are denoted by $\rho_A\in\cS(A)$ and pure quantum states by $\Psi_A\in\cV(A)$. 
Quantum channels are completely positive and trace-preserving maps from $\cL(A)$ to $\cL(B)$ and denoted by $\cN_{A\to B}\in\cQ(A\to B)$. The Choi state of a quantum channel $\cN_{A\to B}$ is a standard concept in quantum information and is defined as $\cN_{A\to B}( \Phi_{RA})$, where $\Phi_{RA}$ is the maximally entangled state. Classical systems are denoted by $X$, $Y$, and $Z$ and have finite dimensions $|X|$, $|Y|$, and $|Z|$, respectively. We will drop the indices when denoting quantum states or channels whenever we deem them clear from context. For $p\geq1$ the Schatten norms are defined for $L_A\in\cL(A)$ as
\begin{align}
\|L_A\|_p\coloneqq\Big(\tr\big[|L_A|^p\big]\Big)^{1/p}.
\end{align}


\subsection{Quantum entropies}\label{sec:entropies}

Asymptotic quantum hypothesis testing is closely related to several entropic quantities and their properties~\cite{hirche2018asymptotic}. In this section we will define the entropies used in the remainder of this work. 

The quantum relative entropy for $\rho,\sigma\in\cS(A)$ is defined as \cite{Ume62}
\begin{align}
D(\rho\|\sigma)\coloneqq\begin{cases} \tr\big[\rho\left(\log\rho-\log\sigma\right)\big] \qquad & \supp(\rho)\subseteq\supp(\sigma)\\ +\infty & \text{otherwise,}\end{cases}
\end{align}
where in the above and throughout the paper we employ the convention that all logarithms are evaluated using base two.
The Petz-R\'enyi divergences are defined for $\rho,\sigma\in\cS(A)$ and $\alpha\in(0,1)\cup(1,\infty)$ as \cite{P85,P86}
\begin{align}
D_\alpha(\rho\|\sigma)\coloneqq\frac{1}{\alpha-1}\log\tr\left[\rho^\alpha\sigma^{1-\alpha}\right],
\end{align}
whenever either $\alpha\in(0,1)$ and $\rho$ is not orthogonal to $\sigma$ in Hilbert-Schmidt inner product or $ \alpha>1$ and $\supp(\rho)\subseteq\supp(\sigma)$. Otherwise, we set $D_\alpha(\rho\|\sigma)\coloneqq +\infty$.
In the above and throughout the paper, we employ the convention that inverses are to be understood as generalized inverses.
For $\alpha\in\{0,1\}$, we define the Petz-R\'enyi divergence in the respective limit as
\begin{align}
D_0(\rho\|\sigma)& \coloneqq\lim_{\alpha\to0}D_\alpha(\rho\|\sigma)=-\log
\tr\left[ \Pi_{\rho} \sigma\right],
\\
D_1(\rho\|\sigma)& \coloneqq\lim_{\alpha\to1}D_\alpha(\rho\|\sigma)=D(\rho\|\sigma),
\end{align}
where $\Pi_{\rho}$ denotes the projection onto the support of $\rho$.
Another quantity of interest related to the Petz-R\'enyi divergences is the Chernoff divergence~\cite{ACMBMAV07,Audenaert2008,NS09}
\begin{align}
C(\rho\|\sigma) & \coloneqq-\inf_{0\leq\alpha\leq1}\log\tr\left[\rho^\alpha\sigma^{1-\alpha}\right]\\
& = \sup_{0\leq\alpha\leq1} (1-\alpha) D_\alpha(\rho \| \sigma). \label{eq:Chernoff-div-expr-renyi}
\end{align}
The sandwiched R\'enyi divergences are defined for $\rho,\sigma\in\cS(A)$ and $\alpha\in(0,1)\cup(1,\infty)$  as~\cite{muller2013quantum,WWY14}
\begin{align}
\widetilde{D}_\alpha(\rho\|\sigma)\coloneqq\frac{1}{\alpha-1}\log\tr\left[\left(\sigma^{\frac{1-\alpha}{2\alpha}}\rho\sigma^{\frac{1-\alpha}{2\alpha}}\right)^\alpha\right]
\end{align}
whenever either $ \alpha\in(0,1)$ and $\rho$ is not orthogonal to $\sigma$ in Hilbert-Schmidt inner product or $\alpha>1$ and $\supp(\rho)\subseteq\supp(\sigma)$. Otherwise we set $\widetilde{D}_\alpha(\rho\|\sigma)\coloneqq\infty$. For $\alpha=1$, we define the sandwiched R\'enyi relative entropy in the limit as
\begin{align}
\widetilde{D}_1(\rho\|\sigma)\coloneqq\lim_{\alpha\to1}\widetilde{D}_\alpha(\rho\|\sigma)=D(\rho\|\sigma).
\end{align}
We have that
\begin{align}
\widetilde{D}_{1/2}(\rho\|\sigma)=-\log F(\rho,\sigma),
\end{align}
with the Uhlmann's fidelity defined as $F(\rho,\sigma)\coloneqq\|\sqrt{\rho}\sqrt{\sigma}\|_1^2$~\cite{U76}.
In the limit $\alpha\to\infty$, the sandwiched R\'enyi relative entropy converges to the max-relative entropy~\cite{Datta09,Jain02}
\begin{align}
D_{\max}(\rho\|\sigma) \coloneqq\widetilde{D}_\infty(\rho\|\sigma)& \coloneqq\lim_{\alpha\to\infty}\widetilde{D}_\alpha(\rho\|\sigma)
\\
& =\log \left\|\sigma^{-1/2}\rho\sigma^{-1/2}\right\|_\infty \\
& =\inf\left\{\lambda:\rho\leq2^{\lambda}\cdot\sigma\right\},
\end{align}
as shown in \cite{muller2013quantum}. 
All of the above quantum R\'enyi divergences reduce to the corresponding classical version by embedding probability distributions into diagonal, commuting quantum states. 


\subsection{Quantum networks and superchannels}\label{sec:int-networks}\label{SSec:Networks}

The main subject of this work are quantum networks, which can be seen as maps taking channels as input and outputing a quantum channel. Such networks have previously been considered in various contexts in quantum information theory~\cite{LM15,WFD17,CG18,chiribella2009theoretical,G18}. We aim to determine the optimal discrimination errors and therefore assume a single experimenter who has access to the entire network, hence leaving settings with several distributed experimenters for future work. 

In~\cite{chiribella2009theoretical} it was shown that every quantum network as considered in this work can be described by a sequence of $k$ quantum channels, 
\begin{align}
(\cN^1_{C\rightarrow A_1S_1}, \cN^2_{B_1S_1\rightarrow A_2S_2}, \dots,  \cN^{k-1}_{B_{k-2}S_{k-2}\rightarrow A_{k-1}S_{k-1}}, \cN^{k}_{B_{k-1}S_{k-1}\rightarrow D}).
\end{align}
Such a sequence takes the shape of a comb and is therefore called a $k$-comb which has $k-1$ access points, compare also Figure~\ref{Fig:3comb}. We often use the convenient notation $\Theta^k \equiv (\cN^i)_k$ to denote $k$-combs. Sometimes we call the individual channels $\cN^i$ the components of $\Theta^k$. 

Let $\mathcal{Q}(C\rightarrow D)$ be the set of quantum channels from $C$ to $D$, then a $k$-comb acts on $k-1$ channels $\cA^i\in\mathcal{Q}(A_i\rightarrow B_i)$ as 
\begin{align}
\Theta^k ((\cA^i)_{k-1}) = \cN^k\circ \bigcirc_{i=k-1}^{1} ( \cA^i \circ \cN^i  ) \in\mathcal{Q}(C\rightarrow D).  
\end{align}
Whenever the resulting channel acts on a state we denote that by $\Theta^k ((\cA^i)_{k-1})(\rho)$. Sometimes we will need parts of $k$-combs and we denote as $\Theta^{k,m}$ the $m$-comb given by the first $m$ components of $\Theta^k$. Generally quantum networks constructed as described here naturally preserve complete positivity and trace preservation, i.e. if we input quantum channels the output is again a quantum channel.

In this work we will mostly consider $2$-combs, also known as superchannels~\cite{CDP08}, as they capture the important features of our problem. For simplicity we will usually denote $2$-combs as $\Theta\equiv\Theta^2$ and we will often use the decomposition $(\cE_{C\rightarrow AS}, \cD_{BS\rightarrow D})$. Generally we will drop subscripts when the system are clear from the context. 

If one takes e.g. a quantum circuit and writes it in the above form it is often not a priori clear whether a certain gate should belong to the channel $\cE$ or $\cD$, meaning many different networks can describe the same circuit. Here, we assume that a description of he network is given. Nevertheless for many examples it will be useful to make this ambiguity explicit by introducing a side-channel with which we parametrize a superchannel as $(\cE_{C\rightarrow AS},\cS_{S\rightarrow S'}, \cD_{BS'\rightarrow D})$. For example depictions of superchannels see Figures~\ref{Fig:super} and~\ref{Fig:superExamples}. 
It is however worth pointing out that one can limit the possible descriptions without losing generality. First notice that we can always extend $\cE$ to an isometry $\cV_\cE$ and $\cD$ to also trace out the additional system. This does not change the implemented superchannel~\cite{G18}. An analog statement for general networks had previously been observed in~\cite{bisio11,bisio11b}. Furthermore, it was shown in~\cite[Theorem 2]{gour2020dynamical} that all parameterizations $(\cV_\cE, \cD)$ with minimal system size $|S|$ are unique up to the choice of a unitary.


\section{Single-copy superchannel discrimination}\label{sec:one-shot}

As a warm-up to the topic of this work, we will in this section discuss the single-copy problem of quantum network discrimination. For related discussions and more background see also~\cite{chiribella2008memory,chiribella2009theoretical,nakahira2020ultimate}. It is well known that in symmetric state discrimination the optimal one-shot error is related to the trace distance~\cite{H69,H73,Hel76}. Here we want to consider the case of two quantum superchannels. That is an experimenter has access to one use of a superchannel, not knowing which out of the two available options $\Theta_1$ or $\Theta_2$ it is. In order to decide which  is the case, the most general approach is to feed an arbitrary state $\rho_{CR}$ into the superchannel which itself get applied to a channel $\cN_{AR\rightarrow BR}$, resulting into an output state $\rho^1_{DR}$ if the superchannel was $\Theta_1$ and $\rho^2_{DR}$ if the superchannel was $\Theta_2$. To the output state one can apply a measurement to determine the superchannel. Without loss of generality we can assume the use of a binary measurement $\cQ=\{Q_1, Q_2=\1-Q_1\}$. As usual in the hypothesis testing setting, one is left with two possibilities of erroneously determining the result: the Type-1 and the Type-2 error, given by
\begin{align}
\alpha(\cS) &= \tr(Q_2\Theta_1(\cN)(\rho) ), \\
\beta(\cS) &= \tr(Q_1 \Theta_2(\cN)(\rho) )
\end{align}
respectively. Throughout this manuscript $\cS$ will denote the chosen strategy, in this case the set $\cS=\{\rho,\cN,\cQ\}$. 

The most common setting in the single copy case is to minimize the average error. For a fixed channel $\cN_{AR\rightarrow BR}$ we can follow directly from the channel case that the probability of error is given by 
\begin{align}
p_{err}(\Theta_1,\Theta_2,\cN) = \frac12 - \frac14 \| \Theta_1(\cN) - \Theta_2(\cN) \|_\diamond, 
\end{align}
where we used the usual diamond norm for quantum channels. Of course we are also allowed to optimize over the channel $\cN$ leading to the optimal one shot error probability
\begin{align}
p_{err}(\Theta_1,\Theta_2) = \inf_{\cN} p_{err}(\Theta_1,\Theta_2,\cN) = \frac12 - \frac14 \sup_{\cN}\| \Theta_1(\cN) - \Theta_2(\cN) \|_\diamond, 
\end{align}
which motivates the definition of a diamond norm equivalent on superchannels,
\begin{align}
\| \Theta_1 - \Theta_2 \|_\diamond := \sup_{\cN}\| \Theta_1(\cN) - \Theta_2(\cN) \|_\diamond.
\end{align}
Note that a priori the optimization is over channels with arbitrary large reference system $R$. For this reason we are also free to omit the additional reference system usually attached to the state, that does not pass through the channel $\cN$, as the latter already includes channels of the form $\cN\otimes\id$, a fact that will later reemerge when discussing superchannel entropies. 

As a simple example we can consider replacer superchannels $\Theta_\cR$ that act as $\Theta_\cR(\cN)=\cR$ for every $\cN$ and a particular fixed $\cR$. Those channels generalize the commonly considered replacer channels and their implementation is shown in Figure~\ref{Fig:super}. We easily get that for two replacer superchannels $\Theta_{\cR_1}$ and $\Theta_{\cR_2}$ we have
\begin{align}
\| \Theta_{\cR_1} - \Theta_{\cR_2} \|_\diamond = \| \cR_1 - \cR_2 \|_\diamond.
\end{align}
Replacer superchannels will be considered in more detail in Section~\ref{sec:replacer}. 

Now, the same can easily be done for two arbitrary quantum networks. The main difference is that when defining the diamond norm for $k$-combs one has to optimize over $(k-1)$-combs instead of the quantum channel $\cN$ (or alternatively one could optimize over $k-1$ separate quantum channels).

In contrast to the symmetric case, one is often interested in minimizing one error, say the Type-2 error, while only bounding the Type-1 error to be below a certain constant $\epsilon$. Similarly to the discussion above for the diamond norm, this motivates us to define a superchannel hypothesis testing relative entropy that extends the state and channel case~\cite{wang2012one, Cooney2016} as follows, 
\begin{align}
D_H^\epsilon(\Theta_1\| \Theta_2) &:= \sup_{\cN} D_H^\epsilon(\Theta_1(\cN) \| \Theta_2(\cN)) \nonumber\\
&= \sup_{\cS} -\log\left\{ \tr(Q_1 \Theta_2(\cN)(\rho) ) \, |\, \tr((\1 -Q_1) \Theta_1(\cN)(\rho) ) \leq \epsilon \right\}, 
\end{align}
where we again optimize over strategies $\cS=\{\rho,\cN,\cQ\}$. As before, a version for networks is defined similarly. 

As we will see in the remainder of this work, the picture becomes much more complicated if we allow for multiple uses of the quantum superchannel or network since they can be combined in a variety of different configurations. To proceed,  we will first introduce the relevant entropic distance measures on superchannels in the next section.


\section{Generalized divergences ...}\label{sec:divergences}
We want to define distance measures for superchannels based on generalized divergences. We will later see that they are operationally meaningful in terms of superchannel discrimination. In the build up, we will first discuss generalized divergences for states and channels. The generalization to arbitrary networks is discussed later in Section~\ref{sec:networks}.

\subsection{ ... for quantum states}\label{sec:divergences-st}

\begin{figure}[t]
\centering
\resizebox{\textwidth}{!}{
\begin{tikzpicture}
\node at (-2.1,1.3){a)};
\draw[] (0,0) rectangle (1,1);
\node at (0.5,0.5){$\cN$};
\draw[fill=SCcolor] (-1.5,1) -- (-0.5,1) -- (-0.5,-0.5) -- (1.5,-0.5) -- (1.5,1) -- (2.5,1) -- (2.5,-1.5) -- (-1.5,-1.5) -- cycle;
\draw (-0.5,0.5) -- (0,0.5) node[pos=0.5,sloped,above]{$A$};
\draw (1,0.5) -- (1.5,0.5) node[pos=0.5,sloped,above]{$B$};
\draw (-2.5,-0.25) -- (-1.5,-0.25) node[pos=0.5,sloped,above]{$C$};
\draw (2.5,-0.25) -- (3.5,-0.25) node[pos=0.5,sloped,above]{$D$};
\node at (0.5,-1){$\Theta$};

\node at (-2.1+7,1.3){b)};
\draw[] (0+7,0) rectangle (1+7,1);
\node at (0.5+7,0.5){$\cN$};
\draw[fill=SCcolor] (-1.5+7,1) rectangle (-0.5+7,-1.5) node[pos=0.5]{$\cE_\Theta$};
\draw[fill=SCcolor] (1.5+7,-1.5) rectangle (2.5+7,1) node[pos=0.5]{$\cD_\Theta$};
\draw (-0.5+7,-1) -- (1.5+7,-1) node[pos=0.5,sloped,above]{$S$};
\draw (-0.5+7,0.5) -- (0+7,0.5) node[pos=0.5,sloped,above]{$A$};
\draw (1+7,0.5) -- (1.5+7,0.5) node[pos=0.5,sloped,above]{$B$};
\draw (-2.5+7,-0.25) -- (-1.5+7,-0.25) node[pos=0.5,sloped,above]{$C$};
\draw (2.5+7,-0.25) -- (3.5+7,-0.25) node[pos=0.5,sloped,above]{$D$};

\node at (-2.1+14,1.3){c)};
\draw[] (0+14,0) rectangle (1+14,1);
\node at (0.5+14,0.5){$\cN$};
\draw[fill=SCcolor] (-1.5+14,1) -- (-0.5+14,1) -- (-0.5+14,-1.5) -- (-1.5+14,-1.5) -- cycle;
\draw[fill=SCcolor] (1.5+14,-1.5) -- (1.5+14,1) -- (2.5+14,1) -- (2.5+14,-1.5)  -- cycle;
\draw (-0.5+14,-1) -- (1.5+14,-1);
\draw (-0.5+14,0.5) -- (0+14,0.5);
\draw (1+14,0.5) -- (1.5+14,0.5);
\draw (-2.5+14,-0.25) -- (-1.5+14,-0.25);
\draw (2.5+14,-0.25) -- (3.5+14,-0.25);
\draw (-1.3+14,0.5) -- (-0.7+14,0.9) -- (-0.7+14,0.1) -- cycle;
\draw (0+14,0.5) -- (-0.7+14,0.5);
\node at (-0.9+14,0.5){$\tau$};
\draw (2.3+14,0.5) -- (1.7+14,0.9) -- (1.7+14,0.1) -- cycle;
\draw (1.5+14,0.5) -- (1.7+14,0.5);
\node at (1.9+14,0.5){$\tr$};
\draw[] (-1.5+14,-0.25) to [out=0,in=180] (-0.5+14,-1);
\draw (1.7+14,-1.3) rectangle (2.3+14,-0.7) node[pos=0.5]{$\cR$};
\draw[] (2.3+14,-1) to [out=0,in=180] (2.5+14,-0.25);
\draw (1.5+14,-1) -- (1.7+14,-1);
\end{tikzpicture}}
\caption{\label{Fig:super} Depiction of superchannels: a) A general superchannel (or $2$-comb) $\Theta$ acting on a channel $\cN$. b) Any superchannel $\Theta$ can be understood as two quantum channels $\cE_\Theta$, $\cD_\Theta$ connected by an auxiliary system $S$. c) Implementation of a replacer superchannel: For any input $\cN$, we get $\Theta(\cN)=\cR$. We fix $\tau$ to be a maximally mixed state.}
\end{figure}
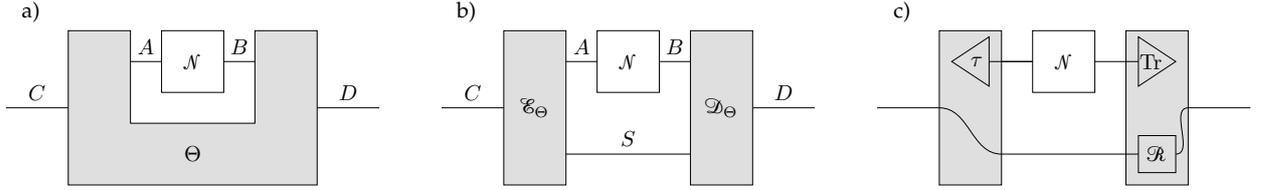

We say that a function $\mathbf{D}:\cS(A)\times\cS(A)\to\mathbb{R}\cup\{+\infty\}$ is a generalized divergence \cite{PV10,SW12} if for arbitrary Hilbert spaces $\mathcal{H}_A$ and $\mathcal{H}_B$, arbitrary states $\rho_A,\sigma_A\in\cS(A)$, and an arbitrary channel $\cN_{A\to B}\in\cQ(A\to B)$, the following data-processing inequality holds
\begin{align}\label{eq:data-processing}
\mathbf{D}(\rho_{A}\|\sigma_A)\geq\mathbf{D}(\cN_{A\to B}(\rho_A)\|\cN_{A\to B}(\sigma_A)).
\end{align}
From this inequality, we find in particular that for all states $\rho_A,\sigma_A\in\cS(A)$, $\omega_R\in\cS(R)$, the following identity holds \cite{WWY14}
\begin{align}
\mathbf{D}(\rho_A\otimes\omega_R\|\sigma_A\otimes\omega_R)=\mathbf{D}(\rho_A\|\sigma_A),
\end{align}
and that for an arbitrary isometric channel $\cU_{A\to B}\in\cQ(A\to B)$, we have that \cite{WWY14}
\begin{align}
\mathbf{D}(\cU_{A\to B}(\rho_A)\|\cU_{A\to B}(\sigma_A))=\mathbf{D}(\rho_A\|\sigma_A).
\end{align}

We call a generalized divergence \textit{faithful} if the inequality $\mathbf{D}(\rho_A\|\rho_A)\leq0$ holds for an arbitrary state $\rho_A\in\cS(A)$, and \textit{strongly faithful} if for arbitrary states $\rho_A,\sigma_A\in\cS(A)$ we have $\mathbf{D}(\rho_A\|\sigma_A)=0$ if and only if $\rho_A=\sigma_A$. Moreover, a generalized divergence is sub-additive with respect to tensor-product states if for all $\rho_A,\sigma_A\in\cS(A)$ and all $\omega_B,\tau_B\in\cS(B)$ we have
\begin{align}
\mathbf{D}(\rho_A\otimes\omega_B\|\sigma_A\otimes\tau_B)\leq\mathbf{D}(\rho_A\|\sigma_A)+\mathbf{D}(\omega_B\|\tau_B)
\end{align}
and simply additive if this holds with equality. 
Examples for generalized divergences of interest are in particular the quantum relative entropy, the Petz-R\'enyi divergences, the sandwiched R\'enyi divergences, or the Chernoff distance\,---\,as defined in Section~\ref{sec:entropies}.

\subsection{ ... for quantum channels}\label{sec:divergences-c}

Based on the generalized divergences for states, one can now define divergences for quantum channels. First, we have the generalized channel divergence for two quantum channels $\cN$, $\cM\in\cQ(A\to B)$,
\begin{align}
\bD(\cN \| \cM ) = \sup_\rho \bD((\cN\otimes\id)(\rho) \| (\cM\otimes\id)(\rho) ), 
\end{align} 
where the optimization is over states $\rho\in\cS(AR)$.  
Since the above quantity is not generally additive~\cite{fang2019chain} it is natural to define a regularized channel divergence as follows, 
\begin{align}
\bD^\infty(\cN \| \cM ) = \lim_{n\rightarrow\infty} \frac{1}{n} \bD(\cN^{\otimes n} \| \cM^{\otimes n} ).
\end{align} 
Finally, in~\cite{BHKW} the amortized channel divergence was introduced as
\begin{align}
\bD^A(\cN \| \cM ) = \sup_{\rho,\tau} \bD((\cN\otimes\id)(\rho) \| (\cM\otimes\id)(\tau) )  -  \bD(\rho \| \tau ).
\end{align} 
All of these definitions have an operational interpretation in terms of an associated channel discrimination task.
Notably, in~\cite{fang2019chain} is was proven that in the case of the relative entropy one gets
\begin{align}
D^\infty(\cN \| \cM ) = D^A(\cN \| \cM ) 
\end{align}
and an analog result was later proven for the sandwiched Renyi relative entropy~\cite{fawzi2020defining}. 

We will need some properties, especially of the amortized relative entropy. First we will summarize some results obtained in~\cite{BHKW} in the following lemma.
\begin{lemma}[From~\cite{BHKW}]\label{lem:AMOproperties}
Given a generalized state divergence $\bD$, we have that
\begin{itemize}
\item if $\bD$ is faithful, then its associated amortized channel divergence obeys $$\bD(\cN \| \cM )  \leq \bD^A(\cN \| \cM ).$$
\item if $\bD$ is strongly faithful, then its associated amortized channel divergence is also strongly faithful, i.e. $$ \bD^A(\cN \| \cM )=0 \,\text{ if and only if }\, \cN=\cM.$$
\item the associated amortized channel divergence is monotone under superchannels, i.e. $$ \bD^A(\cN \| \cM ) \geq \bD^A(\Theta(\cN) \| \Theta(\cM) ). $$
\item the associated amortized channel divergence is stable under adding an identity, i.e. $$ \bD^A(\cN\otimes\id \| \cM\otimes\id ) = \bD^A(\cN \| \cM ). $$ 
\end{itemize}
\end{lemma}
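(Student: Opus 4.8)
# Proof Proposal for Lemma~\ref{lem:AMOproperties}

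The plan is to prove the three items in sequence, since the first essentially reduces to a direct estimate, the second combines the first with a converse direction, and the third is a monotonicity argument built from the definitions plus an amortization/telescoping trick.

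For the first item, I would start from the definition of the amortized channel divergence and choose a particular feasible pair in the supremum. Specifically, given any input state $\rho\in\cS(AR)$ realizing (or nearly realizing) the generalized channel divergence $\bD(\cN\|\cM)=\sup_\rho \bD((\cN\otimes\id)(\rho)\|(\cM\otimes\id)(\rho))$, I would plug $(\rho,\rho)$ — i.e. the same state on both sides — into the amortized expression. Then $\bD^A(\cN\|\cM)\geq \bD((\cN\otimes\id)(\rho)\|(\cM\otimes\id)(\rho))-\bD(\rho\|\rho)$, and faithfulness gives $\bD(\rho\|\rho)\leq 0$, so the subtracted term only helps. Taking the supremum over $\rho$ yields $\bD^A(\cN\|\cM)\geq\bD(\cN\|\cM)$. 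This is short and the only subtlety is making sure faithfulness is applied in the correct direction (we want $-\bD(\rho\|\rho)\geq 0$).

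For the second item, one direction is immediate: if $\cN=\cM$ then for every feasible $(\rho,\tau)$ we have $\bD((\cN\otimes\id)(\rho)\|(\cN\otimes\id)(\tau))\leq \bD(\rho\|\tau)$ by the data-processing inequality~\eqref{eq:data-processing} applied to the channel $\cN\otimes\id$, hence each term in the supremum is $\leq 0$; combined with $\bD^A(\cN\|\cN)\geq \bD(\cN\|\cN)\geq \bD(\rho\|\rho)=0$ using item one and strong faithfulness with $\rho$ arbitrary, we get $\bD^A(\cN\|\cN)=0$. For the converse direction, suppose $\bD^A(\cN\|\cM)=0$. By item one, $0=\bD^A(\cN\|\cM)\geq \bD(\cN\|\cM)\geq 0$, so $\bD(\cN\|\cM)=0$, meaning $\bD((\cN\otimes\id)(\rho)\|(\cM\otimes\id)(\rho))=0$ for all $\rho$; taking $\rho$ pure on $AR$ (e.g. a purification of a full-rank state on $A$) and invoking strong faithfulness of $\bD$ forces $(\cN\otimes\id)(\rho)=(\cM\otimes\id)(\rho)$, and by choosing $\rho$ maximally entangled this is exactly equality of Choi states, hence $\cN=\cM$.

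For the third item — monotonicity under superchannels — I would write $\Theta=(\cE_\Theta,\cD_\Theta)$ with auxiliary system $S$, so that $\Theta(\cN)=\cD_\Theta\circ(\cN\otimes\id_S)\circ\cE_\Theta$ (tensoring everything with $\id_R$ as needed). Given a feasible pair $(\xi,\zeta)$ for $\bD^A(\Theta(\cN)\|\Theta(\cM))$, I would propagate it through $\cE_\Theta$ to define $\rho=(\cE_\Theta\otimes\id)(\xi)$ and $\tau=(\cE_\Theta\otimes\id)(\zeta)$, which are feasible for $\bD^A(\cN\otimes\id_S\|\cM\otimes\id_S)$. Then I would use: (i) data-processing under $\cD_\Theta\otimes\id$ to bound the "output" divergence term $\bD(\Theta(\cN)(\xi)\|\Theta(\cM)(\zeta))\leq \bD((\cN\otimes\id_S)(\rho)\|(\cM\otimes\id_S)(\tau))$, and (ii) data-processing under $\cE_\Theta\otimes\id$ to bound the subtracted "input" term $\bD(\rho\|\tau)\leq\bD(\xi\|\zeta)$, i.e. $-\bD(\xi\|\zeta)\leq-\bD(\rho\|\tau)$. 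Subtracting, the feasible value for $\Theta(\cN),\Theta(\cM)$ is dominated by a feasible value for $\cN\otimes\id_S,\cM\otimes\id_S$, and finally the amortized divergence of $\cN\otimes\id_S$ equals that of $\cN$ (since the amortized divergence already optimizes over arbitrary reference systems, appending a fixed $\id_S$ changes nothing — this is the analogue of the tensor-with-identity invariance of the generalized divergence). Taking suprema gives $\bD^A(\Theta(\cN)\|\Theta(\cM))\leq\bD^A(\cN\|\cM)$.

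The main obstacle I anticipate is item three, specifically the bookkeeping that the pair obtained after pushing through $\cE_\Theta$ is genuinely feasible in the right amortized-divergence problem and that the reduction $\bD^A(\cN\otimes\id_S)=\bD^A(\cN)$ is clean — one must be careful that the reference system in the amortized definition is "arbitrary," absorb $S$ into it, and check no mismatch of systems occurs. Items one and two are essentially one-line arguments once the direction of faithfulness is pinned down; the only care needed there is choosing the input state (maximally entangled / purification of full-rank) so that pointwise equality of output states upgrades to equality of channels.
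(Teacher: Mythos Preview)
The paper does not actually give its own proof of this lemma---it is stated as a summary of results from~\cite{BHKW} with no argument provided. Your proof is correct and follows what is essentially the standard route used there: item one by restricting the supremum to $\rho=\tau$, item two via item one together with the Choi-state argument, and item three by decomposing $\Theta=(\cE_\Theta,\cD_\Theta)$ and applying data processing separately to the pre- and post-processing channels in the amortized expression.

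One small point worth making explicit in item two: when you write ``$\bD(\cN\|\cM)\geq 0$'' and then pass from ``the supremum equals $0$'' to ``$\bD((\cN\otimes\id)(\rho)\|(\cM\otimes\id)(\rho))=0$ for all $\rho$,'' you are implicitly using that $\bD(\rho\|\sigma)\geq 0$ for all states. This is not part of the paper's definition of strong faithfulness, but it does follow from data processing together with strong faithfulness (apply a replacer channel $\cR_\omega$ to both arguments to get $\bD(\rho\|\sigma)\geq\bD(\omega\|\omega)=0$). With that observation in hand your argument goes through cleanly.
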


We now want to state some more properties of the above quantities that will become relevant later. 
\begin{lemma}\label{lem:circ}
For quantum channels $\cN$, $\cM\in\cQ(A\to B)$ and $\cS$, $\cT\in\cQ(B\to C)$, the following holds, 
\begin{align}
\bD^A(\cS\circ\cN \| \cT\circ\cM ) \leq \bD^A(\cS\| \cT ) + \bD^A(\cN \|\cM ). 
\end{align}
\end{lemma}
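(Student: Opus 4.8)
The plan is to reduce this composition bound to the defining supremum of the amortized divergence via a single ``add-and-subtract'' step, in the spirit of the chain-rule arguments of~\cite{BHKW,fang2019chain}. Recall $\cN,\cM\in\cQ(A\to B)$ and $\cS,\cT\in\cQ(B\to C)$. By definition, $\bD^A(\cS\circ\cN\|\cT\circ\cM)$ is the supremum over a reference system $R$ and states $\rho_{AR},\tau_{AR}$ on $AR$ of the quantity $\bD\big(((\cS\circ\cN)\ox\id_R)(\rho_{AR})\,\|\,((\cT\circ\cM)\ox\id_R)(\tau_{AR})\big)-\bD(\rho_{AR}\|\tau_{AR})$, so it suffices to bound this expression by $\bD^A(\cS\|\cT)+\bD^A(\cN\|\cM)$ for each fixed $R,\rho_{AR},\tau_{AR}$.

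First I would introduce the intermediate states $\rho'_{BR}:=(\cN\ox\id_R)(\rho_{AR})$ and $\tau'_{BR}:=(\cM\ox\id_R)(\tau_{AR})$, and observe that associativity of composition gives $((\cS\circ\cN)\ox\id_R)(\rho_{AR})=(\cS\ox\id_R)(\rho'_{BR})$ and likewise for the pair $\cT,\cM$. Adding and subtracting $\bD(\rho'_{BR}\|\tau'_{BR})$ then yields
\begin{align}
&\bD\big((\cS\ox\id_R)(\rho'_{BR})\,\|\,(\cT\ox\id_R)(\tau'_{BR})\big)-\bD(\rho_{AR}\|\tau_{AR}) \nonumber\\
&\quad=\Big[\bD\big((\cS\ox\id_R)(\rho'_{BR})\,\|\,(\cT\ox\id_R)(\tau'_{BR})\big)-\bD(\rho'_{BR}\|\tau'_{BR})\Big]+\Big[\bD(\rho'_{BR}\|\tau'_{BR})-\bD(\rho_{AR}\|\tau_{AR})\Big].
\end{align}
The first bracket is at most $\bD^A(\cS\|\cT)$, because $(\rho'_{BR},\tau'_{BR})$ is an admissible pair of input states on $B$ with reference $R$ in the supremum defining $\bD^A(\cS\|\cT)$. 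The second bracket equals $\bD\big((\cN\ox\id_R)(\rho_{AR})\,\|\,(\cM\ox\id_R)(\tau_{AR})\big)-\bD(\rho_{AR}\|\tau_{AR})$, which is at most $\bD^A(\cN\|\cM)$ directly from the definition. Taking the supremum over $R,\rho_{AR},\tau_{AR}$ finishes the proof.

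Since the argument is essentially bookkeeping, I do not expect a genuine obstacle. The two points that need a little care are: (i) in all three amortized divergences the reference system is quantified over freely, so the intermediate states $\rho'_{BR},\tau'_{BR}$ are legitimate test states for $\bD^A(\cS\|\cT)$ with no enlargement of the reference needed; and (ii) the familiar $\infty-\infty$ issue when $\bD(\rho_{AR}\|\tau_{AR})=+\infty$, which is dispatched by restricting to pairs on which the relevant divergences are finite, or by an $\eps$-approximation of the suprema, exactly as in~\cite{BHKW}.
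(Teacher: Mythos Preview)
Your proposal is correct and follows essentially the same approach as the paper: both arguments insert and remove the intermediate term $\bD(\cN(\rho)\|\cM(\tau))$ and then bound the two resulting brackets by $\bD^A(\cS\|\cT)$ and $\bD^A(\cN\|\cM)$ respectively. The only cosmetic difference is that the paper manipulates the supremum directly while you fix $\rho,\tau$ first and take the supremum at the end.
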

\begin{proof}
This follows from the following chain of arguments: 
\begin{align*}
&\bD^A(\cS\circ\cN \| \cT\circ\cM )  \\
&= \sup_{\rho,\tau} \left[ \bD((\cS\circ\cN )(\rho) \| (\cT\circ\cM)(\tau) )  -  \bD(\rho \| \tau ) \right] \\
&= \sup_{\rho,\tau} \left[ \bD((\cS\circ\cN )(\rho) \| (\cT\circ\cM)(\tau) ) -\bD(\cN(\rho) \| \cM(\tau) )+\bD(\cN(\rho) \| \cM(\tau) ) -  \bD(\rho \| \tau )\right] \\
&\leq \sup_{\rho,\tau} \left[ \bD((\cS\circ\cN )(\rho) \| (\cT\circ\cM)(\tau) ) -\bD(\cN(\rho) \| \cM(\tau) ) \right] +  \sup_{\rho,\tau} \left[ \bD(\cN(\rho) \| \cM(\tau) ) -  \bD(\rho \| \tau ) \right] \\
&= \sup_{\rho,\tau} \left[ \bD((\cS\circ\cN )(\rho) \| (\cT\circ\cM)(\tau) ) -\bD(\cN(\rho) \| \cM(\tau) ) \right] +  \bD^A(\cN \| \cM ) \\
&\leq \sup_{\hat\rho,\hat\tau} \left[ \bD(\cS(\hat\rho) \| \cT(\hat\tau) ) -\bD(\hat\rho \| \hat\tau ) \right] +  \bD^A(\cN \| \cM ) \\
&= \bD^A(\cS \| \cT )+  \bD^A(\cN \| \cM ) ,
\end{align*} 
where the first, third and fourth equality are by definition, the second equality by adding a zero, the first inequality by splitting the supremum and the second inequality by widening set of states which we optimize over. 
\end{proof}
\begin{lemma}
With the definitions as above, for any additive divergence the following holds, 
\begin{align}
\bD^A(\cS\otimes\cN \| \cT\otimes\cM ) = \bD^A(\cS\| \cT ) + \bD^A(\cN \|\cM ), \label{Eq:otimes1}
\end{align}
\end{lemma}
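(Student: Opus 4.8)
The plan is to prove the two inequalities separately. The inequality $\bD^A(\cS\otimes\cN \| \cT\otimes\cM ) \leq \bD^A(\cS\| \cT ) + \bD^A(\cN \|\cM )$ should follow in essentially the same way as Lemma~\ref{lem:circ}, so I would first reduce it to that. Writing $\cS\otimes\cN$ as a composition is the natural move: since $\cS\otimes\cN = (\cS\otimes\id)\circ(\id\otimes\cN)$, Lemma~\ref{lem:circ} gives $\bD^A(\cS\otimes\cN\|\cT\otimes\cM) \le \bD^A(\cS\otimes\id\|\cT\otimes\id) + \bD^A(\id\otimes\cN\|\id\otimes\cM)$. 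It then remains to argue that $\bD^A(\cS\otimes\id\|\cT\otimes\id) = \bD^A(\cS\|\cT)$ and similarly for the other factor — that is, that amortization does not ``see'' an inert tensor factor. This reduces to the observation, already recorded in the excerpt, that any generalized state divergence satisfies $\bD(\rho_A\otimes\omega_R\|\sigma_A\otimes\omega_R)=\bD(\rho_A\|\sigma_A)$; the reference system $R$ attached inside the amortized channel divergence can simply be enlarged to absorb the extra inert system, giving the equality of the amortized quantities. (Alternatively, one invokes the generic data-processing monotonicity and the fact that $\cS\otimes\id$ and $\cS$ are related by appending/removing an ancilla.)

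For the reverse inequality $\bD^A(\cS\otimes\cN \| \cT\otimes\cM ) \geq \bD^A(\cS\| \cT ) + \bD^A(\cN \|\cM )$, the idea is to take near-optimal test states for each amortized divergence and tensor them. Fix $\varepsilon>0$ and pick $\rho_{AR},\tau_{AR}$ nearly achieving $\bD^A(\cS\|\cT)$ and $\rho'_{BR'},\tau'_{BR'}$ nearly achieving $\bD^A(\cN\|\cM)$. Feeding the product state $\rho_{AR}\otimes\rho'_{BR'}$ into $(\cS\otimes\cN)\otimes\id$ and $\tau_{AR}\otimes\tau'_{BR'}$ into $(\cT\otimes\cM)\otimes\id$, the output is $(\cS\otimes\id)(\rho)\otimes(\cN\otimes\id)(\rho')$ versus $(\cT\otimes\id)(\tau)\otimes(\cM\otimes\id)(\tau')$. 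Here the key input is \emph{additivity} of $\bD$: by hypothesis $\bD$ is additive on tensor products, so both the numerator divergence $\bD\big((\cS\otimes\id)(\rho)\otimes(\cN\otimes\id)(\rho')\,\big\|\,\cdots\big)$ and the amortization penalty $\bD(\rho\otimes\rho'\|\tau\otimes\tau')$ split exactly into their two factors. Subtracting, the cross terms cancel and one is left with $\big[\bD((\cS\otimes\id)(\rho)\|(\cT\otimes\id)(\tau)) - \bD(\rho\|\tau)\big] + \big[\bD((\cN\otimes\id)(\rho')\|(\cM\otimes\id)(\tau')) - \bD(\rho'\|\tau')\big]$, which is within $2\varepsilon$ of $\bD^A(\cS\|\cT)+\bD^A(\cN\|\cM)$; since the product state is one admissible choice in the supremum defining $\bD^A(\cS\otimes\cN\|\cT\otimes\cM)$, the bound follows on letting $\varepsilon\to0$.

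The main obstacle is keeping the role of ``additive'' versus merely ``sub-additive'' straight: the upper bound only needs the subadditivity that is automatic for any generalized divergence (via Lemma~\ref{lem:circ}), whereas the lower bound genuinely requires full additivity so that the amortization penalty $\bD(\rho\otimes\rho'\|\tau\otimes\tau')$ decomposes as an equality rather than just an inequality in the unhelpful direction — if it were only $\le$, the penalty term would not split cleanly and the cross terms would not cancel. A secondary, purely bookkeeping point is handling the reference systems: one should make sure the reference $R$ in the supremum for $\bD^A(\cS\otimes\cN\|\cT\otimes\cM)$ is allowed to be large enough to contain $R$ and $R'$ simultaneously, which is fine since the reference system is unrestricted. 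With additivity of $\bD$ in hand these are all routine, and combining the two inequalities yields~\eqref{Eq:otimes1}.
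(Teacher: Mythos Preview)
Your proposal is correct and follows essentially the same approach as the paper. The paper's proof is terse: it says the $\geq$ direction ``follows by convenient choice of the state subject to optimization and additivity'' (your product-state argument) and the $\leq$ direction ``follows directly from Lemma~\ref{lem:circ}'' (your decomposition $\cS\otimes\cN=(\cS\otimes\id)\circ(\id\otimes\cN)$ together with the observation that the inert identity factor is absorbed into the reference system). You have simply spelled out the details that the paper leaves implicit, including the correct remark that only the $\geq$ direction actually uses additivity.
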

\begin{proof}
First note that the $\geq$ direction simply follows by convenient choice of the state subject to optimization and additivity. The $\leq$ direction in Equation~\ref{Eq:otimes1} follows directly from Lemma~\ref{lem:circ} and stability under additional identity channels, see Lemma~\ref{lem:AMOproperties}. 
\end{proof}

\subsection{ ... for quantum superchannels}\label{sec:divergences-s}

Let's now consider two superchannels $\Theta_1$ and $\Theta_2$ for which we would like to define similar measures. To save on writing indices, superchannels will always take a channel from $A$ to $B$ and transform it into a channel from $C$ to $D$. We begin by a generalization of the channel divergence, which is a special case of~\cite[Definition 1]{wang2019resource}.
\begin{definition}
 For two superchannels $\Theta_1$ and $\Theta_2$ and a generalized divergence $\bD$, we define the generalized superchannel divergence as
\begin{align}
\bD(\Theta_1 \| \Theta_2 ) = \sup_{\rho,\cN} \bD((\Theta_1\otimes\id_R)(\cN_{AR\rightarrow BR})(\rho_{CR}) \| (\Theta_2\otimes\id_R)(\cN_{AR\rightarrow BR})(\rho_{CR}) ). 
\end{align}
\end{definition}
With $\Theta_1\otimes\id_R$ we mean therefore that the superchannel doesn't act on the system $R$, however we optimize over a channel $\cN_{AR\rightarrow BR}$ that does. 
It was shown that the above definition obeys data processing in the sense that it is monotone under transformations that transform superchannels into general quantum networks~\cite[Theorem 4]{wang2019resource}. 

The above also gives a natural way to define a regularized superchannel divergence as
\begin{align}
\bD^\infty(\Theta_1 \| \Theta_2) = \lim_{n\rightarrow\infty} \frac{1}{n} \bD(\Theta_1^{\otimes n} \| \Theta_2^{\otimes n} ), 
\end{align}
where the optimization is effectively over $n$-party states and channels. However, one could also define intermediate version where only the states are n-partite and the channels are of product form,
\begin{align}
\bD^{s\infty}(\Theta_1 \| \Theta_2) = \lim_{n\rightarrow\infty} \frac{1}{n} \sup_{\cN} \bD(\Theta_1(\cN_{AR\rightarrow BR})^{\otimes n} \| \Theta_2(\cN_{AR\rightarrow BR})^{\otimes n} ),
\end{align}
or vice versa, where the channels are n-partite and the states are of product form,
\begin{align}
\bD^{c\infty}(\Theta_1 \| \Theta_2) = \lim_{n\rightarrow\infty} \frac{1}{n}\sup_{\rho,\cN} \bD((\Theta_1^{\otimes n}\otimes\id_R)(\cN_{A^nR\rightarrow B^nR})(\rho_{CR}^{\otimes n}) \| (\Theta_2^{\otimes n}\otimes\id_R)(\cN_{A^nR\rightarrow B^nR})(\rho_{CR}^{\otimes n}) ).
\end{align}
We will later see that these definitions are indeed of operational significance in certain superchannel discrimination scenarios. 

Similarly we can also define different extensions of the amortized channel divergences depending on whether one wants to amortize the action of the involved states, channels or both. 
\begin{definition}\label{Def:supAmoDiv}  For two superchannels $\Theta_1$ and $\Theta_2$ with decomposition $\{\cE_{\Theta_i},\cD_{\Theta_i}\}$ and a generalized divergence $\bD$, we define the state-amortized, channel-amortized, amortized and fully-amortized superchannel divergences as follows,
\begin{align}
\bD^{sA}(\Theta_1 \| \Theta_2 ) &= \sup_{\cN} \bD^A((\Theta_1\otimes\id_R)(\cN_{AR\rightarrow BR}) \| (\Theta_2\otimes\id_R)(\cN_{AR\rightarrow BR}) ), \\
\bD^{cA}(\Theta_1 \| \Theta_2 ) &= \sup_{\cN,\cM} \bD((\Theta_1\otimes\id_R)(\cN_{AR\rightarrow BR}) \| (\Theta_2\otimes\id_R)(\cM_{AR\rightarrow BR}) )  -  \bD^A(\cN \| \cM), \\
\bD^{A}(\Theta_1 \| \Theta_2 ) &= \sup_{ \cN,\cM} \bD^A((\Theta_1\otimes\id_R)(\cN_{AR\rightarrow BR}) \| (\Theta_2\otimes\id_R)(\cM_{AR\rightarrow BR}) ) \nonumber\\
&\phantom{ = \sup_{ \cN,\cM} } -  \bD^A(\cN \| \cM), \\
\bD^{A^*}(\Theta_1 \| \Theta_2 ) &= \sup_{ \substack{\cN,\cM \\ \bar\cN,\bar\cM}} \inf_{\cF} \bD^A((\Theta_1\otimes\id_R)(\cN_{AR\rightarrow BR})\circ\bar\cN \| (\Theta_2\otimes\id_R)(\cM_{AR\rightarrow BR})\circ\bar\cM ) \nonumber\\
&\phantom{ = \sup_{ \substack{\cN,\cM}} \inf_{\cF}} -  \bD^A(\cN\circ\cF\circ\bar\cN \| \cM\circ\cF\circ\bar\cM),
\end{align}
where the infimum is over arbitrary quantum channels $\cF_{C\rightarrow R'}$ with $R'$ being an additional reference system. 
\end{definition}
A rough intuition behind the different quantities is that $\bD^{sA}$ effectively handles a channel problem and amortizes its input state, $\bD^{cA}$ amortizes the distinguishability of the channel input but not the state, $\bD^{A}$ amortizes both, channel and state, and finally $\bD^{A^*}$ makes full use of the superchannel structure amortizing both inputs in terms of channel distinguishability.  Before we move on to show their operational meanings in the following sections, we will explore the relation between the different quantities. 
The next lemma states the intuitive observation that additional amortization can only increase the superchannel divergences. 
\begin{lemma}\label{lemma:ineq-DA}
For a faithful divergence $\bD$ we have, 
\begin{align}
 \bD(\Theta_1 \| \Theta_2 ) &\leq \bD^{sA}(\Theta_1 \| \Theta_2 ) \leq \bD^A(\Theta_1 \| \Theta_2 )\leq \bD^{A*}(\Theta_1 \| \Theta_2 ), \\
  \bD(\Theta_1 \| \Theta_2 ) &\leq \bD^{cA}(\Theta_1 \| \Theta_2 ) \leq \bD^A(\Theta_1 \| \Theta_2 ) \leq \bD^{A*}(\Theta_1 \| \Theta_2 ),
\end{align}
\end{lemma}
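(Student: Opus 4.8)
The plan is to establish each of the six inequalities in Lemma~\ref{lemma:ineq-DA} by exhibiting, for each step in the chain, a restriction of the optimization in the larger quantity that recovers the smaller one, together with a use of faithfulness to kill the amortization penalty term whenever a fresh amortized object is introduced with a trivial argument. The two chains are symmetric in structure, so I would prove the first one in detail and remark that the second follows by the same argument with the roles of ``state'' and ``channel'' interchanged.

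\textbf{Step 1: $\bD(\Theta_1\|\Theta_2)\leq\bD^{sA}(\Theta_1\|\Theta_2)$ and $\bD(\Theta_1\|\Theta_2)\leq\bD^{cA}(\Theta_1\|\Theta_2)$.} For the first, note that $\bD^{sA}(\Theta_1\|\Theta_2)=\sup_{\cN}\bD^A(\Theta_1(\cN)\|\Theta_2(\cN))$, and by the first bullet of Lemma~\ref{lem:AMOproperties} (faithfulness), $\bD^A(\Theta_1(\cN)\|\Theta_2(\cN))\geq\bD(\Theta_1(\cN)\|\Theta_2(\cN))=\sup_{\rho}\bD(\Theta_1(\cN)(\rho)\|\Theta_2(\cN)(\rho))$; taking $\sup_\cN$ on both sides gives the claim. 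For the second, restrict the $\cN,\cM$ optimization in $\bD^{cA}$ to $\cN=\cM$; then $\bD^A(\cN\|\cN)\leq 0$ by faithfulness of the amortized divergence (again Lemma~\ref{lem:AMOproperties}, since faithful $\Rightarrow$ amortized is at least faithful — more precisely $\bD^A(\cN\|\cN)\le \bD^A(\cN\|\cN)$ and one checks $\le 0$ directly from the definition using faithfulness of $\bD$), so the penalty term is non-positive and we are left with $\sup_\cN\bD(\Theta_1(\cN)\|\Theta_2(\cN))\geq\bD(\Theta_1\|\Theta_2)$.

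\textbf{Step 2: $\bD^{sA}\leq\bD^A$ and $\bD^{cA}\leq\bD^A$.} Write out $\bD^A(\Theta_1\|\Theta_2)=\sup_{\cN,\cM}\bD^A(\Theta_1(\cN)\|\Theta_2(\cM))-\bD^A(\cN\|\cM)$. Restricting to $\cN=\cM$ gives $\sup_\cN\bD^A(\Theta_1(\cN)\|\Theta_2(\cN))-\bD^A(\cN\|\cN)\geq\sup_\cN\bD^A(\Theta_1(\cN)\|\Theta_2(\cN))=\bD^{sA}$, using $\bD^A(\cN\|\cN)\leq0$. For $\bD^{cA}\leq\bD^A$, compare termwise for fixed $\cN,\cM$: by the first bullet of Lemma~\ref{lem:AMOproperties}, $\bD^A(\Theta_1(\cN)\|\Theta_2(\cM))\geq\bD(\Theta_1(\cN)\|\Theta_2(\cM))$, and the penalty terms $-\bD^A(\cN\|\cM)$ agree, so the $\bD^A$ objective dominates the $\bD^{cA}$ objective pointwise, hence after taking $\sup$.

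\textbf{Step 3: $\bD^A\leq\bD^{A*}$.} This is the step I expect to be the main obstacle, since it requires understanding what the extra composed channels $\bar\cN,\bar\cM$ and the specific form of the penalty term $\bD^A(\cN\circ\cE_{\Theta_1}\circ\bar\cN\|\cM\circ\cE_{\Theta_1}\circ\bar\cM)$ buy us. The natural attempt is to restrict the $\bar\cN,\bar\cM$ optimization in $\bD^{A*}$ to identity channels; then the first term becomes $\bD^A(\Theta_1(\cN)\|\Theta_2(\cM))$, matching $\bD^A$, and the penalty becomes $-\bD^A(\cN\circ\cE_{\Theta_1}\|\cM\circ\cE_{\Theta_1})$. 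To conclude $\bD^A\leq\bD^{A*}$ it then suffices to show $\bD^A(\cN\circ\cE_{\Theta_1}\|\cM\circ\cE_{\Theta_1})\leq\bD^A(\cN\|\cM)$, i.e. that post-composing a fixed channel $\cE_{\Theta_1}$ on the \emph{input} side (pre-composition from the channel-action viewpoint) does not increase the amortized divergence. Since the amortized channel divergence is monotone under superchannels (third bullet of Lemma~\ref{lem:AMOproperties}), and composing with a fixed channel on one side is a particular superchannel, this should follow; alternatively it is an immediate consequence of Lemma~\ref{lem:circ} with the trivial choice of one factor, giving $\bD^A(\cN\circ\cE_{\Theta_1}\|\cM\circ\cE_{\Theta_1})\leq\bD^A(\cN\|\cM)+\bD^A(\cE_{\Theta_1}\|\cE_{\Theta_1})\leq\bD^A(\cN\|\cM)$ by faithfulness of the amortized divergence. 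Care is needed to check that the system types line up (here $\cE_{\Theta_1}$ maps $C\to AS$, so the composition is with $\cN$ acting on the $A$-part while $S$ is carried along as a reference), but this is bookkeeping rather than a genuine difficulty. With Step 3 in hand, the full chain is assembled, and the second chain is identical with the substitution $sA\leftrightarrow cA$ in Steps 1 and 2.
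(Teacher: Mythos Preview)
Your proposal is correct and follows essentially the same approach as the paper: restrict the extra optimization variables to trivial choices ($\cN=\cM$, $\bar\cN=\bar\cM=\id$), use Lemma~\ref{lem:AMOproperties} to pass between $\bD$ and $\bD^A$ on channels, and remove the pre-composed $\cE_{\Theta_1}$ in the penalty term of $\bD^{A^*}$ via data processing (the paper phrases this last step as ``data processing to remove the channel $\cE_{\Theta_1}$'', which is exactly your superchannel-monotonicity argument; your Lemma~\ref{lem:circ} alternative is equally valid). One small remark: the inequality $\bD^A(\cN\|\cN)\leq 0$ that you use in Steps~1 and~2 follows from data processing of $\bD$ (not faithfulness per se), since $\bD(\cN(\rho)\|\cN(\tau))-\bD(\rho\|\tau)\leq 0$ for all $\rho,\tau$.
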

\begin{proof}
We go through the inequalities from right to left. Starting from $\bD^{A^*}(\Theta_1 \| \Theta_2 )$ we get to $\bD^{A}(\Theta_1 \| \Theta_2 )$ by setting $\bar\cN=\bar\cM=\id$ and using data processing to remove the channel $\cF$ in the second term. We then get to $\bD^{sA}(\Theta_1 \| \Theta_2 )$ by setting $\cN=\cM$ and then to get $\bD(\Theta_1 \| \Theta_2 )$ we use Lemma~\ref{lem:AMOproperties}. Similarly, we get from $\bD^{A}(\Theta_1 \| \Theta_2 )$ to $\bD^{cA}(\Theta_1 \| \Theta_2 )$ by using Lemma~\ref{lem:AMOproperties} and then to $\bD(\Theta_1 \| \Theta_2 )$ by setting $\cN=\cM$. 
\end{proof}

In the next section we will discuss the applications of these quantities in asymptotic quantum superchannel discrimination.

\section{Asymptotic bounds for superchannel discrimination}\label{sec:asymDisc}

In this section we will discuss the different strategies that are possible when one allows for multiple uses of the superchannel and what rates can be achieved asymptotically. 
We will for now focus on the asymptotic asymmetric discrimination setting, also known as Stein's setting, however the core results will be so called meta-converses which we prove in terms of generalized divergences. These will later, in Section~\ref{sec:BeyondStein}, allow us to apply the results also to other asymptotic discrimination settings. The essential setup is very similar to channel discrimination as described e.g. in~\cite{BHKW}, but we have to account for the additional freedom provided by the access point and in combining superchannels. 

Any strategy, independent of how we combine the superchannels, will ultimately result in either an output state $\rho_1$ if the superchannel was $\Theta_1$ or $\rho_2$ if the superchannel was $\Theta_2$. We measure that output with a binary POVM $\cQ=\{ Q_1, Q_2=\1 - Q_1 \}$, resulting in the usual type~I error $\alpha_n(\cS) = \tr(Q_2\rho_1)$ and type~II error $\beta_n(\cS)=\tr((\1-Q_2)\rho_2)$, where $\cS$ is the choice of strategy from the set of allowed strategies including the choice of measurement. 

For asymmetric hypothesis testing, we minimize the type~II error probability under the constraint that the type~I error probability does not exceed a constant $\eps\in(0,1)$. We are then interested in characterizing the non-asymptotic quantity
\begin{align}
\zeta^n(\eps,\Theta_1,\Theta_2)\coloneqq\sup_{\cS}\left\{-\frac{1}{n}\log\beta_n(\cS)\middle|\alpha_n(\cS)\leq\eps\right\},
\end{align}
where $\cS$ is the set of allowed strategies, as well as the asymptotic quantities
\begin{equation}
\underline{\zeta}(\eps,\Theta_1,\Theta_2)\coloneqq\liminf_{n \to \infty} \zeta^n(\eps,\Theta_1,\Theta_2), \qquad \overline{\zeta}(\eps,\Theta_1,\Theta_2)\coloneqq\limsup_{n \to \infty} \zeta^n(\eps,\Theta_1,\Theta_2).
\end{equation}
If the two limits coincide we call it simply $\zeta$. Often we will also consider the case where additionally $\epsilon$ goes to zero,  
\begin{equation}
\zeta(\Theta_1,\Theta_2)\coloneqq\lim_{\epsilon \to 0} \zeta(\epsilon, \Theta_1,\Theta_2).
\end{equation}

We will now start our investigation with the simplest possible strategy and then gradually move to more general strategies in the remainder of this section.

\subsection{Product strategy}\label{SSec:Prod}
This is the least powerful strategy that we will consider: One fixes $n$ copies of an input state $\rho_{CR}$ and of a channel $\cN_{AR\rightarrow BR}$ from $AR$ to $BR$, one pair for each copy of the superchannel, as depicted in Figure~\ref{Fig:parallel}a). Finally the resulting output state is measured by a binary measurement $\cQ=\{ Q_1, Q_2\}$ trying to determine which superchannel was used. It follows that the set of strategies to optimize over is given by the triple $\cS=\{ \rho, \cN, \cQ \}$. The product structure essentially reduces the problem to state discrimination by fixing $\rho$ and $\cN$ and discriminating the resulting output states which are again product states, therefore the relative entropy gives the optimal rate of discrimination by the Stein's Lemma for quantum states. Optimizing over all $\rho_{CR}$ and $\cN_{AR\rightarrow BR}$ shows that for these simple product strategies the optimal error rate is given by the superchannel divergence,
\begin{align}
\zeta_p(\epsilon,\Theta_1, \Theta_2) = D(\Theta_1 \| \Theta_2 ).
\end{align} 
Of course this strategy is extremely restricted and one can easily come up with more general discrimination strategies. It is however useful in setting a baseline, as every more general strategy naturally performs at least as good as the best product strategy. In the following, we will gradually move towards more complex strategies allowing for entangled states, many-party channels and intermediate adaptively chosen operations.


\subsection{Parallel strategy with product channels and successive adaptive strategies}

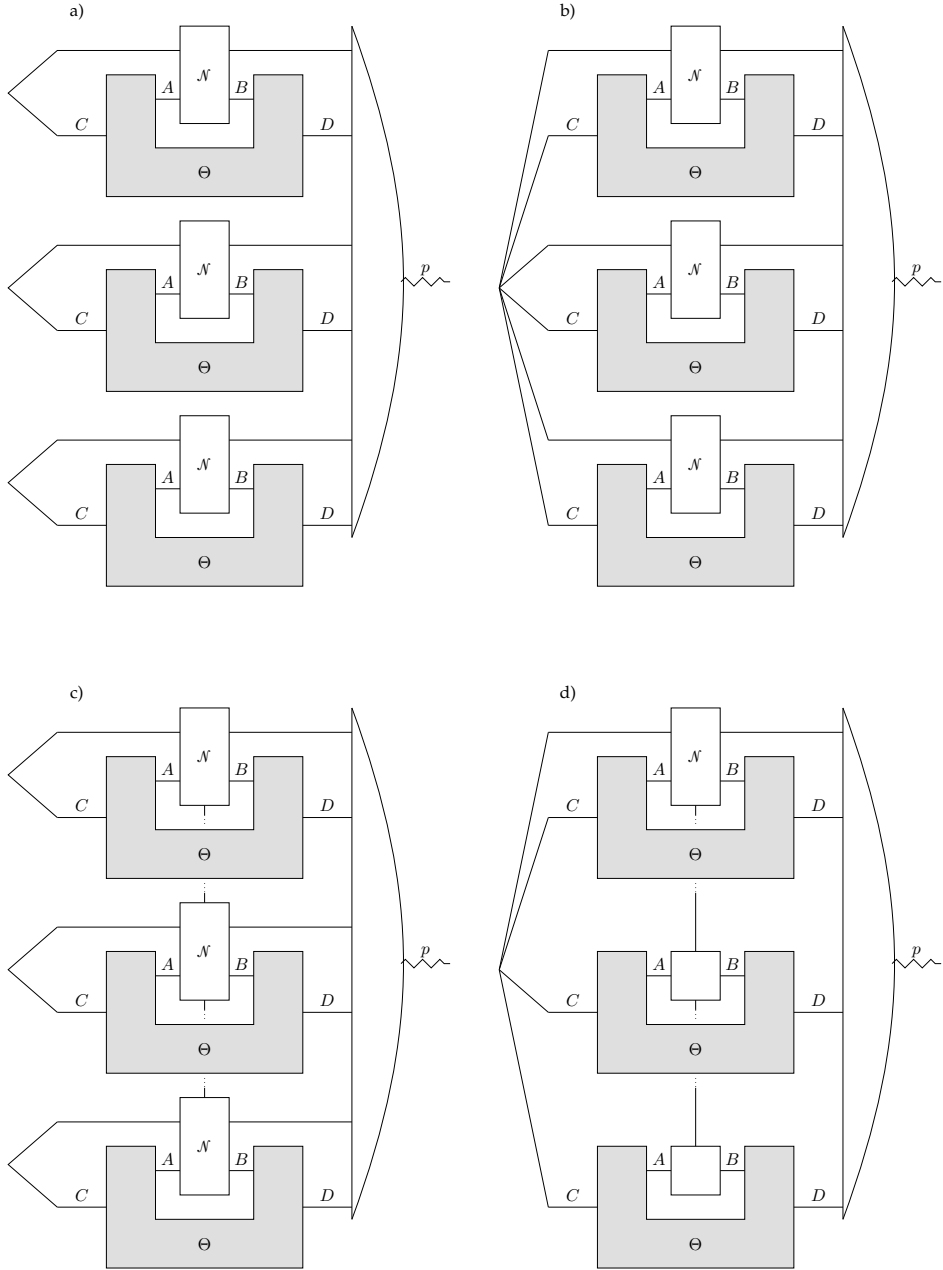
\begin{figure}[t!]
\centering
\resizebox{0.75\textwidth}{!}{
\begin{tikzpicture}
\node at (-2.1,2.3){a)};
\draw[] (0,0) rectangle (1,2) node[pos=0.5]{$\cN$};
\draw[fill=SCcolor] (-1.5,1) -- (-0.5,1) -- (-0.5,-0.5) -- (1.5,-0.5) -- (1.5,1) -- (2.5,1) -- (2.5,-1.5) -- (-1.5,-1.5) -- cycle;
\draw (-0.5,0.5) -- (0,0.5) node[pos=0.5,sloped,above]{$A$};
\draw (1,0.5) -- (1.5,0.5) node[pos=0.5,sloped,above]{$B$};
\draw (-2.5,-0.25) -- (-1.5,-0.25) node[pos=0.5,sloped,above]{$C$};
\draw (2.5,-0.25) -- (3.5,-0.25) node[pos=0.5,sloped,above]{$D$};
\node at (0.5,-1){$\Theta$};
\draw (-2.5,-0.25) -- (-3.5,0.625) -- (-2.5,1.5) -- (0,1.5);
\draw (1,1.5) -- (3.5,1.5);

\draw[] (0,0-4) rectangle (1,2-4) node[pos=0.5]{$\cN$};
\draw[fill=SCcolor] (-1.5,1-4) -- (-0.5,1-4) -- (-0.5,-0.5-4) -- (1.5,-0.5-4) -- (1.5,1-4) -- (2.5,1-4) -- (2.5,-1.5-4) -- (-1.5,-1.5-4) -- cycle;
\draw (-0.5,0.5-4) -- (0,0.5-4) node[pos=0.5,sloped,above]{$A$};
\draw (1,0.5-4) -- (1.5,0.5-4) node[pos=0.5,sloped,above]{$B$};
\draw (-2.5,-0.25-4) -- (-1.5,-0.25-4) node[pos=0.5,sloped,above]{$C$};
\draw (2.5,-0.25-4) -- (3.5,-0.25-4) node[pos=0.5,sloped,above]{$D$};
\node at (0.5,-1-4){$\Theta$};
\draw (-2.5,-0.25-4) -- (-3.5,0.625-4) -- (-2.5,1.5-4) -- (0,1.5-4);
\draw (1,1.5-4) -- (3.5,1.5-4);

\draw[] (0,0-8) rectangle (1,2-8) node[pos=0.5]{$\cN$};
\draw[fill=SCcolor] (-1.5,1-8) -- (-0.5,1-8) -- (-0.5,-0.5-8) -- (1.5,-0.5-8) -- (1.5,1-8) -- (2.5,1-8) -- (2.5,-1.5-8) -- (-1.5,-1.5-8) -- cycle;
\draw (-0.5,0.5-8) -- (0,0.5-8) node[pos=0.5,sloped,above]{$A$};
\draw (1,0.5-8) -- (1.5,0.5-8) node[pos=0.5,sloped,above]{$B$};
\draw (-2.5,-0.25-8) -- (-1.5,-0.25-8) node[pos=0.5,sloped,above]{$C$};
\draw (2.5,-0.25-8) -- (3.5,-0.25-8) node[pos=0.5,sloped,above]{$D$};
\node at (0.5,-1-8){$\Theta$};
\draw (-2.5,-0.25-8) -- (-3.5,0.625-8) -- (-2.5,1.5-8) -- (0,1.5-8);
\draw (1,1.5-8) -- (3.5,1.5-8);

\draw (3.5,2) -- (3.5,-8.5);
\draw[] (3.5,2) to [out=290,in=70] (3.5,-8.5);
\draw[snake] (2.5+2,-3.25) -- (3.5+2,-3.25) node[pos=0.5,sloped,above]{$p$};


\node at (-2.1+10,2.3){b)};
\draw[] (0+10,0) rectangle (1+10,2) node[pos=0.5]{$\cN$};
\draw[fill=SCcolor] (-1.5+10,1) -- (-0.5+10,1) -- (-0.5+10,-0.5) -- (1.5+10,-0.5) -- (1.5+10,1) -- (2.5+10,1) -- (2.5+10,-1.5) -- (-1.5+10,-1.5) -- cycle;
\draw (-0.5+10,0.5) -- (0+10,0.5) node[pos=0.5,sloped,above]{$A$};
\draw (1+10,0.5) -- (1.5+10,0.5) node[pos=0.5,sloped,above]{$B$};
\draw (-2.5+10,-0.25) -- (-1.5+10,-0.25) node[pos=0.5,sloped,above]{$C$};
\draw (2.5+10,-0.25) -- (3.5+10,-0.25) node[pos=0.5,sloped,above]{$D$};
\node at (0.5+10,-1){$\Theta$};
\draw (-2.5+10,1.5) -- (0+10,1.5);
\draw (1+10,1.5) -- (3.5+10,1.5);

\draw[] (0+10,0-4) rectangle (1+10,2-4) node[pos=0.5]{$\cN$};
\draw[fill=SCcolor] (-1.5+10,1-4) -- (-0.5+10,1-4) -- (-0.5+10,-0.5-4) -- (1.5+10,-0.5-4) -- (1.5+10,1-4) -- (2.5+10,1-4) -- (2.5+10,-1.5-4) -- (-1.5+10,-1.5-4) -- cycle;
\draw (-0.5+10,0.5-4) -- (0+10,0.5-4) node[pos=0.5,sloped,above]{$A$};
\draw (1+10,0.5-4) -- (1.5+10,0.5-4) node[pos=0.5,sloped,above]{$B$};
\draw (-2.5+10,-0.25-4) -- (-1.5+10,-0.25-4) node[pos=0.5,sloped,above]{$C$};
\draw (2.5+10,-0.25-4) -- (3.5+10,-0.25-4) node[pos=0.5,sloped,above]{$D$};
\node at (0.5+10,-1-4){$\Theta$};
\draw (-2.5+10,1.5-4) -- (0+10,1.5-4);
\draw (1+10,1.5-4) -- (3.5+10,1.5-4);

\draw[] (0+10,0-8) rectangle (1+10,2-8) node[pos=0.5]{$\cN$};
\draw[fill=SCcolor] (-1.5+10,1-8) -- (-0.5+10,1-8) -- (-0.5+10,-0.5-8) -- (1.5+10,-0.5-8) -- (1.5+10,1-8) -- (2.5+10,1-8) -- (2.5+10,-1.5-8) -- (-1.5+10,-1.5-8) -- cycle;
\draw (-0.5+10,0.5-8) -- (0+10,0.5-8) node[pos=0.5,sloped,above]{$A$};
\draw (1+10,0.5-8) -- (1.5+10,0.5-8) node[pos=0.5,sloped,above]{$B$};
\draw (-2.5+10,-0.25-8) -- (-1.5+10,-0.25-8) node[pos=0.5,sloped,above]{$C$};
\draw (2.5+10,-0.25-8) -- (3.5+10,-0.25-8) node[pos=0.5,sloped,above]{$D$};
\node at (0.5+10,-1-8){$\Theta$};
\draw (-2.5+10,1.5-8) -- (0+10,1.5-8);
\draw (1+10,1.5-8) -- (3.5+10,1.5-8);

\draw (-2.5+10,1.5) -- (-3.5+10,-3.375);
\draw (-2.5+10,-0.25) -- (-3.5+10,-3.375);
\draw (-2.5+10,1.5-4) -- (-3.5+10,-3.375);
\draw (-2.5+10,-0.25-4) -- (-3.5+10,-3.375);
\draw (-2.5+10,1.5-8) -- (-3.5+10,-3.375);
\draw (-2.5+10,-0.25-8) -- (-3.5+10,-3.375);

\draw (3.5+10,2) -- (3.5+10,-8.5);
\draw[] (3.5+10,2) to [out=290,in=70] (3.5+10,-8.5);
\draw[snake] (2.5+2+10,-3.25) -- (3.5+2+10,-3.25) node[pos=0.5,sloped,above]{$p$};


\node at (-2.1,2.3-14){c)};
\draw[] (0,0-14) rectangle (1,2-14) node[pos=0.5]{$\cN$};
\draw[fill=SCcolor] (-1.5,1-14) -- (-0.5,1-14) -- (-0.5,-0.5-14) -- (1.5,-0.5-14) -- (1.5,1-14) -- (2.5,1-14) -- (2.5,-1.5-14) -- (-1.5,-1.5-14) -- cycle;
\draw (-0.5,0.5-14) -- (0,0.5-14) node[pos=0.5,sloped,above]{$A$};
\draw (1,0.5-14) -- (1.5,0.5-14) node[pos=0.5,sloped,above]{$B$};
\draw (-2.5,-0.25-14) -- (-1.5,-0.25-14) node[pos=0.5,sloped,above]{$C$};
\draw (2.5,-0.25-14) -- (3.5,-0.25-14) node[pos=0.5,sloped,above]{$D$};
\node at (0.5,-1-14){$\Theta$};
\draw (-2.5,1.5-14) -- (0,1.5-14);
\draw (1,1.5-14) -- (3.5,1.5-14);
\draw (-2.5,-0.25-14) -- (-3.5,0.625-14) -- (-2.5,1.5-14);

\draw[] (0,0-4-14) rectangle (1,2-4-14) node[pos=0.5]{$\cN$};
\draw[fill=SCcolor] (-1.5,1-4-14) -- (-0.5,1-4-14) -- (-0.5,-0.5-4-14) -- (1.5,-0.5-4-14) -- (1.5,1-4-14) -- (2.5,1-4-14) -- (2.5,-1.5-4-14) -- (-1.5,-1.5-4-14) -- cycle;
\draw (-0.5,0.5-4-14) -- (0,0.5-4-14) node[pos=0.5,sloped,above]{$A$};
\draw (1,0.5-4-14) -- (1.5,0.5-4-14) node[pos=0.5,sloped,above]{$B$};
\draw (-2.5,-0.25-4-14) -- (-1.5,-0.25-4-14) node[pos=0.5,sloped,above]{$C$};
\draw (2.5,-0.25-4-14) -- (3.5,-0.25-4-14) node[pos=0.5,sloped,above]{$D$};
\node at (0.5,-1-4-14){$\Theta$};
\draw (-2.5,1.5-4-14) -- (0,1.5-4-14);
\draw (1,1.5-4-14) -- (3.5,1.5-4-14);
\draw (-2.5,-0.25-14-4) -- (-3.5,0.625-14-4) -- (-2.5,1.5-14-4);

\draw[] (0,0-8-14) rectangle (1,2-8-14) node[pos=0.5]{$\cN$};
\draw[fill=SCcolor] (-1.5,1-8-14) -- (-0.5,1-8-14) -- (-0.5,-0.5-8-14) -- (1.5,-0.5-8-14) -- (1.5,1-8-14) -- (2.5,1-8-14) -- (2.5,-1.5-8-14) -- (-1.5,-1.5-8-14) -- cycle;
\draw (-0.5,0.5-8-14) -- (0,0.5-8-14) node[pos=0.5,sloped,above]{$A$};
\draw (1,0.5-8-14) -- (1.5,0.5-8-14) node[pos=0.5,sloped,above]{$B$};
\draw (-2.5,-0.25-8-14) -- (-1.5,-0.25-8-14) node[pos=0.5,sloped,above]{$C$};
\draw (2.5,-0.25-8-14) -- (3.5,-0.25-8-14) node[pos=0.5,sloped,above]{$D$};
\node at (0.5,-1-8-14){$\Theta$};
\draw (-2.5,1.5-8-14) -- (0,1.5-8-14);
\draw (1,1.5-8-14) -- (3.5,1.5-8-14);
\draw (-2.5,-0.25-14-8) -- (-3.5,0.625-14-8) -- (-2.5,1.5-14-8);

\draw[] (0.5,-0-14) -- (0.5,-0.2-14);
\draw[dotted] (0.5,-0.2-14) -- (0.5,-0.4-14);
\draw[dotted] (0.5,-1.6-14) -- (0.5,-1.8-14);
\draw[] (0.5,-1.8-14) -- (0.5,-2-14);
\draw[] (0.5,-0-4-14) -- (0.5,-0.2-4-14);
\draw[dotted] (0.5,-0.2-4-14) -- (0.5,-0.4-4-14);
\draw[dotted] (0.5,-1.6-4-14) -- (0.5,-1.8-4-14);
\draw[] (0.5,-1.8-4-14) -- (0.5,-2-4-14);


\draw (3.5,2-14) -- (3.5,-8.5-14);
\draw[] (3.5,2-14) to [out=290,in=70] (3.5,-8.5-14);
\draw[snake] (2.5+2,-3.25-14) -- (3.5+2,-3.25-14) node[pos=0.5,sloped,above]{$p$};


\node at (-2.1+10,2.3-14){d)};
\draw[] (0+10,0-14) rectangle (1+10,2-14) node[pos=0.5]{$\cN$};
\draw[fill=SCcolor] (-1.5+10,1-14) -- (-0.5+10,1-14) -- (-0.5+10,-0.5-14) -- (1.5+10,-0.5-14) -- (1.5+10,1-14) -- (2.5+10,1-14) -- (2.5+10,-1.5-14) -- (-1.5+10,-1.5-14) -- cycle;
\draw (-0.5+10,0.5-14) -- (0+10,0.5-14) node[pos=0.5,sloped,above]{$A$};
\draw (1+10,0.5-14) -- (1.5+10,0.5-14) node[pos=0.5,sloped,above]{$B$};
\draw (-2.5+10,-0.25-14) -- (-1.5+10,-0.25-14) node[pos=0.5,sloped,above]{$C$};
\draw (2.5+10,-0.25-14) -- (3.5+10,-0.25-14) node[pos=0.5,sloped,above]{$D$};
\node at (0.5+10,-1-14){$\Theta$};
\draw (-2.5+10,1.5-14) -- (0+10,1.5-14);
\draw (1+10,1.5-14) -- (3.5+10,1.5-14);

\draw[] (0+10,0-4-14) rectangle (1+10,1-4-14);
\draw[fill=SCcolor] (-1.5+10,1-4-14) -- (-0.5+10,1-4-14) -- (-0.5+10,-0.5-4-14) -- (1.5+10,-0.5-4-14) -- (1.5+10,1-4-14) -- (2.5+10,1-4-14) -- (2.5+10,-1.5-4-14) -- (-1.5+10,-1.5-4-14) -- cycle;
\draw (-0.5+10,0.5-4-14) -- (0+10,0.5-4-14) node[pos=0.5,sloped,above]{$A$};
\draw (1+10,0.5-4-14) -- (1.5+10,0.5-4-14) node[pos=0.5,sloped,above]{$B$};
\draw (-2.5+10,-0.25-4-14) -- (-1.5+10,-0.25-4-14) node[pos=0.5,sloped,above]{$C$};
\draw (2.5+10,-0.25-4-14) -- (3.5+10,-0.25-4-14) node[pos=0.5,sloped,above]{$D$};
\node at (0.5+10,-1-4-14){$\Theta$};

\draw[] (0+10,0-8-14) rectangle (1+10,1-8-14);
\draw[fill=SCcolor] (-1.5+10,1-8-14) -- (-0.5+10,1-8-14) -- (-0.5+10,-0.5-8-14) -- (1.5+10,-0.5-8-14) -- (1.5+10,1-8-14) -- (2.5+10,1-8-14) -- (2.5+10,-1.5-8-14) -- (-1.5+10,-1.5-8-14) -- cycle;
\draw (-0.5+10,0.5-8-14) -- (0+10,0.5-8-14) node[pos=0.5,sloped,above]{$A$};
\draw (1+10,0.5-8-14) -- (1.5+10,0.5-8-14) node[pos=0.5,sloped,above]{$B$};
\draw (-2.5+10,-0.25-8-14) -- (-1.5+10,-0.25-8-14) node[pos=0.5,sloped,above]{$C$};
\draw (2.5+10,-0.25-8-14) -- (3.5+10,-0.25-8-14) node[pos=0.5,sloped,above]{$D$};
\node at (0.5+10,-1-8-14){$\Theta$};

\draw (-2.5+10,1.5-14) -- (-3.5+10,-3.375-14);
\draw (-2.5+10,-0.25-14) -- (-3.5+10,-3.375-14);
\draw (-2.5+10,-0.25-4-14) -- (-3.5+10,-3.375-14);
\draw (-2.5+10,-0.25-8-14) -- (-3.5+10,-3.375-14);

\draw[] (0.5+10,-0-14) -- (0.5+10,-0.2-14);
\draw[dotted] (0.5+10,-0.2-14) -- (0.5+10,-0.4-14);
\draw[dotted] (0.5+10,-1.6-14) -- (0.5+10,-1.8-14);
\draw[] (0.5+10,-1.8-14) -- (0.5+10,-3-14);
\draw[] (0.5+10,-0-4-14) -- (0.5+10,-0.2-4-14);
\draw[dotted] (0.5+10,-0.2-4-14) -- (0.5+10,-0.4-4-14);
\draw[dotted] (0.5+10,-1.6-4-14) -- (0.5+10,-1.8-4-14);
\draw[] (0.5+10,-1.8-4-14) -- (0.5+10,-3-4-14);

\draw (3.5+10,2-14) -- (3.5+10,-8.5-14);
\draw[] (3.5+10,2-14) to [out=290,in=70] (3.5+10,-8.5-14);
\draw[snake] (2.5+2+10,-3.25-14) -- (3.5+2+10,-3.25-14) node[pos=0.5,sloped,above]{$p$};

\end{tikzpicture}}
\caption{\label{Fig:parallel} Different parallel strategies for $n=3$: a) Product strategy. Here both the input state and the intermediate channel are of product form. b) Parallel strategy with product channels. This strategy allows for an entangled input state. c) Parallel strategy with product states. This strategy uses only product states but allows for joint quantum channel $\cN$ (as indicated by the connecting line). d) Parallel strategy: This is the most general parallel strategy allowing for both, an entangled input state and a joint quantum channel $\cN$.}
\end{figure}

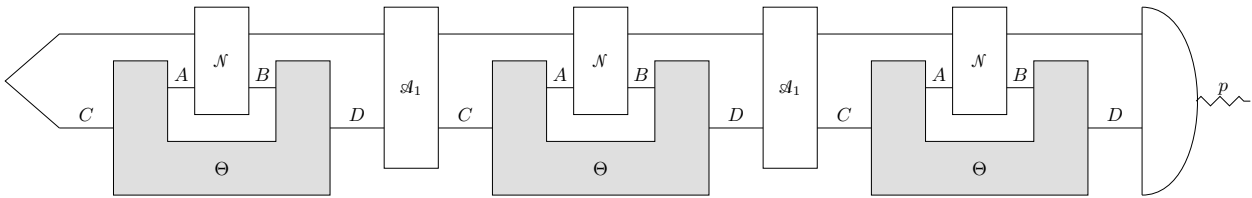
\begin{figure}[t]
\centering
\resizebox{\textwidth}{!}{
\begin{tikzpicture}
\draw[] (0,0) rectangle (1,2) node[pos=0.5]{$\cN$};
\draw[fill=SCcolor] (-1.5,1) -- (-0.5,1) -- (-0.5,-0.5) -- (1.5,-0.5) -- (1.5,1) -- (2.5,1) -- (2.5,-1.5) -- (-1.5,-1.5) -- cycle;
\draw (-0.5,0.5) -- (0,0.5) node[pos=0.5,sloped,above]{$A$};
\draw (1,0.5) -- (1.5,0.5) node[pos=0.5,sloped,above]{$B$};
\draw (-2.5,-0.25) -- (-1.5,-0.25) node[pos=0.5,sloped,above]{$C$};
\draw (2.5,-0.25) -- (3.5,-0.25) node[pos=0.5,sloped,above]{$D$};
\node at (0.5,-1){$\Theta$};

\draw (-2.5,-0.25) -- (-3.5,0.625) -- (-2.5,1.5) -- (0,1.5);
\draw[] (3.5,-1) rectangle (4.5,2) node[pos=0.5]{$\cA_1$};
\draw (1,1.5) -- (3.5,1.5);
\draw (4.5,1.5) -- (7,1.5);
\draw (1+7,1.5) -- (3.5+7,1.5);
\draw (4.5+7,1.5) -- (7+7,1.5);
\draw (8+7,1.5) -- (10.5+7,1.5);

\draw[] (0+7,0) rectangle (1+7,2) node[pos=0.5]{$\cN$};
\draw[fill=SCcolor] (-1.5+7,1) -- (-0.5+7,1) -- (-0.5+7,-0.5) -- (1.5+7,-0.5) -- (1.5+7,1) -- (2.5+7,1) -- (2.5+7,-1.5) -- (-1.5+7,-1.5) -- cycle;
\draw (-0.5+7,0.5) -- (0+7,0.5) node[pos=0.5,sloped,above]{$A$};
\draw (1+7,0.5) -- (1.5+7,0.5) node[pos=0.5,sloped,above]{$B$};
\draw (-2.5+7,-0.25) -- (-1.5+7,-0.25) node[pos=0.5,sloped,above]{$C$};
\draw (2.5+7,-0.25) -- (3.5+7,-0.25) node[pos=0.5,sloped,above]{$D$};
\node at (0.5+7,-1){$\Theta$};

\draw[] (3.5+7,-1) rectangle (4.5+7,2) node[pos=0.5]{$\cA_1$};

\draw[] (0+14,0) rectangle (1+14,2) node[pos=0.5]{$\cN$};
\draw[fill=SCcolor] (-1.5+14,1) -- (-0.5+14,1) -- (-0.5+14,-0.5) -- (1.5+14,-0.5) -- (1.5+14,1) -- (2.5+14,1) -- (2.5+14,-1.5) -- (-1.5+14,-1.5) -- cycle;
\draw (-0.5+14,0.5) -- (0+14,0.5) node[pos=0.5,sloped,above]{$A$};
\draw (1+14,0.5) -- (1.5+14,0.5) node[pos=0.5,sloped,above]{$B$};
\draw (-2.5+14,-0.25) -- (-1.5+14,-0.25) node[pos=0.5,sloped,above]{$C$};
\draw (2.5+14,-0.25) -- (3.5+14,-0.25) node[pos=0.5,sloped,above]{$D$};
\node at (0.5+14,-1){$\Theta$};

\draw (17.5,2) -- (17.5,-1.5);
\draw[] (17.5,2) to [out=0,in=0] (17.5,-1.5);
\draw[snake] (2.5+16,0.25) -- (3.5+16,0.25) node[pos=0.5,sloped,above]{$p$};

\end{tikzpicture}}
\caption{\label{Fig:suc-adap} The successive adaptive strategy for $n=3$. }
\end{figure}

Parallel strategies are more general than the previous product strategy as we allow for joint inputs. Here we start by considering a joint entangled input state to all superchannels but for now keep the limitation to product channels. This approach is already more complicated to analyze, nevertheless it is similar in spirit. One fixes a channel $\cN_{AR\rightarrow BR}$ for the superchannel to act on. This results in a new channel $\Theta_i\otimes\id_R(\cN)$ and the problem is now equivalent to attempting channel discrimination with a parallel strategy. In this case we know that the optimal rate is the regularized channel relative entropy and by optimizing over all channels $\cN_{AR\rightarrow BR}$ we get the optimal rate for superchannel discrimination in the form of the state-regularized channel relative entropy, 
\begin{align}
\zeta_{sp}(\Theta_1,\Theta_2) = D^{s\infty} (\Theta_1\|\Theta_2). 
\end{align}
Parallel strategies with product channels are, as in the channel case, a special case of certain adaptive strategies. We will later see that while these particular adaptive strategies, as depicted in Figure~\ref{Fig:suc-adap}, are the most general in the channel case, they are not for superchannels. One can construct different adaptive strategies than the ones discussed in in this section and we therefore name this particular class \textit{successive adaptive strategies}. As before one can simply fix a channel $\cN_{AR\rightarrow BR}$ and consider the task at hand a problem of channel discrimination which gives the amortized relative entropy as optimal error rate. Optimizing over the channels $\cN_{AR\rightarrow BR}$ gives that the state-amortized relative entropy is the optimal rate achievable with successive adaptive strategies when discrimination superchannels, 
\begin{align}
\zeta_{sa}(\Theta_1,\Theta_2) = D^{sA} (\Theta_1\|\Theta_2). 
\end{align}

It follows from the non-additivity of the channel relative entropy that parallel strategies with product channels can be strictly more powerful than the product strategies in the previous section. 
Another observation that can be made regarding the amortized quantity is that, similar to the channel version, one can apply the chain rule from~\cite{fang2019chain} to get an upper bound. In our case we have 
\begin{align}
D((\Theta_1\otimes\id_R)(\cN_{AR\rightarrow BR})(\rho_{CR}) &\| (\Theta_2\otimes\id_R)(\cN_{AR\rightarrow BR})(\tau_{CR}) )  -  D(\rho \| \tau ) \nonumber\\ 
&\leq \bar D^\infty((\Theta_1\otimes\id_R)(\cN_{AR\rightarrow BR}) \| (\Theta_2\otimes\id_R)(\cN_{AR\rightarrow BR}) ), 
\end{align}
where $\bar D^\infty$ denotes $ D^\infty$ without an additional reference following the notation in~\cite{fang2019chain}. 
Optimizing gives
\begin{align}
D^{sA}(\Theta_1 \| \Theta_2 ) \leq D^{s\infty}(\Theta_1 \| \Theta_2).  \label{Eq:inAmReg}
\end{align}
Since we know that parallel strategies with product channels are a special case of sequential adaptive strategies we have immediately that 
\begin{align}
D^{sA}(\Theta_1 \| \Theta_2 ) = D^{s\infty}(\Theta_1 \| \Theta_2),  \label{Eq:eqAmReg}
\end{align}
implying that successive adaptive strategies do not provide an advantage over parallel strategies with product channels. 

Note that for all results in this section we had to take the limit of $\epsilon\rightarrow 0$ as we do not know whether the given rates are strong converse rates. We remark that via the reduction to a channel problem one can easily get strong converse bounds in terms of the max-relative entropy~\cite{BHKW} or the amortized geometric Renyi divergence~\cite{fang2019geometric}, see also Remark~\ref{Rmk:strongConverse} for more details.

The results in this section are all based on optimizing over product channels. Clearly, we are not making full use of the additional possibilities provided by a superchannel and in the next section we will discuss what happens when we lift this restriction.

\subsection{Parallel strategy with multi-party channels}\label{Sec:ParMPC}

One gets a first hint that the structure of superchannel discrimination is much richer than that of channel discrimination by considering a parallel strategy for which one allows the use of an $n$-party channel instead of limiting ourselves to $n$ separate channels, see Figure~\ref{Fig:parallel}d). It is not clear whether these strategies can be cast as a successive adaptive strategy and therefore one could expect that they provide a strictly more powerful class of strategies for superchannel discrimination. 

One of the main ingredient for most of our converse proofs will be the following observation from the original proof of Stein's lemma for quantum states~\cite{HP91}, see also~\cite[Proposition 16]{BHKW}. Let $p$ and $q$ be the binary probability distributions  resulting from measuring the final output state of a given discrimination strategy, then
\begin{align}
D(p\| q)  &= (1-p) \log\frac{1-p}{1-q} + p \log(p/q) \nonumber\\ \label{Eq:relEntLB}
&\geq - h_2(\varepsilon) - (1-\varepsilon)\log\beta_n(\cS),
\end{align}
where $D(p\| q)$ is the classical relative entropy for two binary probability distributions.
By rearranging the above equation, it follows that
\begin{align}
-\log\beta_n(\cS) \leq \frac{1}{1-\varepsilon}\Big(D(p\| q) +h_2(\varepsilon)\Big).
\end{align}

From this and data processing of the relative entropy we get by standard arguments that a weak converse in the full parallel setting is given by
\begin{align}
\zeta^n_{fp}(\epsilon,\Theta_1,\Theta_2) \leq \frac{1}{1-\varepsilon}\left(D^{\infty} (\Theta_1\|\Theta_2) +\frac{h_2(\varepsilon)}{n}\right).
\end{align}
One can further envision an intermediate strategy with interchanged focus on the optimization, namely one allows for arbitrary channels but restricts the input states to product states, see Figure~\ref{Fig:parallel}c). It should be noted that also for this simpler strategy it is not clear how to cast it as a successive adaptive strategy. In this case we can use the same approach and get the channel-regularized relative entropy as weak converse rate
\begin{align}
\zeta^n_{cp}(\epsilon,\Theta_1,\Theta_2) \leq \frac{1}{1-\varepsilon}\left(D^{c\infty} (\Theta_1\|\Theta_2)+\frac{h_2(\varepsilon)}{n}\right).
\end{align}

We are left with showing that these rates are also achievable. The main ingredient in this proof will be the following results from~\cite{tomamichel2013hierarchy, li2014second}:
\begin{align}
D_h^\epsilon(\rho^{\otimes n}\| \sigma^{\otimes n}) = n D(\rho\|\sigma) + \sqrt{n V(\rho\|\sigma)} \Phi^{-1}(\epsilon) + O(\log n), \label{Eq:expansion}
\end{align}
where $V(\rho\|\sigma) := \tr(\rho(\log\rho - \log\sigma)^2)$ is the quantum information variance and $\Phi$ is the cumulative normal distribution. 

\begin{lemma}
For two superchannels $\Theta_1$ and $\Theta_2$, considering fully parallel strategies, we have
\begin{align}
\zeta_{fp}(\epsilon,\Theta_1,\Theta_2) \geq D^{\infty} (\Theta_1\|\Theta_2). 
\end{align}
\end{lemma}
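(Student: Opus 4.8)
The plan is to establish achievability by a \emph{blocking} argument that reduces the task to quantum Stein's lemma for states, i.e.\ to the first-order term of the expansion in Equation~\ref{Eq:expansion}. Fix $\delta>0$. By definition of the regularized superchannel divergence there is a block length $m$ with $\tfrac1m D(\Theta_1^{\otimes m}\|\Theta_2^{\otimes m})\geq D^\infty(\Theta_1\|\Theta_2)-\delta$; and by definition of the generalized superchannel divergence there exist a state $\rho$ on $C^mR$ and a channel $\cN$ from $A^mR$ to $B^mR$ such that the output states $\omega_i\coloneqq(\Theta_i^{\otimes m}\otimes\id_R)(\cN)(\rho)$ obey $D(\omega_1\|\omega_2)\geq D(\Theta_1^{\otimes m}\|\Theta_2^{\otimes m})-\delta$.

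Next I would use this block as the building block of a fully parallel strategy on $n$ copies of the superchannel. Write $n=km+r$ with $0\le r<m$, apply $k$ independent copies of the pair $(\rho,\cN)$ to the first $km$ uses and an arbitrary fixed input to the remaining $r$ uses, discarding their outputs. The global input state $\rho^{\otimes k}$ (entangled only within blocks) and the global intermediate channel $\cN^{\otimes k}$ are fixed in advance, and the decision is made by a single joint measurement, so this is a legitimate fully parallel strategy. On the relevant registers the output is $\omega_1^{\otimes k}$ or $\omega_2^{\otimes k}$; choosing the optimal level-$\epsilon$ hypothesis test for these two states we obtain $\alpha_n(\cS)\le\epsilon$ together with
\begin{align}
-\tfrac1n\log\beta_n(\cS)\;=\;\tfrac1n D_h^\epsilon(\omega_1^{\otimes k}\|\omega_2^{\otimes k})\;=\;\tfrac{k}{n}\,D(\omega_1\|\omega_2)+O(\sqrt k/n),
\end{align}
where the second equality uses Equation~\ref{Eq:expansion} with $\rho\mapsto\omega_1$ and $\sigma\mapsto\omega_2$, which is valid because $D(\omega_1\|\omega_2)<\infty$ and hence $V(\omega_1\|\omega_2)$ is a finite constant. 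Since $k/n\to1/m$ and $r<m$ is bounded, the right-hand side tends to $\tfrac1m D(\omega_1\|\omega_2)\geq D^\infty(\Theta_1\|\Theta_2)-2\delta$ as $n\to\infty$. Therefore $\zeta_{fp}(\epsilon,\Theta_1,\Theta_2)\geq D^\infty(\Theta_1\|\Theta_2)-2\delta$, and letting $\delta\to0$ completes the proof. (If $D^\infty(\Theta_1\|\Theta_2)=+\infty$, choosing $m$ so that $\tfrac1m D(\Theta_1^{\otimes m}\|\Theta_2^{\otimes m})$ is arbitrarily large and running the same computation forces $\zeta_{fp}=+\infty$.)

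The main point to get right is bookkeeping rather than a deep obstacle: one must check that the blocked construction genuinely sits inside the fully parallel class, i.e.\ that an input state entangled only within $m$-copy blocks, a channel which is a tensor power of a fixed $m$-to-$m$-party channel, and a joint measurement across all $n$ outputs are all admissible moves\,---\,which they are, being restrictions of the general parallel strategy of Figure~\ref{Fig:parallel}. A secondary technical nuisance is the possible non-attainability of the suprema defining $D(\Theta_1^{\otimes m}\|\Theta_2^{\otimes m})$ and the a priori unbounded reference system appearing in them, which is why the argument is phrased throughout with $\delta$-approximate optimizers. Finally, combined with the weak converse $\zeta^n_{fp}(\epsilon,\Theta_1,\Theta_2)\le\frac{1}{1-\epsilon}\big(D^\infty(\Theta_1\|\Theta_2)+h_2(\epsilon)/n\big)$ stated above, this pins down $\zeta_{fp}(\Theta_1,\Theta_2)=D^\infty(\Theta_1\|\Theta_2)$ in the limit $\epsilon\to0$, but it yields no strong-converse statement (cf.\ Remark~\ref{Rmk:strongConverse}).
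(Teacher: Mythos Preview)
Your proof is correct and follows essentially the same blocking argument as the paper's own proof: group $n$ superchannel uses into blocks of size $m$, reduce to i.i.d.\ state discrimination via Equation~\eqref{Eq:expansion}, and then let $m\to\infty$. You are in fact somewhat more careful than the paper about the remainder $r$ when $m\nmid n$, the use of $\delta$-approximate optimizers in place of assuming attained suprema, and the separate handling of the $D^\infty=+\infty$ case.
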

\begin{proof}
Let's say we have $n := k m$ copies of the superchannels. We prepare $k$ copies of an arbitrary input state on $C^mR$ and a channel $\cN:A^mR \rightarrow B^mR$. We now send each state through $m$ parallel superchannels acting on $\cN$, leaving us with the output state 
\begin{align}
(\rho_i)^{\otimes k} = \left((\Theta_i^{\otimes m}\otimes\id_R)(\cN)(\rho)\right)^{\otimes k}.
\end{align}
By Equation~\eqref{Eq:expansion} we have
\begin{align}
D_h^\epsilon((\rho_1)^{\otimes k}\| (\rho_2)^{\otimes k}) = k D(\rho_1\|\rho_2) + \sqrt{k V(\rho_1\|\rho_2)} \Phi^{-1}(\epsilon) + O(\log k),
\end{align}
now dividing by $n$ and taking the limit $k\rightarrow\infty$ (and therefore $n\rightarrow\infty$) gives
\begin{align}
\lim_{k\rightarrow\infty} \frac1n D_h^\epsilon((\rho_1)^{\otimes k}\| (\rho_2)^{\otimes k}) = \frac1m D(\rho_1\|\rho_2).
\end{align}
Finally, we can choose $m$ arbitrarily large and pick the optimal $\rho$ and $\cN$. Noticing that the strategy used here is a special case of the fully general parallel strategy, the statement of the lemma follows.
\end{proof}
In summary, we have
\begin{align}
\zeta_{fp}(\Theta_1,\Theta_2) = D^{\infty} (\Theta_1\|\Theta_2), 
\end{align}
giving the superchannel Stein's lemma for fully parallel strategies. 
We can get an equivalent result for the parallel strategies with product states.
\begin{lemma}
For two superchannels $\Theta_1$ and $\Theta_2$, considering parallel strategies restricted to product states, we have
\begin{align}
\zeta_{cp}(\epsilon,\Theta_1,\Theta_2) \geq D^{c\infty} (\Theta_1\|\Theta_2). 
\end{align}
\end{lemma}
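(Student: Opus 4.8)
The plan is to imitate the proof of the preceding lemma for fully parallel strategies, the only modification being that within each ``block'' of size $m$ the input state is taken of product form while the intermediate channel is still allowed to act jointly on the block. Concretely, I would first fix $m\in\NN$ and $\delta>0$ and, by the definition of $D^{c\infty}$, choose a state $\rho_{CR}$ and a channel $\cN\colon A^mR\to B^mR$ for which, setting $\sigma_i:=(\Theta_i^{\otimes m}\otimes\id_R)(\cN)(\rho_{CR}^{\otimes m})$ for $i\in\{1,2\}$, one has $\tfrac1m D(\sigma_1\|\sigma_2)\ge D^{c\infty}(\Theta_1\|\Theta_2)-\delta$. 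The point to stress here is that $\sigma_i$ is exactly the output of $m$ parallel copies of $\Theta_i$ fed with the \emph{product} state $\rho_{CR}^{\otimes m}$ together with the (in general joint) channel $\cN$, so it arises from an operation in the ``product states, arbitrary channels'' class.

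Next I would consider $n=km$ copies of the superchannel, partitioned into $k$ identical blocks of size $m$, each block run as above. The global input state is $\rho_{CR}^{\otimes km}=\rho_{CR}^{\otimes n}$, still of product form, and the global intermediate channel $\cN^{\otimes k}$ is a bona fide joint channel on the $n$ copies; hence the overall strategy is an admissible parallel strategy with product states, and its two output states are $\sigma_1^{\otimes k}$ and $\sigma_2^{\otimes k}$. The key structural observation is that these are genuine $k$-fold tensor powers: the joint action of $\cN$ correlates only the $m$ copies inside a block, never across blocks, so the i.i.d.\ state-level tools apply unchanged.

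I would then apply the expansion \eqref{Eq:expansion} to $D_h^\epsilon(\sigma_1^{\otimes k}\|\sigma_2^{\otimes k})$, divide by $n=km$, and send $k\to\infty$, obtaining $\lim_{k\to\infty}\tfrac1n D_h^\epsilon(\sigma_1^{\otimes k}\|\sigma_2^{\otimes k})=\tfrac1m D(\sigma_1\|\sigma_2)$ since the second-order and $O(\log k)$ terms are suppressed by $1/\sqrt k$ and $1/k$. Because $\zeta^n_{cp}(\epsilon,\Theta_1,\Theta_2)\ge\tfrac1n D_h^\epsilon(\sigma_1^{\otimes k}\|\sigma_2^{\otimes k})$ whenever $m\mid n$ — and for general $n$ one discards at most $m-1$ leftover copies, costing only $O(m/n)\to0$ — taking $\liminf_{n\to\infty}$ yields $\zeta_{cp}(\epsilon,\Theta_1,\Theta_2)\ge\tfrac1m D(\sigma_1\|\sigma_2)$ for every $m$; letting $\delta\to0$ (equivalently, taking the supremum over $m$, $\rho$ and $\cN$ implicit in $D^{c\infty}$) finishes the argument. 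One could equally well replace \eqref{Eq:expansion} by the ordinary quantum Stein's lemma for states, since only the first-order term is used.

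I do not anticipate a genuine analytic obstacle. The entire content is the bookkeeping of which operations the ``$cp$'' class permits, namely that replicating a size-$m$ block $k$ times keeps the global input in product form while still permitting a block-joint channel — which is precisely the structure encoded in the regularization $D^{c\infty}$ — together with the elementary remark that the resulting output states are true i.i.d.\ products, so that the state-level achievability machinery transfers verbatim. This is the fully-parallel achievability argument with the entangled block input swapped for a product one.
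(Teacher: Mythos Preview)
Your proposal is correct and is precisely the argument the paper has in mind: the paper's own proof consists of the single sentence ``The proof for the parallel strategy with product states works exactly as that of the fully general parallel strategy,'' and you have faithfully unpacked that, replacing the block input state $\rho_{C^mR}$ by the product $\rho_{CR}^{\otimes m}$ while keeping the joint block channel $\cN$, so that the global strategy remains in the $cp$ class and the output is an i.i.d.\ $k$-fold product to which \eqref{Eq:expansion} applies.
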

\begin{proof}
The proof for the parallel strategy with product states works exactly as that of the fully general parallel strategy.
\end{proof}
This gives,
\begin{align}
\zeta_{cp}(\Theta_1,\Theta_2) = D^{c\infty} (\Theta_1\|\Theta_2). 
\end{align}
We have therefore successfully determined the optimal rates for a Stein's lemma for superchannels using different parallel strategies. 
Finally, strong converse rates in terms of different divergences follow with similar proof altercations as mentioned for product strategies and we refer to Remark~\ref{Rmk:strongConverse} for details.

\subsection{Nested adaptive strategies}\label{Sec:NestedAd}

\begin{figure}[t!]
\centering
\resizebox{\textwidth}{!}{
\begin{tikzpicture}
\draw[] (0,0) rectangle (1,2) node[pos=0.5]{$\cA_3$};
\draw[fill=SCcolor] (-1.5,1) -- (-0.5,1) -- (-0.5,-0.5) -- (1.5,-0.5) -- (1.5,1) -- (2.5,1) -- (2.5,-1.5) -- (-1.5,-1.5) -- cycle;
\draw (-0.5,0.5) -- (0,0.5) node[pos=0.5,sloped,above]{$A$};
\draw (1,0.5) -- (1.5,0.5) node[pos=0.5,sloped,above]{$B$};
\draw (-2.5-8,-0.25) -- (-1.5-8,-0.25) node[pos=0.5,sloped,above]{$C$};
\draw (-2.5-2-4,-0.25) -- (-1.5-2-4,-0.25) node[pos=0.5,sloped,above]{$A$};
\draw (-2.5-4,-0.25) -- (-1.5-4,-0.25) node[pos=0.5,sloped,above]{$C$};
\draw (-2.5-2,-0.25) -- (-1.5-2,-0.25) node[pos=0.5,sloped,above]{$A$};
\draw (-2.5,-0.25) -- (-1.5,-0.25) node[pos=0.5,sloped,above]{$C$};
\draw (2.5,-0.25) -- (3.5,-0.25) node[pos=0.5,sloped,above]{$D$};
\draw (4.5,-0.25) -- (5.5,-0.25) node[pos=0.5,sloped,above]{$B$};
\draw (6.5,-0.25) -- (7.5,-0.25) node[pos=0.5,sloped,above]{$D$};
\draw (8.5,-0.25) -- (9.5,-0.25) node[pos=0.5,sloped,above]{$B$};
\draw (6.5+4,-0.25) -- (7.5+4,-0.25) node[pos=0.5,sloped,above]{$D$};
\draw (-2.5-8,1.5) -- (-1.5-8,1.5) node[pos=0.5,sloped,above]{$R$};
\draw (10.5,-0.25) -- (11.5,-0.25);

\draw[] (-7.5,-1) rectangle (-6.5,2) node[pos=0.5]{$\cA_1$};
\draw[] (-3.5,-1) rectangle (-2.5,2) node[pos=0.5]{$\cA_2$};
\draw[] (3.5,-1) rectangle (4.5,2) node[pos=0.5]{$\cA_4$};
\draw[] (7.5,-1) rectangle (8.5,2) node[pos=0.5]{$\cA_5$};

\draw (-5.5-2,1.5) -- (-6.5-4,1.5) -- (-7.5-4,0.625) -- (-6.5-4,-0.25) -- (-5.5-4,-0.25);
\draw (-3.5,1.5) -- (-6.5,1.5) node[pos=0.5,sloped,above]{$R$};
\draw (-2.5,1.5) -- (0,1.5) node[pos=0.5,sloped,above]{$R$};
\draw (1,1.5) -- (3.5,1.5) node[pos=0.5,sloped,above]{$R$};
\draw (4.5,1.5) -- (7.5,1.5) node[pos=0.5,sloped,above]{$R$};
\draw (8.5,1.5) -- (11.5,1.5) node[pos=0.5,sloped,above]{$R$};

\draw[fill=SCcolor] (-5.5,1) -- (-4.5,1) -- (-4.5,-2) -- (5.5,-2) -- (5.5,1) -- (6.5,1) -- (6.5,-3) -- (-5.5,-3) -- cycle;

\draw[fill=SCcolor] (-9.5,1) -- (-8.5,1) -- (-8.5,-3.5) -- (9.5,-3.5) -- (9.5,1) -- (10.5,1) -- (10.5,-4.5) -- (-9.5,-4.5) -- cycle;

\node at (0.5,-1){$\Theta$};
\node at (0.5,-1-1.5){$\Theta$};
\node at (0.5,-1-3){$\Theta$};

\draw (11.5,2) -- (11.5,-1.5);
\draw[] (11.5,2) to [out=0,in=0] (11.5,-1.5);
\draw[snake] (2.5+10,0.25) -- (3.5+10,0.25) node[pos=0.5,sloped,above]{$p$};

\end{tikzpicture}}
\caption{\label{Fig:nes-adap} The nested adaptive strategy for $n=3$. }
\end{figure}
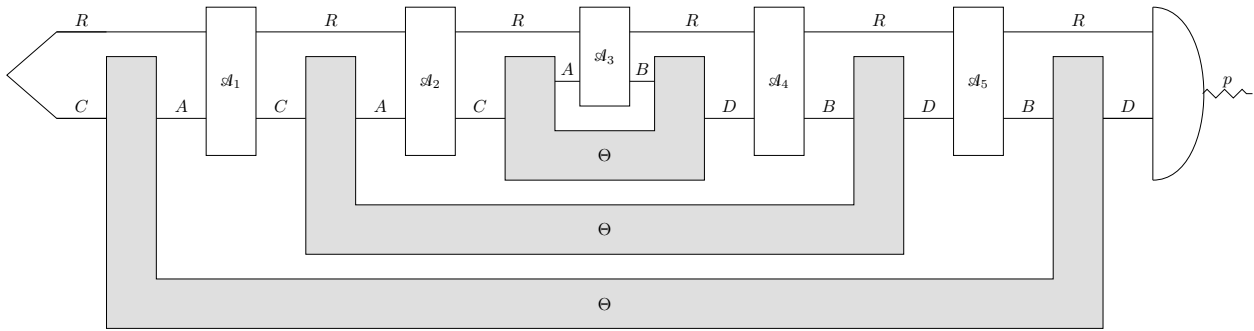

As mentioned in the previous section, parallel strategies allowing for n-party channels don't seem to be a special case of successive adaptive strategies, however, one can cast them as a different class of adaptive strategies which we call \textit{nested adaptive strategies}, see Figure~\ref{Fig:nes-adap}. For a visualization of how to embed parallel strategies into nested adaptive ones see also Appendix~\ref{App:Bonus}. In this section, we will discuss this class and give a meta-converse bound in terms of the amortized superchannel divergence. Also note that of course parallel strategies with product channels are a special case of nested adaptive strategies (because general parallel strategies are), but successive adaptive strategies do not seem to be in this class. This can be seen as furthering the intuition that superchannels should ultimately be seen as a function on channels rather than states. Additionally, since successive adaptive strategies do not outperform parallel strategies with product channels which in turn are a special case of nested adaptive strategies, the latter are always at least as good as the successive adaptive ones. 

We begin by formalizing the class of nested adaptive strategies. The basic structure can be taken from Figure~\ref{Fig:nes-adap}. The strategy now consists of an input state $\rho_{C_1R_1}$, $2n-1$ adaptively chosen channels $\cA_i$ and a measurement $Q$. Therefore, we optimize over all $\cS = \{ \rho, \{\cA_i\}_{i=1}^{2n-1}, \cQ\}$. We will describe the strategy in an iterative fashion.

To start, we choose an input state $\rho_{C_1R_1}$ and define a sequence of channels $\cN_i$ (or $\cM_i$) from $C_iR_i$ to $D_iR'_i$ in the following way:  $\cN_1 = \Theta_1(\cA_n)$ (or $\cM_1 = \Theta_2(\cA_n)$) and now every following channel in the sequence is defined via
\begin{align}
\cN_{i+1}  &=  \Theta_1\left(\cA_{n -i}\circ\cN_i\circ\cA_{n+i}\right), \\
\cM_{i+1}  &=  \Theta_2\left(\cA_{n -i}\circ\cM_i\circ\cA_{n+i}\right).
\end{align}
Now, the outcome of a $n$-step nested adaptive strategy can simply be given as $\cN_n(\rho_{C_1R_1})$ (or $\cM_n(\rho_{C_1R_1})$).   Afterwards we measure these resulting states with the measurement $\cQ$ and get a classical binary distributions $p$ or $q$. Following the idea of the meta converse in~\cite{BHKW}, we can now get the following result.
\begin{theorem}[Meta-converse for nested adaptive strategies]
For two superchannels $\Theta_1$ and $\Theta_2$, we have for any $n$-round nested adaptive strategy,
\begin{align}
\bD(p \| q) \leq \bD(\cN_n(\rho_{C_1R_1})\| \cM_n(\rho_{C_1R_1})) \leq n \bD^{A}(\Theta_1 \| \Theta_2 ). 
\end{align}
\end{theorem}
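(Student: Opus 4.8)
The plan is to reduce the statement to the amortized channel divergence and then telescope, using Lemma~\ref{lem:circ} and the definition of $\bD^{A}(\Theta_1\|\Theta_2)$ from Definition~\ref{Def:supAmoDiv}. The first inequality is immediate: $p$ and $q$ are the images of the output states $\cN_n(\rho_{C_1R_1})$ and $\cM_n(\rho_{C_1R_1})$ under the measurement channel associated with $\cQ$, so $\bD(p\|q)\le\bD(\cN_n(\rho_{C_1R_1})\|\cM_n(\rho_{C_1R_1}))$ follows from data processing~\eqref{eq:data-processing}.

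For the second inequality I would first pass from the states to the underlying channels. Since $\bD(\cN\|\cM)$ is defined as a supremum over input states and $\bD$ is faithful, Lemma~\ref{lem:AMOproperties} gives $\bD(\cN_n(\rho_{C_1R_1})\|\cM_n(\rho_{C_1R_1}))\le\bD(\cN_n\|\cM_n)\le\bD^{A}(\cN_n\|\cM_n)$, so it remains to show $\bD^{A}(\cN_i\|\cM_i)\le i\,\bD^{A}(\Theta_1\|\Theta_2)$ by induction on $i$. In the base case $\cN_1=(\Theta_1\otimes\id_R)(\cA_n)$ and $\cM_1=(\Theta_2\otimes\id_R)(\cA_n)$, so applying the definition of $\bD^{A}(\Theta_1\|\Theta_2)$ with both optimization channels set to $\cA_n$, together with $\bD^{A}(\cA_n\|\cA_n)\le0$ (which is immediate from data processing), yields $\bD^{A}(\cN_1\|\cM_1)\le\bD^{A}(\Theta_1\|\Theta_2)$. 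For the inductive step, using $\cN_{i+1}=(\Theta_1\otimes\id_R)(\cA_{n-i}\circ\cN_i\circ\cA_{n+i})$ and likewise for $\cM_{i+1}$, the definition of $\bD^{A}(\Theta_1\|\Theta_2)$ gives
\begin{align*}
\bD^{A}(\cN_{i+1}\|\cM_{i+1})\le\bD^{A}(\Theta_1\|\Theta_2)+\bD^{A}(\cA_{n-i}\circ\cN_i\circ\cA_{n+i} \,\|\, \cA_{n-i}\circ\cM_i\circ\cA_{n+i}),
\end{align*}
and two applications of Lemma~\ref{lem:circ} --- stripping $\cA_{n-i}$ from the left and $\cA_{n+i}$ from the right --- bound the last term by $\bD^{A}(\cN_i\|\cM_i)+\bD^{A}(\cA_{n-i}\|\cA_{n-i})+\bD^{A}(\cA_{n+i}\|\cA_{n+i})\le\bD^{A}(\cN_i\|\cM_i)$. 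Iterating from $i=1$ to $i=n$ and combining with the state-to-channel reduction above completes the argument.

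The argument is otherwise routine, so the point that really needs care --- the main potential obstacle --- is the bookkeeping of reference systems: one must read $\cN_i$, $\cM_i$ and the $\cA_j$ as acting on the composite systems including the passive reference $R$ threaded through the whole strategy, so that each $\Theta_1,\Theta_2$ appearing in the recursion is genuinely of the form $\Theta\otimes\id_R$ that occurs in Definition~\ref{Def:supAmoDiv}, and so that the domains and codomains match at every composition. It is also worth recording explicitly that faithfulness of $\bD$ enters (for the reduction $\bD(\cN_n\|\cM_n)\le\bD^{A}(\cN_n\|\cM_n)$), whereas only data processing is needed for $\bD^{A}(\cA\|\cA)\le0$; both properties hold for the relative entropy and for the R\'enyi and Chernoff divergences of interest here.
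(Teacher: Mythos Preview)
Your proof is correct and follows essentially the same telescoping structure as the paper's own proof: pass from the output distributions to the amortized channel divergence $\bD^A(\cN_n\|\cM_n)$, then peel off one round at a time using the definition of $\bD^A(\Theta_1\|\Theta_2)$. The only real difference is the tool used to strip the adaptive maps $\cA_{n-i}$ and $\cA_{n+i}$: the paper invokes the monotonicity of the amortized channel divergence under superchannels (third bullet of Lemma~\ref{lem:AMOproperties}, viewing $\cdot\mapsto\cA_{n-i}\circ\cdot\circ\cA_{n+i}$ as a superchannel) to obtain $\bD^A(\cA_{n-i}\circ\cN_i\circ\cA_{n+i}\|\cA_{n-i}\circ\cM_i\circ\cA_{n+i})\le\bD^A(\cN_i\|\cM_i)$ in one step, whereas you apply Lemma~\ref{lem:circ} twice together with $\bD^A(\cA\|\cA)\le0$. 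Both routes are valid and yield the same bound; your remark about carefully tracking the reference system $R$ so that the superchannels appearing in the recursion genuinely have the form $\Theta\otimes\id_R$ is well placed.
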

\begin{proof}
We begin by fixing an arbitrary strategy $\cS$ from all possible nested adaptive protocol for discrimination of the superchannels $\Theta_1$ and $\Theta_2$ and let $p$ and $q$ denote the final decision probabilities.
Now, consider the following chain of arguments, 
\begin{align}
&\mathbf{D}(p\Vert q) \nonumber\\
&\leq\mathbf{D}(\cN_n(\rho_{C_1R_1})\Vert\cM_n(\rho_{C_1R_1})) \nonumber\\
&=\mathbf{D}(\Theta_1\left(\cA_{1}\circ\cN_{n -1}\circ\cA_{2n-1}\right)(\rho_{C_1R_1})\Vert\Theta_2\left(\cA_{1}\circ\cM_{n -1}\circ\cA_{2n-1}\right)(\rho_{C_1R_1})) \nonumber\\
&\leq\mathbf{D}(\Theta_1\left(\cA_{1}\circ\cN_{n -1}\circ\cA_{2n-1}\right)\Vert\Theta_2\left(\cA_{1}\circ\cM_{n -1}\circ\cA_{2n-1}\right)) \nonumber\\
&\leq\bD^A(\Theta_1\left(\cA_{1}\circ\cN_{n -1}\circ\cA_{2n-1}\right)\Vert\Theta_2\left(\cA_{1}\circ\cM_{n -1}\circ\cA_{2n-1}\right)) \nonumber\\
&=\bD^A(\Theta_1\left(\cA_{1}\circ\cN_{n -1}\circ\cA_{2n-1}\right)\Vert\Theta_2\left(\cA_{1}\circ\cM_{n -1}\circ\cA_{2n-1}\right)) \nonumber\\
&\quad - \bD^A(\cA_{1}\circ\cN_{n -1}\circ\cA_{2n-1}\Vert\cA_{1}\circ\cM_{n -1}\circ\cA_{2n-1}) +  \bD^A(\cA_{1}\circ\cN_{n -1}\circ\cA_{2n-1}\Vert\cA_{1}\circ\cM_{n -1}\circ\cA_{2n-1}) \nonumber\\
&=\bD^A(\Theta_1(\cN)\Vert\Theta_2(\cM)) - \bD^A(\cN\Vert\cM) + \bD^A(\cA_{1}\circ\cN_{n -1}\circ\cA_{2n-1}\Vert\cA_{1}\circ\cM_{n -1}\circ\cA_{2n-1}), \label{eq:bridge-step1}
\end{align}
The first inequality follows from data processing under the final measurement $\cQ$. The second inequality follows by taking a supremum over all $\rho$, the third because amortization makes channel divergences only bigger. The first two equalities are straightforward and the third follows by making the following substitutions,
\begin{align*}
\cN &= \cA_{1}\circ\cN_{n -1}\circ\cA_{2n-1} \\
\cM &= \cA_{1}\circ\cM_{n -1}\circ\cA_{2n-1}.
\end{align*}
With this we continue as follows, 
\begin{align}
&\text{Eq.~\eqref{eq:bridge-step1}} \nonumber\\
&\leq \sup_{\cN,\cM}\left[\bD^A(\Theta_1(\cN)\Vert\Theta_2(\cM)) - \bD^A(\cN\Vert\cM)\right] + \bD^A(\cA_{1}\circ\cN_{n -1}\circ\cA_{2n-1}\Vert\cA_{1}\circ\cM_{n -1}\circ\cA_{2n-1}) \nonumber\\
&= \mathbf{D}^{A}(\Theta_1\Vert\Theta_2) + \bD^A(\cA_{1}\circ\cN_{n -1}\circ\cA_{2n-1}\Vert\cA_{1}\circ\cM_{n -1}\circ\cA_{2n-1}) \nonumber\\
&\leq \mathbf{D}^{A}(\Theta_1\Vert\Theta_2) + \bD^A(\cN_{n -1}\Vert\cM_{n -1}), 
\end{align}
where the first inequality follows by optimizing, the first equality by definition and the final inequality because the amortized channel divergence obeys data-processing  under superchannels. In summary, we showed that 
\begin{align*}
\mathbf{D}(p\Vert q)&\leq\mathbf{D}^{A}(\cN_n\Vert\cM_n) \leq \mathbf{D}^{A}(\Theta_1\Vert\Theta_2) + \bD^A(\cN_{n -1}\Vert\cM_{n -1}).
\end{align*}
Now, we can simply apply the same procedure another $n-1$ times and we get that 
\begin{align}
\mathbf{D}(p\Vert q)&\leq n \mathbf{D}^{A}(\Theta_1\Vert\Theta_2), 
\end{align}
which concludes the proof.
\end{proof}

It follows from Equation~\eqref{Eq:relEntLB} that the amortized superchannel relative entropy is a weak converse for a superchannel Stein's Lemma considering all nested adaptive strategies, 
\begin{align}\label{eq:weak-conv-stein-unconstrained}
\zeta^n_{na}(\eps,\Theta_1,\Theta_2)\leq \frac{1}{1-\eps} \left(D^{A}(\Theta_1\|\Theta_2)+\frac{h_2(\varepsilon)}{n}\right). 
\end{align}

Knowing that the amortized superchannel relative entropy is a valid converse raises the question whether nested adaptive strategies can perform better than parallel strategies. Since the latter are a subset of the former, we know already that 
\begin{align}
D^{\infty}(\Theta_1\|\Theta_2) = \zeta_{fp}(\Theta_1,\Theta_2) \leq \zeta_{na}(\Theta_1,\Theta_2) \leq D^{A}(\Theta_1\|\Theta_2).  \label{Eq:inftyAmortized}
\end{align}
Proving the inequality in the opposite direction,  
\begin{align}
D^{A}(\Theta_1\|\Theta_2) \stackrel{?}{\leq} D^{\infty}(\Theta_1\|\Theta_2),
\end{align}
would show that nested adaptive strategies do not outperform parallel strategies. In Appendix~\ref{Ap:chainRules}, we show that under a technical assumption, namely an asymptotic equipartition property for the smooth channel max-relative entropy, Equation~\eqref{Eq:inftyAmortized} indeed becomes an equality. We conjecture that the assumption is true in general and therefore that parallel strategies are as powerful as nested adaptive strategies. The argument in Appendix~\ref{Ap:chainRules} relies on a novel chain rule for relative entropy under superchannels. 

Note, that if the two quantities are indeed the same, we immediately learn that $D^{A}(\Theta_1\|\Theta_2)$ is also an achievable rate for nested adaptive strategies (and parallel ones). An alternative route to proving achievability directly, could be to generalize the achievability proof for channels in~\cite[Theorem 6]{wang2019resource}. The idea there is to use an appropriate number of channel uses within the resource theory of asymmetric distinguishability to recursively prepare the optimal input states in the amortized channel divergence. 

The careful reader might notice that earlier in this manuscript we defined the quantity $D^{cA}(\Theta_1\|\Theta_2)$ that is not currently associated to any strategy. The interest in it stems mostly from the observation that under the same technical assumption as before one can show that
\begin{align}
D^{cA}(\Theta_1\|\Theta_2) \stackrel{?}{\leq} D^{c\infty}(\Theta_1\|\Theta_2), \label{Eq:regAmoCEq}
\end{align}
as shown in Appendix~\ref{Ap:chainRules}. For a complete picture it would be desirable to find a class of adaptive strategies for which $D^{cA}(\Theta_1\|\Theta_2)$ is a converse rate and that includes parallel strategies with product states. We conjecture that the inequality in Equation~\eqref{Eq:regAmoCEq} holds and will turn out to be indeed even an equality.

\begin{remark}\label{Rmk:strongConverse}
Note that for all week converses presented in this section, one could replace Equation~\eqref{Eq:relEntLB} with
\begin{align}
D_{\max}(p\Vert q) & = \log \max\{(1-\varepsilon) / q, \varepsilon / (1-q)\}\geq \log(1-\varepsilon) - \log q.
\end{align}
and we get a strong converse bound based on the max-relative entropy instead of the relative entropy with
\begin{align}
\zeta^n_{na}(\eps,\Theta_1,\Theta_2) \leq D^A_{\max}(\Theta_1\| \Theta_2)  + \frac{1}{n}\log\!\left(\frac{1}{1-\varepsilon}\right).
\end{align}
Similarly we can get a strong converse bound in terms of the geometric Renyi divergence via~\cite{fang2019geometric}
\begin{align}
\widehat D_\alpha(p\|q) \geq \frac{\alpha}{\alpha-1} \log(1-\epsilon) - \log q, 
\end{align}
leading to 
\begin{align}
\zeta^n_{na}(\eps,\Theta_1,\Theta_2) \leq \widehat D^A_{\alpha}(\Theta_1\| \Theta_2)  + \frac{1}{n}\frac{\alpha}{\alpha-1}\log\!\left(\frac{1}{1-\varepsilon}\right).
\end{align}
In the channel case, and therefore the succesive adaptive strategies for superchannels, we have that the amortization collapses for both the amortized channel max-relative entropy~\cite{BHKW} and the amortized channel geometric Renyi divergence~\cite{fang2019geometric}. We leave it open whether this also holds in more general superchannel scenarios. 
\end{remark}

\subsection{Fully general adaptive strategies} 

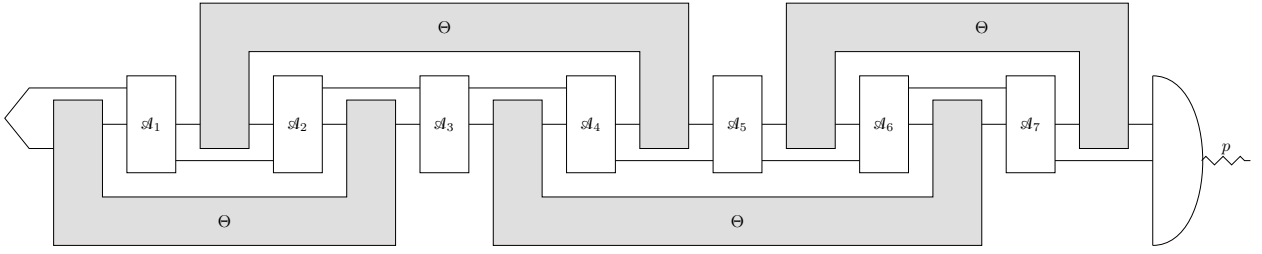
\begin{figure}[t]
\centering
\resizebox{\textwidth}{!}{
\begin{tikzpicture}
\draw (-1.5,1) -- (21,1);
\draw[fill=white] (0,0) rectangle (1,2) node[pos=0.5]{$\cA_1$};
\draw[fill=white] (0+3,0) rectangle (1+3,2) node[pos=0.5]{$\cA_2$};
\draw[fill=white] (0+6,0) rectangle (1+6,2) node[pos=0.5]{$\cA_3$};
\draw[fill=white] (0+9,0) rectangle (1+9,2) node[pos=0.5]{$\cA_4$};
\draw[fill=white] (0+12,0) rectangle (1+12,2) node[pos=0.5]{$\cA_5$};
\draw[fill=white] (0+15,0) rectangle (1+15,2) node[pos=0.5]{$\cA_6$};
\draw[fill=white] (0+18,0) rectangle (1+18,2) node[pos=0.5]{$\cA_7$};
\draw[fill=SCcolor] (-1.5,1.5) -- (-0.5,1.5) -- (-0.5,-0.5) -- (4.5,-0.5) -- (4.5,1.5) -- (5.5,1.5) -- (5.5,-1.5) -- (-1.5,-1.5) -- cycle;
\draw[fill=SCcolor] (1.5,0.5) -- (2.5,0.5) -- (2.5,2.5) -- (10.5,2.5) -- (10.5,0.5) -- (11.5,0.5) -- (11.5,3.5) -- (1.5,3.5) -- cycle;
\draw[fill=SCcolor] (-1.5+9,1.5) -- (-0.5+9,1.5) -- (-0.5+9,-0.5) -- (7.5+9,-0.5) -- (7.5+9,1.5) -- (8.5+9,1.5) -- (8.5+9,-1.5) -- (-1.5+9,-1.5) -- cycle;
\draw[fill=SCcolor] (1.5+12,0.5) -- (2.5+12,0.5) -- (2.5+12,2.5) -- (7.5+12,2.5) -- (7.5+12,0.5) -- (8.5+12,0.5) -- (8.5+12,3.5) -- (1.5+12,3.5) -- cycle;

\draw (-1.5,0.5) -- (-2,0.5) -- (-2.5,0.5+0.625) -- (-2,1.75) -- (0,1.75);

\draw (1,0.25) -- (3,0.25);
\draw (10,0.25) -- (12,0.25);
\draw (13,0.25) -- (15,0.25);
\draw (19,0.25) -- (21,0.25);
\draw (4,1.75) -- (6,1.75);
\draw (7,1.75) -- (9,1.75);
\draw (16,1.75) -- (18,1.75);

\node at (2,-1){$\Theta$};
\node at (12.5,-1){$\Theta$};
\node at (6.5,3){$\Theta$};
\node at (17.5,3){$\Theta$};

\draw (11.5+0.5+9,2) -- (11.5+0.5+9,-1.5);
\draw[] (11.5+0.5+9,2) to [out=0,in=0] (11.5+0.5+9,-1.5);
\draw[snake] (2.5+0.5+10+9,0.25) -- (3.5+0.5+10+9,0.25) node[pos=0.5,sloped,above]{$p$};

\end{tikzpicture}}
\caption{\label{Fig:braided-adap} The braided adaptive strategy for $n=4$. }
\end{figure}

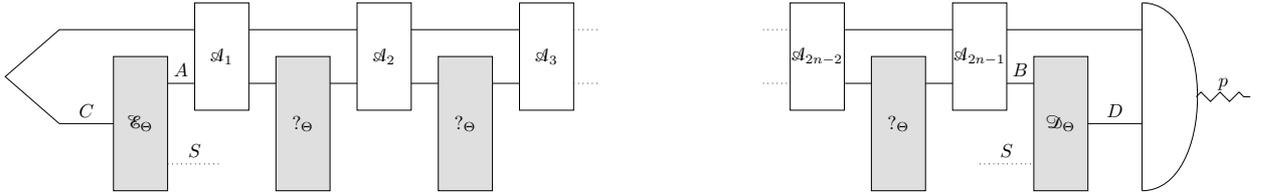
\begin{figure}[t]
\centering
\resizebox{\textwidth}{!}{
\begin{tikzpicture}
\draw[] (0,0) rectangle (1,2) node[pos=0.5]{$\cA_1$};
\draw[fill=SCcolor] (-1.5,1) rectangle (-0.5,-1.5) node[pos=0.5]{$\cE_\Theta$};
\draw (-0.5,0.5) -- (0,0.5) node[pos=0.5,sloped,above]{$A$};
\draw[dotted] (-0.5,-1) -- (0.5,-1) node[pos=0.5,sloped,above]{$S$};
\draw (1,0.5) -- (1.5,0.5);
\draw (2.5,0.5) -- (3,0.5);
\draw[dotted] (4+3,0.5) -- (4.5+3,0.5);
\draw[dotted] (4+3,1.5) -- (4.5+3,1.5);
\draw (-2.5,-0.25) -- (-1.5,-0.25) node[pos=0.5,sloped,above]{$C$};

\draw (1+3,0.5) -- (1.5+3,0.5);
\draw (2.5+3,0.5) -- (3+3,0.5);
\draw (1+3,1.5) -- (3+3,1.5);
\draw[] (0+3,0) rectangle (1+3,2) node[pos=0.5]{$\cA_2$};
\draw[fill=SCcolor] (-1.5+3,1) rectangle (-0.5+3,-1.5) node[pos=0.5]{$?_\Theta$};

\draw[] (0+6,0) rectangle (1+6,2) node[pos=0.5]{$\cA_3$};
\draw[fill=SCcolor] (-1.5+6,1) rectangle (-0.5+6,-1.5) node[pos=0.5]{$?_\Theta$};

\draw (-2.5,-0.25) -- (-3.5,0.625) -- (-2.5,1.5) -- (0,1.5);
\draw (1,1.5) -- (3,1.5);
\draw (5+7,1.5) -- (7+7,1.5);
\draw (8+7,1.5) -- (10.5+7,1.5);

\draw[] (0+11,0) rectangle (1+11,2) node[pos=0.5]{$\cA_{2n-2}$};
\draw[fill=SCcolor] (1.5+11,1) rectangle (2.5+11,-1.5) node[pos=0.5]{$?_\Theta$};
\draw[dotted] (-0.5+11,1.5) -- (0+11,1.5);
\draw[dotted] (-0.5+11,0.5) -- (0+11,0.5);
\draw (1+11,0.5) -- (1.5+11,0.5);

\draw[] (0+14,0) rectangle (1+14,2) node[pos=0.5]{$\cA_{2n-1}$};
\draw[fill=SCcolor] (1.5+14,1) rectangle (2.5+14,-1.5) node[pos=0.5]{$\cD_\Theta$};
\draw (-0.5+14,0.5) -- (0+14,0.5);
\draw (1+14,0.5) -- (1.5+14,0.5) node[pos=0.5,sloped,above]{$B$};
\draw (2.5+14,-0.25) -- (3.5+14,-0.25) node[pos=0.5,sloped,above]{$D$};
\draw[dotted] (0.5+14,-1) -- (1.5+14,-1) node[pos=0.5,sloped,above]{$S$};

\draw (17.5,2) -- (17.5,-1.5);
\draw[] (17.5,2) to [out=0,in=0] (17.5,-1.5);
\draw[snake] (2.5+16,0.25) -- (3.5+16,0.25) node[pos=0.5,sloped,above]{$p$};

\end{tikzpicture}}
\caption{\label{Fig:gen-adap} Every adaptive strategy can be depicted by deconstructing the superchannel $\Theta$ into its components $\cE_\Theta$, $\cD_\Theta$. While the first and last channel are fixed, the remaining channels can be of either type, depicted by $?_\Theta$, only restricted by 1) each channel has to appear a total of $n$ times, 2) each channel $\cE_\Theta$ has to appear before the $\cD_\Theta$ which belongs to the same $\Theta$. For simplicity we omit most of the systems $S$, they are naturally connecting the $\cE_\Theta$, $\cD_\Theta$ belonging to the same $\Theta$.}
\end{figure}

With the nested adaptive strategies in the previous section we have seen a class of strategies that is unique to the superchannel case in the sense that there is no comparable set of strategies in the case of quantum channels. We have also learned that these strategies are at least as powerful as the most general parallel strategies. One might at first be tempted to think that this exhausts the possibilities of superchannel discrimination. However, one also quickly notices that looking at successive and nested adaptive strategies, neither of the two classes seems to be included in the other one and it could in principle be useful to employ a mixture of the two strategies, e.g. use $\frac{n}{2}$ rounds of successive, followed by $\frac{n}{2}$ rounds of nested adaptive superchannel uses. 

But that is not all yet: All strategies discussed so far either place superchannels as a whole before, after or within another superchannel. However, in principle there is no a priori reason to limit the use of superchannels in this way. As an example for a strategy that does not follow this rule we will discuss what we call a \textit{braided adaptive strategy} as depicted in Figure~\ref{Fig:braided-adap}.
In this section we will finally discuss a converse bound on quantum superchannel discrimination that provably also holds for the most general class of strategies.

Now, how can we ensure a description that covers all possible strategies of a superchannel? The key to this question is to take apart the superchannel into its components, meaning instead of discussing the superchannel $\Theta$ we will state the strategy in terms of the underlying channels $\{ \cE_\Theta, \cD_\Theta \}$. We now allow for any adaptive strategy using the channels in any order that doesn't violate the superhannel structure, i.e. the channel $\cE_\Theta$ of a particular use of $\Theta$ has to be used before the corresponding channel $\cD_\Theta$. It should also be noted that the adaptively chosen intermediate channels $\cA_i$ are not allowed to act on the $S$ systems as this would give a forbidden advantage. The resulting structure is depicted in Figure~\ref{Fig:gen-adap}.

Our goal will now be to find a converse bound that includes all possible strategies. As described above we will view the strategies following Figure~\ref{Fig:gen-adap}: We start with an input state $\rho^0_{CR}$ to which we apply $\cE_{\Theta_i}$ followed by an adaptively chosen map $\cA_1$ and continuing by alternating applications of superchannel fragments and adaptive operations, finally ending in a last application of  $\cD_{\Theta_i}$. We will now show that for any such strategy we can give a meta-converse. As for the nested strategies, the set of possible strategies is given by $\cS = \{ \rho, \{\cA_i\}_{i=1}^{2n-1}, Q\}$, with the difference being that components of the superchannels can appear in any order between the adaptively chosen channels $\cA_i$. 

\begin{theorem}[Meta-converse for arbitrary strategies]\label{thm:mc-general}
For any adaptive superchannel discrimination strategy, we have
\begin{align}
\mathbf{D}(p\Vert q) \leq n \mathbf{D}^{A^*}(\Theta_1\|\Theta_2). \label{Eq:sec-ineq-gC}
\end{align}
\end{theorem}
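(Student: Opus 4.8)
The plan is to establish, by induction on the number $n$ of superchannel uses, the slightly stronger claim that for \emph{every} fully general adaptive strategy (of the form described in Figure~\ref{Fig:gen-adap}) the two output channels $\Phi^1$ and $\Phi^2$ --- each with trivial input, producing the pre-measurement states $\rho_n$, resp.\ $\sigma_n$, when the unknown superchannel is $\Theta_1$, resp.\ $\Theta_2$ --- obey $\mathbf D^{A}(\Phi^1\|\Phi^2)\le n\,\mathbf D^{A^*}(\Theta_1\|\Theta_2)$. Since $\Phi^i$ has trivial input, $\mathbf D^{A}$ of these channels reduces to the state divergence $\mathbf D(\rho_n\|\sigma_n)$, so the theorem follows from data processing under the final POVM, $\mathbf D(p\|q)\le\mathbf D(\rho_n\|\sigma_n)$. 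The base case $n=0$ is immediate: with no superchannel present both hypotheses produce the same state, whence $\mathbf D(\rho_0\|\sigma_0)\le0$ by faithfulness of $\mathbf D$.

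For the inductive step I would fix an arbitrary $n$-use strategy and single out the copy of the superchannel whose $\cD$-component is applied \emph{last} --- call it the $j$-th copy, so that $\cD^{(j)}$ is the final operation before the measurement while $\cE^{(j)}$ occurs at some earlier point. Bundle everything strictly before $\cE^{(j)}$ (the input state, the adaptive maps, and the components of the other superchannels occurring there) into a state-preparation channel $\bar\cN$ under hypothesis $\Theta_1$ (and $\bar\cM$ under $\Theta_2$), and bundle everything strictly between $\cE^{(j)}$ and $\cD^{(j)}$ (adaptive maps and the remaining components of the other superchannels) into a channel $\cN$, resp.\ $\cM$, with all reference and held-$S$ systems absorbed into $R$. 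An auxiliary $S$-system of one of the other superchannels that is opened before $\cE^{(j)}$ but only closed between $\cE^{(j)}$ and $\cD^{(j)}$ is simply carried along inside $R$ --- this is precisely where the ``braiding'' is accommodated, and the reason Definition~\ref{Def:supAmoDiv} leaves the reference unconstrained. By construction $\Phi^1=(\Theta_1\otimes\id_R)(\cN)\circ\bar\cN$ and $\Phi^2=(\Theta_2\otimes\id_R)(\cM)\circ\bar\cM$, so applying Definition~\ref{Def:supAmoDiv} to the tuple $(\cN,\cM,\bar\cN,\bar\cM)$ and rearranging yields
\begin{align}
\mathbf D^{A}(\Phi^1\|\Phi^2)\;\le\;\mathbf D^{A^*}(\Theta_1\|\Theta_2)\;+\;\mathbf D^{A}(\cN\circ\cE_{\Theta_1}\circ\bar\cN\,\|\,\cM\circ\cE_{\Theta_1}\circ\bar\cM).
\end{align}

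The key observation --- and what I believe the otherwise odd shape of $\mathbf D^{A^*}$ is engineered for --- is that the residual term $\mathbf D^{A}(\cN\circ\cE_{\Theta_1}\circ\bar\cN\|\cM\circ\cE_{\Theta_1}\circ\bar\cM)$ is itself the output divergence of an \emph{admissible} fully general adaptive strategy using only the remaining $n-1$ superchannels: one runs $\bar\cN$, then applies the fixed channel $\cE_{\Theta_1}$ (which occurs identically in both arguments --- note that Definition~\ref{Def:supAmoDiv} carries $\cE_{\Theta_1}$ in both slots of the subtracted term --- and which does not act on the $S$-systems of the discriminated superchannels), then runs $\cN$. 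Because this inserted $\cE_{\Theta_1}$ is the same under both hypotheses, the only channels that differ between the two arguments are the components of the $n-1$ copies other than the $j$-th, and each of those is ``completed'' inside the composite (a cross-$S$ line has its opening end in the $\bar\cN$-part and its closing end in the $\cN$-part), while the leftover, uncontracted $S_j$-output merely enlarges the output system, which is harmless. Hence the induction hypothesis bounds the residual term by $(n-1)\,\mathbf D^{A^*}(\Theta_1\|\Theta_2)$, and combining with the display above gives $\mathbf D^{A}(\Phi^1\|\Phi^2)\le n\,\mathbf D^{A^*}(\Theta_1\|\Theta_2)$, which closes the induction and, via data processing, proves the theorem.

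I expect the main difficulty to be combinatorial rather than analytic: spelling out, for a genuinely braided strategy, which fragments go into $\bar\cN$ versus $\cN$, and then verifying that the composite $\cN\circ\cE_{\Theta_1}\circ\bar\cN$ really is an admissible $(n-1)$-use adaptive strategy in the sense required by the induction --- in particular that no adaptive map is forced onto a discriminated $S$-system, and (should one prefer to strip the extra $\cE_{\Theta_1}$ rather than carry it) that this fixed insertion cannot raise the amortized divergence, which follows from $\mathbf D^{A}(\cE_{\Theta_1}\|\cE_{\Theta_1})=0$ together with Lemma~\ref{lem:circ}. One should also record that the identification $\Phi^1=(\Theta_1\otimes\id_R)(\cN)\circ\bar\cN$ has exactly the form controlled by Definition~\ref{Def:supAmoDiv}, and that, as elsewhere in this section, the argument uses $\mathbf D$ faithful for the data-processing reductions and $\mathbf D(\rho\|\rho)\le0$ for the base case.
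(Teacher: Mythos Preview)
Your proposal is correct and follows essentially the same approach as the paper: both arguments peel off the copy of the superchannel whose $\cD$-component is applied last, split the remaining operations into ``before $\cE$'' and ``between $\cE$ and $\cD$'' pieces, invoke the definition of $\mathbf D^{A^*}$ to extract one factor of $\mathbf D^{A^*}(\Theta_1\|\Theta_2)$, and then observe that the residual term (with the fixed $\cE_{\Theta_1}$ inserted identically on both sides) is the amortized channel divergence of an admissible $(n-1)$-use strategy. The paper unrolls this as an explicit iteration while you phrase it as an induction on $n$; your discussion of how braided $S$-lines are carried through $R$ and why the residual remains admissible is somewhat more explicit than the paper's, but the substance is the same.
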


\begin{proof}
As usual we begin with the divergence between the two possible output probability distributions,
\begin{align}
&\mathbf{D}(p\Vert q) \nonumber\\
&\leq\mathbf{D}( \cD_{\Theta_1} \circ \cA_{2n-1} \circ \dots \circ \cA_1 \circ \cE_{\Theta_1}(\rho) \| \cD_{\Theta_2} \circ \cA_{2n-1} \circ \dots \circ \cA_1 \circ \cE_{\Theta_2}(\rho)) \nonumber\\
&\leq\mathbf{D}^A( \cD_{\Theta_1} \circ \cA_{2n-1} \circ \dots \circ \cA_1 \circ \cE_{\Theta_1} \| \cD_{\Theta_2} \circ \cA_{2n-1} \circ \dots \circ \cA_1 \circ \cE_{\Theta_2})  \label{Eq:initialStrategy}\\
&=  \mathbf{D}^A( \Theta_1( \cA_{2n-1} \circ \dots \circ \cA_{j+1})\circ \cA_j \circ\dots\circ \cA_1 \circ \cE_{\Theta_1}\|  \Theta_2( \cA_{2n-1} \circ \dots \circ \cA_{j+1})\circ \cA_j \circ\dots\circ \cA_1 \circ \cE_{\Theta_2}) \label{eq:bridge-stepfg1}
\end{align}
where the first line is data processing and explicitly writing out the discrimination strategy, the second a property of amortization and the equality follows by explicitly writing the superchannel that belongs to the final $\cD_{\Theta_i}$ assuming that $\cA_j$ is the last adaptive map before the corresponding $\cE_{\Theta_i}$.

The sequence $\cA_{2n-1} \circ \dots \circ \cA_{j+1}$ and the sequence $\cA_j \circ\dots\circ \cA_1$ may both include components of other copies of the superchannel $\Theta_i$. 
For ease of notation, we define 
\begin{align}
\cN_i = \cA_{2n-1} \circ \dots \circ \cA_{j+1}, \\
\cM_i = \cA_j \circ\dots\circ \cA_1\circ \cE_{\Theta_i},
\end{align}
where $i\in\{1,2\}$ depending on whether the superchannel fragments are from $\Theta_1$ or $\Theta_2$, i.e. whether they appear in the first or second argument. Let $\cF$ be the channel that achieves the infimum in $ \mathbf{D}^{A^*}(\Theta_1\|\Theta_2)$. 
We now continue the derivation with
\begin{align}
&\text{Eq.~\eqref{eq:bridge-stepfg1}} \nonumber\\
=&  \mathbf{D}^A( \Theta_1( \cN_1)\circ \cM_1\|  \Theta_2( \cN_2)\circ \cM_2) \nonumber\\
=&  \bD^A( \Theta_1( \cN_1)\circ \cM_1\|  \Theta_2( \cN_2)\circ \cM_2) - \bD^A(  \cN_1\circ\cF\circ \cM_1\|  \cN_2\circ\cF\circ \cM_2) \nonumber\\
&+ \bD^A(  \cN_1\circ\cF\circ \cM_1\|  \cN_2\circ\cF\circ \cM_2) \nonumber\\
\leq& \sup_{\cN,\cM,\bar\cN,\bar\cM} \left[ \bD^A( \Theta_1( \cN)\circ \cM\|  \Theta_2( \bar\cN)\circ \bar\cM) - \bD^A(  \cN\circ\cF\circ \cM\|  \bar\cN\circ\cF\circ \bar\cM) \right] \nonumber\\
&+ \bD^A(  \cN_1\circ\cF\circ \cM_1\|  \cN_2\circ\cF\circ \cM_2) \nonumber\\
=& \mathbf{D}^{A^*}(\Theta_1\|\Theta_2)+  \bD^A(  \cN_1\circ\cF\circ \cM_1\|  \cN_2\circ\cF\circ \cM_2) \nonumber\\
=& \mathbf{D}^{A^*}(\Theta_1\|\Theta_2)   \nonumber\\
&+\bD^A(  \cA_{2n-1} \circ \dots \circ \cA_{j+1}\circ\cF\circ \cA_j \circ\dots\circ \cA_1\circ \cE_{\Theta_1}\|  \cA_{2n-1} \circ \dots \circ \cA_{j+1}\circ\cF\circ \cA_j \circ\dots\circ \cA_1\circ \cE_{\Theta_2}) \nonumber\\
\leq& \mathbf{D}^{A^*}(\Theta_1\|\Theta_2)   \nonumber\\
&+\bD^A(  \cD_{\Theta_1} \circ\cA_{2n-2} \circ \dots \circ \cA_{j+1}\circ\cF\circ \cA_j \circ\dots\circ \cA_1\circ \cE_{\Theta_1}\|  \nonumber\\
&\phantom{+\bD^A( }\cD_{\Theta_1} \circ\cA_{2n-2} \circ \dots \circ \cA_{j+1}\circ\cF\circ \cA_j \circ\dots\circ \cA_1\circ \cE_{\Theta_2})
\end{align}
where the first equality is by definition, the second simply adding a zero, the first inequality follows by optimizing over the channels in the first two terms and the third equality is the definition of the amortized superchannel divergence. The final equality follows by definition and the final inequality is removing $\cA_{2n-1}$ via data processing. Note that the final amortized channel divergence is taken between quantum channels resulting from a discrimination strategy that corresponds to the initial one after removing the final superchannel. 

Now, observe that the remaining amortized channel divergence has, after merging $\cA_{j+1}\circ\cF\circ \cA_j $ into a single channel and appropriate relabeling, the same form as the initial divergence in Equation~\eqref{Eq:initialStrategy}, but for a discrimination strategy based on $n-1$ applications of the superchannel. Iterating the same procedure as above to remove all $n$ superchannels results in
\begin{align}
&\mathbf{D}(p\Vert q) \leq n  \mathbf{D}^{A^*}(\Theta_1\|\Theta_2) , 
\end{align}
which concludes the proof. 
\end{proof}

Now, with the same technique as described in the previous sections, we get a weak converse bound for a Stein's Lemma with arbitrary strategies,
\begin{align}
\zeta^n_{fg}(\eps,\Theta_1,\Theta_2) \leq  \frac{1}{1-\eps} \left(  D^{A^*}(\Theta_1\|\Theta_2) +\frac1n h_2(\varepsilon)\right).  \label{Eq:fg-converse-Stein}
\end{align}

Note that of course for certain strategies one can also mix the previously used proof strategies to get convex combinations of amortized superchannel divergences, e.g. in the aforementioned case where one applies $\frac{n}{2}$ rounds of successive, followed by $\frac{n}{2}$ rounds of nested adaptive superchannel uses. On the other hand, the braided adaptive strategy from Figure~\ref{Fig:braided-adap} gives an example where, based on our proof technique, one seems to necessarily always end up with using $D^{A^*}$. 

These observations lead to a host of interesting questions that demand further investigation. Most importantly, is $D^{A^*}$ achievable or can one find a tighter converse bound? If it is optimal, are there cases where $D^{A^*}>D^{A}$, which would then imply that adaptive strategies are strictly more powerful then parallel strategies. For example, does there exist an example where a braided adaptive strategy is strictly better than all nested adaptive strategies? 

Finally we remark that similar to the previous sections we can get strong converse bounds in terms of the amortized max-relative entropy or the amortized geometric Renyi-divergence using the same technique as in Remark~\ref{Rmk:strongConverse}. It is however again unclear whether all involved amortized quantities collapse when using max-relative entropy or the geometric Renyi-divergence.

\begin{table}[t!]
\centering
{
\begin{tabular}{ |p{5.2cm}||p{4.3cm}|p{2cm}|p{1.6cm}|  }
 \hline
 \multicolumn{4}{|c|}{Summary of results for superchannels} \\
 \hline
 Strategy & Converse bound (Stein) & Achievable? & Figure \\
 \hline
Product   & $D(\Theta_1 \| \Theta_2 )$    & \cmark &   Fig.~\ref{Fig:parallel}~a\\
Parallel with product channels &   $D^{s\infty}(\Theta_1 \| \Theta_1 )$~*  & \cmark   &   Fig.~\ref{Fig:parallel}~b\\
Parallel with product states &   $D^{c\infty}(\Theta_1 \| \Theta_1 )$  & \cmark   &   Fig.~\ref{Fig:parallel}~c\\
 Parallel & $D^{\infty}(\Theta_1 \| \Theta_1 )$~\tiny\textcopyright & \cmark &     Fig.~\ref{Fig:parallel}~d\\
 Succesive adaptive  & $D^{sA}(\Theta_1 \| \Theta_1 )$~* &  \cmark &  Fig.~\ref{Fig:suc-adap}\\
 Nested adaptive &   $D^{A}(\Theta_1 \| \Theta_1 )$~\tiny\textcopyright  & \tiny\textcopyright & Fig.~\ref{Fig:nes-adap}\\
Braided adaptive & $D^{A*}(\Theta_1 \| \Theta_1 )$  & ?   & Fig.~\ref{Fig:braided-adap}\\ 
 Fully general adaptive & $D^{A*}(\Theta_1 \| \Theta_1 )$  & ?   & Fig.~\ref{Fig:gen-adap}\\ \hline
\end{tabular}}
\caption{List of investigated strategies for superchannel discrimination along with the converse bounds provided in this work for the Stein's setting. Rates marked with * are equal, therefore choosing the more general class doesn't provide any advantage. Rates with {\tiny\textcopyright} are equal given an assumption discussed in Appendix~\ref{Ap:chainRules} and if true $D^A$ is also achievable. The third column states whether the given converse bound is optimal, i.e. it can be achieved by a particular strategy.}
\label{table:ResultsSC}
\end{table}

\section{Examples}\label{sec:examples}

In this section, we will discuss several examples of superchannels and discuss how the bounds in the previous section simplify for these special cases. Examples include classical superchannels, superchannels with trivial side-channel, environment parametrized and side-channel parametrized superchannels.

\subsection{Classical superchannels}
In this section, we consider the problem of distinguishing classical superchannels. In the case of classical channels it was shown by Hayashi that adaptive strategies do not improve the discrimination error rate~\cite{Hayashi09}. We will see here that the same holds for the Stein's Lemma for classical superchannels. To the best of our knowledge this has not been investigated previously. 

From the definition of the superchannel we can take a classical superchannel as a pair of conditional probability distributions $\theta\equiv\{ \ce(a,s|c), \cd(d|b,s)\}$ that transform a channel $\cn(b|a)$ as 
\begin{align}
\cm(d|c) = \sum_{a,b,s} \cd(d|b,s)\cn(b|a)\ce(a,s|c). 
\end{align}
Since everything is classical, we are also free to make copies of every accessible system. Equivalently, we can give the experimenter access to the systems directly. That is, all variables except $s$ which is an internal variable of the superchannel. As a result either of the following two outputs is available to the experimenter: 
\begin{align}
q_1(d,b,r'a,c,r) :=&\sum_s \cd_{\theta_1}(d|b,s)\cn(b,r'|a,r)\ce_{\theta_1}(a,s|c)\cp(c,r), \\
q_2(d,b,r'a,c,r) :=&\sum_s \cd_{\theta_2}(d|b,s)\cn(b,r'|a,r)\ce_{\theta_2}(a,s|c)\cp(c,r).
\end{align}
Therefore, the relative entropy of two classical superchannels becomes
\begin{align}
D(\theta_1\|\theta_2) = \sup_{\cp,\cn} D( q_1(d,b,r'a,c,r) \| q_2(d,b,r'a,c,r) ).
\end{align}
We will now show that for classical channels the amortized superchannel relative entropies always collapse to the superchannel relative entropy. 
\begin{theorem}\label{thm:class-superchannel}
For two classical superchannels $\theta_1$ and $\theta_2$, we have
\begin{align}
D(\theta_1\|\theta_2) = D^{sA}(\theta_1\|\theta_2) = D^{A}(\theta_1\|\theta_2)= D^{A*}(\theta_1\|\theta_2)
\end{align}
and therefore the product strategy is optimal for asymptotic asymmetric superchannel discrimination for classical superchannels, in particular also adaptive strategies do not provide any advantage. 
\end{theorem}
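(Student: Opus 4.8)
The plan is to collapse the whole hierarchy onto $D(\theta_1\|\theta_2)$ from both sides. By Lemma~\ref{lemma:ineq-DA} (the relative entropy is strongly faithful, hence faithful) one already has
\begin{align}
D(\theta_1\|\theta_2)\le D^{sA}(\theta_1\|\theta_2)\le D^{A}(\theta_1\|\theta_2)\le D^{A*}(\theta_1\|\theta_2),
\end{align}
so everything reduces to proving the single reverse inequality $D^{A*}(\theta_1\|\theta_2)\le D(\theta_1\|\theta_2)$. Recall moreover that $D(\theta_1\|\theta_2)$ is the optimal product-strategy rate (Section~\ref{SSec:Prod}) and that, by Theorem~\ref{thm:mc-general} together with the weak-converse step leading to~\eqref{Eq:fg-converse-Stein}, $D^{A*}(\theta_1\|\theta_2)$ caps the Stein rate of \emph{every} adaptive strategy; hence $D^{A*}(\theta_1\|\theta_2)\le D(\theta_1\|\theta_2)$ simultaneously gives the claimed chain of equalities and the optimality of the product strategy.

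Two reductions prepare the ground. First, since $\theta_1,\theta_2$ are classical, each $\ce_{\theta_i}$ and $\cd_{\theta_i}$ factors through computational-basis dephasing maps, so $\theta_i(\cN)\circ\bar\cN$ and every channel entering Definition~\ref{Def:supAmoDiv} depends on $\cN,\cM,\bar\cN,\bar\cM$ only through their classicalisations; using the post-processing monotonicity of the amortized channel divergence (a consequence of Lemma~\ref{lem:circ}) one checks that dephasing these channels can only increase the defining difference, so when evaluating $D^{A*}(\theta_1\|\theta_2)$ we may take all channels and registers classical. Second, for classical channels the amortized channel divergence collapses, $D^{A}(\cn\|\cm)=D(\cn\|\cm)$ -- the relative-entropy case of Hayashi's theorem~\cite{Hayashi09}, which follows in two lines from the classical chain rule $D(\cn(p)\|\cm(q))\le D(p\|q)+D(\cn\|\cm)$ (compare the joint laws $p(x)\cn(y|x)$ and $q(x)\cm(y|x)$ and data-process onto $y$). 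Applying this to both amortized channel divergences inside $D^{A*}$, and then passing to the optimal point-mass inputs of the resulting classical channel divergences (bounding the subtracted term below by its value at that same input), one is reduced to the single-letter inequality
\begin{align}
D\big(\theta_1(\cn)(\pi_1)\,\big\|\,\theta_2(\cm)(\pi_2)\big)-D\big((\cn\ox\id_S)\ce_{\theta_1}(\pi_1)\,\big\|\,(\cm\ox\id_S)\ce_{\theta_1}(\pi_2)\big)\le D(\theta_1\|\theta_2),
\end{align}
to be proved for all classical channels $\cn,\cm$ and input distributions $\pi_1,\pi_2$.

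The natural route to this last inequality is a layer-by-layer use of the chain rule for the classical relative entropy: for each hypothesis form the joint law on the comb's internal alphabet $(C,A,S,B,D)$, bound the relative entropy of the outputs by that of the joints via data processing, and expand the joint relative entropy along the order $C$, $(A,S)$, $B$, $D$. The $B$-layer compares $\cn(\cdot|a)$ with $\cm(\cdot|a)$ and is meant to be paid for by the subtracted term -- here it is essential that the subtracted term carries $\ce_{\theta_1}$ in \emph{both} arguments, so that it is precisely the $\cn$-versus-$\cm$ distinguishability routed through $\ce_{\theta_1}$ -- while the $C$-, $(A,S)$- and $D$-layers should reassemble into a full-joint relative entropy of the form $D\big(\pi\,\ce_{\theta_1}\,\cn\,\cd_{\theta_1}\,\|\,\pi\,\ce_{\theta_2}\,\cn\,\cd_{\theta_2}\big)$, which is bounded by $D(\theta_1\|\theta_2)$ by the supremum defining it (choosing the channel argument equal to the common $\cn$ makes its own intermediate $\cn$-layer vanish). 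The step I expect to be the genuine obstacle is reconciling the asymmetric encoder bookkeeping: the subtracted term involves only $\ce_{\theta_1}$ and only the $(B,S)$-marginal, whereas the second hypothesis carries $\ce_{\theta_2}$ and the chain-rule expansion naturally lives on the larger alphabet including $C$ and $A$, so the naive expansion overshoots by a conditional relative entropy of the discarded $(C,A)$ data. Overcoming this should require ordering the chain rule so that the $\cn$-versus-$\cm$ layer is isolated \emph{before} any $(C,A)$ information is dropped and matched exactly against the subtracted term, and very likely inserting a suitably flagged reference input into the supremum defining $D(\theta_1\|\theta_2)$ to absorb the residual $\pi_1$-versus-$\pi_2$ contribution; making this precise is the technical heart of the argument.
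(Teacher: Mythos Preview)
Your opening moves are exactly the paper's: invoke Lemma~\ref{lemma:ineq-DA} to reduce to $D^{A^*}(\theta_1\|\theta_2)\le D(\theta_1\|\theta_2)$, restrict to classical channels, and use the classical collapse $D^A=D$ to turn both amortized channel divergences inside $D^{A^*}$ into ordinary channel divergences. Where you diverge is in the final step, which you overcomplicate and leave unfinished.

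After the collapse, the paper does not attempt your layer-by-layer chain-rule expansion at all. It simply asserts (and says ``can be checked by direct calculation'') that
\[
\sup_{\cn,\cm,\bar\cn,\bar\cm}\Big[D(\cd_{\theta_1}\cn\ce_{\theta_1}\bar\cn\|\cd_{\theta_2}\cm\ce_{\theta_2}\bar\cm)-D(\cn\ce_{\theta_1}\bar\cn\|\cm\ce_{\theta_1}\bar\cm)\Big]
=\sup_{\cn,\bar\cn}D(\cd_{\theta_1}\cn\ce_{\theta_1}\bar\cn\|\cd_{\theta_2}\cn\ce_{\theta_2}\bar\cn),
\]
i.e.\ the supremum is attained at $\cm=\cn$, $\bar\cm=\bar\cn$, where the subtracted term vanishes. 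One more application of data processing then removes the pre-processing $\bar\cn$ and lands on $D(\theta_1\|\theta_2)$. That is the entire argument.

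The ``direct calculation'' is nothing more than the classical chain rule you already cite, applied once: for classical channels the collapse $D^A=D$ is equivalent to the inequality $D(\cA\cP\|\cB\cQ)\le D(\cA\|\cB)+D(\cP\|\cQ)$; taking $\cA=\cd_{\theta_1}$, $\cB=\cd_{\theta_2}$ and choosing the point input that achieves the left-hand channel divergence shows that the bracketed difference is bounded by a term of the right-hand form with the \emph{same} $\cn,\bar\cn$ in both slots. The asymmetric appearance of $\ce_{\theta_1}$ in the subtracted term, which you flag as the main obstacle, is precisely what makes the cancellation go through: it guarantees that the subtracted term measures only the $\cn$-versus-$\cm$ and $\bar\cn$-versus-$\bar\cm$ distinguishability, and once those are stripped off one is left with $\theta_1$-versus-$\theta_2$ evaluated at the common $\cn,\bar\cn$. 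There is no residual $\pi_1$-versus-$\pi_2$ term to absorb (your two different inputs should in fact be the same point mass, coming from the single optimizer of the positive term), and no need to invent a flagged reference input. In short, the route you sketch would eventually work, but the paper's realization that the optimum sits at $\cm=\cn$, $\bar\cm=\bar\cn$ collapses your ``technical heart'' to a two-line verification.
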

\begin{proof}
By Lemma~\ref{lemma:ineq-DA} it suffices to show that $D^{A^*}(\theta_1\|\theta_2)\leq D(\theta_1\|\theta_2)$. In the following we abbreviate the variables for readability, but they should be clear from context. 
First, note that for two classical channels the amortized channel relative entropy always collapses to the channel relative entropy, therefore
\begin{align*}
D^{A^*}(\theta_1\|\theta_2) &= \sup_{ \substack{\cn,\cm \\ \bar\cn,\bar\cm}} \inf_{\cf} D^A( \cd_{\theta_1}\cn\ce_{\theta_1}\bar\cn \| \cd_{\theta_2}\cm\ce_{\theta_2}\bar\cm )  - D^A(\cn\cf\bar\cn \| \cm\cf\bar\cm) \\
&= \sup_{ \substack{\cn,\cm \\ \bar\cn,\bar\cm}}  \inf_{\cf} D( \cd_{\theta_1}\cn\ce_{\theta_1}\bar\cn \| \cd_{\theta_2}\cm\ce_{\theta_2}\bar\cm )  - D(\cn\cf\bar\cn \| \cm\cf\bar\cm) \\
&= \sup_{ \substack{\cn,\cm \\ \bar\cn,\bar\cm}}  \inf_{\cf} \sup_{\cp} D( \cd_{\theta_1}\cn\ce_{\theta_1}\bar\cn\cp \| \cd_{\theta_2}\cm\ce_{\theta_2}\bar\cm\cp )  - \sup_{\cp'} D(\cn\cf\bar\cn\cp' \| \cm\cf\bar\cm\cp') \\
&\leq \sup_{ \substack{\cn,\cm \\ \bar\cn,\bar\cm}} \sup_{\cp} D( \cd_{\theta_1}\cn\ce_{\theta_1}\bar\cn\cp \| \cd_{\theta_2}\cm\ce_{\theta_2}\bar\cm\cp )  -  D(\cn\ce_{\theta_1}\bar\cn\cp \| \cm\ce_{\theta_1}\bar\cm\cp) \\
&= \sup_{\cn, \bar\cn} D( \cd_{\theta_1}\cn\ce_{\theta_1}\bar\cn \| \cd_{\theta_2}\cn\ce_{\theta_2}\bar\cn ) \\
&\leq \sup_{\cn} D( \cd_{\theta_1}\cn\ce_{\theta_1}\| \cd_{\theta_2}\cn\ce_{\theta_2} ) \\
&= D(\theta_1\|\theta_2),
\end{align*}
where the third equality is by definition, the first inequality by bounding with a joint supremum and bounding the infimum by the concrete choice of $\ce_{\theta_1}$, the forth equality can be checked by direct calculation and the second inequality follows from data-processing. The statement of the lemma follows directly from here. 
\end{proof}
We conclude that simple product strategies are optimal when discrimination between classical superchannels. 
The case of classical channels was extended to classical-quantum channels in~\cite{BHKW,BHKW2} and it would be interesting to see if the same results hold for classical-quantum superchannels. 

\begin{figure}[t!]
\centering
\resizebox{\textwidth}{!}{
\begin{tikzpicture}

\node at (-2.1+7,1.3){a)};
\draw[] (0+7,0) rectangle (1+7,1);
\node at (0.5+7,0.5){$\cN$};
\draw[fill=SCcolor] (-1.5+7,1) rectangle (-0.5+7,-1.5) node[pos=0.5]{$\cE_\Theta$};
\draw[fill=SCcolor] (1.5+7,-1.5) rectangle (2.5+7,1) node[pos=0.5]{$\cD_\Theta$};
\draw (-0.5+7,0.5) -- (0+7,0.5) node[pos=0.5,sloped,above]{$A$};
\draw (1+7,0.5) -- (1.5+7,0.5) node[pos=0.5,sloped,above]{$B$};
\draw (-2.5+7,-0.25) -- (-1.5+7,-0.25) node[pos=0.5,sloped,above]{$C$};
\draw (2.5+7,-0.25) -- (3.5+7,-0.25) node[pos=0.5,sloped,above]{$D$};

\node at (-2.1+14,1.3){b)};
\draw[] (0+14,0) rectangle (1+14,1);
\node at (0.5+14,0.5){$\cN$};
\draw[fill=SCcolor] (-1.5+14,1) rectangle (-0.5+14,-1.5) node[pos=0.5]{$\cE$};
\draw[fill=SCcolor] (1.5+14,-2.5) rectangle (2.5+14,1) node[pos=0.5]{$\cD$};
\draw (-0.5+14,-1) -- (1.5+14,-1);
\draw (-0.5+14,0.5) -- (0+14,0.5);
\draw (1+14,0.5) -- (1.5+14,0.5);
\draw (-2.5+14,-0.25) -- (-1.5+14,-0.25);
\draw (2.5+14,-0.25) -- (3.5+14,-0.25);
\draw (-2+14,-1) -- (-1.5+14,-1) node[pos=0.5,sloped,above]{$E_1$};
\draw (-2+14,-2) -- (1.5+14,-2) node[pos=0.5,sloped,above]{$E_2$};
\draw (-2+14,-1) -- (-2.5+14,-1.5) -- (-2+14,-2);
\node at (-3+14,-1.5){$\omega^i_{E_1E_2}$};

\node at (-2.1+21,1.3){c)};
\draw[] (0+21,0) rectangle (1+21,1);
\node at (0.5+21,0.5){$\cN$};
\draw[fill=SCcolor] (-1.5+21,1) rectangle (-0.5+21,-1.5) node[pos=0.5]{$\cE$};
\draw[fill=SCcolor] (1.5+21,-1.5) rectangle (2.5+21,1) node[pos=0.5]{$\cD$};
\draw (-0.5+21,-1) -- (1.5+21,-1);
\draw (-0.5+21,0.5) -- (0+21,0.5);
\draw (1+21,0.5) -- (1.5+21,0.5);
\draw (-2.5+21,-0.25) -- (-1.5+21,-0.25);
\draw (2.5+21,-0.25) -- (3.5+21,-0.25);
\draw[fill=SCcolor] (0+21,-1.5) rectangle (1+21,-0.5) node[pos=0.5]{$\cS_i$};
\end{tikzpicture}}
\caption{\label{Fig:superExamples} Depiction of superchannels: a) A superchannel with trivial side-channel.  b) An environment parametrized superchannel. c) A side-channel parametrized superchannel.}
\end{figure}
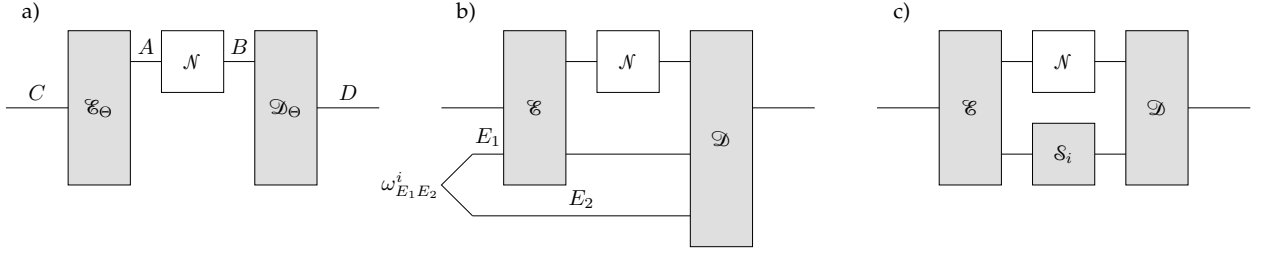

\subsection{The role of the side-channel}
This section is meant to give an intuition that the allowed communication between the two parts of the superchannel is the crucial difference to channel discrimination. That is, we will show that if the side-channel via $S$ is trivial in both superchannels, i.e.$|S|=1$, the discrimination problem drastically simplifies as it essentially reduces to discriminating two pairs of quantum channels, see Figure~\ref{Fig:superExamples}~a). 

First, we will give a meta-converse for any strategy that is different to the ones presented before. 
\begin{lemma}\label{Lem:stupidBoundisOK}
For two superchannels $\Theta_1$ and $\Theta_2$ we have for any strategy $\cS$,
\begin{align}
\mathbf{D}( p\| q) \leq n \mathbf{D}^{A}(\cE_{\Theta_1}\|\cE_{\Theta_2}) + n \mathbf{D}^{A}(\cD_{\Theta_1}\|\cD_{\Theta_2}).\label{Eq:stupidBoundisOK}
\end{align}
\end{lemma}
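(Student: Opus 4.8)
The plan is to run the meta-converse machinery exactly as in Theorem~\ref{thm:mc-general}, but instead of peeling off a whole superchannel at each step, to peel off the underlying channels $\cE_{\Theta_i},\cD_{\Theta_i}$ one at a time and bound every peeling step by the composition sub-additivity of Lemma~\ref{lem:circ}. Fix an arbitrary strategy $\cS$, with input state $\rho^0$, adaptively chosen channels $\cA_1,\dots,\cA_{2n-1}$ and final binary measurement $\cQ$; following Figure~\ref{Fig:gen-adap}, the state fed into $\cQ$ under hypothesis $j\in\{1,2\}$ is $\rho_j=\Lambda_j(\rho^0)$, where $\Lambda_j$ is the channel obtained by composing, in the order prescribed by $\cS$, the $n$ copies of $\cE_{\Theta_j}$, the $n$ copies of $\cD_{\Theta_j}$ and the $2n-1$ channels $\cA_i$, with each of these $4n-1$ elementary factors tensored with the identity on the wires that are momentarily idle (in particular on the side channels $S$ of the other currently-open superchannel slots; we may assume $|S|$ is the same for $\Theta_1$ and $\Theta_2$ by the isometric-padding argument of Section~\ref{sec:int-networks}). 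By data processing of $\mathbf{D}$ under $\cQ$ and then the amortization inequality (exactly as in the proofs above, using faithfulness of $\mathbf{D}$), $\mathbf{D}(p\|q)\le\mathbf{D}(\Lambda_1(\rho^0)\|\Lambda_2(\rho^0))\le\mathbf{D}^A(\Lambda_1\|\Lambda_2)$.

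Next I would expand $\Lambda_1$ and $\Lambda_2$ as ordered compositions of these $4n-1$ factors and apply Lemma~\ref{lem:circ} $4n-2$ times, telescoping, to get $\mathbf{D}^A(\Lambda_1\|\Lambda_2)\le\sum_\ell\mathbf{D}^A(\Gamma_\ell^{(1)}\|\Gamma_\ell^{(2)})$, the sum running over the $4n-1$ factor positions, where $\Gamma_\ell^{(j)}$ is the factor occupying position $\ell$ under hypothesis $j$. For a position holding an adaptive channel, $\Gamma_\ell^{(1)}=\Gamma_\ell^{(2)}=\cA_i\otimes\id$, and $\mathbf{D}^A(\cK\|\cK)=\sup_{\rho,\tau}[\mathbf{D}(\cK(\rho)\|\cK(\tau))-\mathbf{D}(\rho\|\tau)]\le0$ by data processing, so such positions contribute nothing. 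For a position holding a copy of $\cE_{\Theta_j}$ (resp.\ $\cD_{\Theta_j}$) tensored with an identity, I would use that the amortized channel divergence is insensitive to tensoring with an identity, $\mathbf{D}^A(\cN\otimes\id_E\|\cM\otimes\id_E)=\mathbf{D}^A(\cN\|\cM)$ (the reference in its definition is of arbitrary dimension, so $E$ is simply absorbed into it), giving a contribution of exactly $\mathbf{D}^A(\cE_{\Theta_1}\|\cE_{\Theta_2})$ (resp.\ $\mathbf{D}^A(\cD_{\Theta_1}\|\cD_{\Theta_2})$). Since there are $n$ positions of each kind, summing yields $\mathbf{D}^A(\Lambda_1\|\Lambda_2)\le n\,\mathbf{D}^A(\cE_{\Theta_1}\|\cE_{\Theta_2})+n\,\mathbf{D}^A(\cD_{\Theta_1}\|\cD_{\Theta_2})$, which together with the first display is the claimed bound.

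I expect the only delicate point to be organizational rather than mathematical: writing $\Lambda_j$ honestly as an ordered composition of $4n-1$ channels that pairwise typecheck, and checking that at each factor position the two hypotheses differ only through an $\cE_{\Theta_\bullet}$ or $\cD_{\Theta_\bullet}$ factor (this is precisely the ``$\cE_\Theta$ before its matching $\cD_\Theta$'' constraint of Figure~\ref{Fig:gen-adap}, together with the fact that the $\cA_i$ never act on the $S$ wires), so that each invocation of Lemma~\ref{lem:circ} is legitimate. Once this bookkeeping is pinned down the estimate is a one-line telescoping plus the two elementary observations above, so no new inequality is required; note also that the argument works verbatim for the most general strategy of Figure~\ref{Fig:gen-adap}, which is what lets the bound hold ``for any strategy $\cS$''.
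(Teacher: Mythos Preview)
Your proposal is correct and follows essentially the same idea as the paper's proof, which simply says ``use the channel amortization technique of \cite{BHKW}, step-by-step removing either $\cE_{\Theta_i}$ or $\cD_{\Theta_i}$ depending on what's next in line.'' Your version packages the peeling step via Lemma~\ref{lem:circ} at the level of $\mathbf{D}^A$ of the full composition $\Lambda_j$, whereas the intended argument peels factors directly at the state level (add and subtract $\mathbf{D}$ of the intermediate states, bound each increment by the corresponding amortized channel divergence); these two organizations are equivalent and both rely only on data processing and faithfulness of $\mathbf{D}$.
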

\begin{proof}
This is easily shown by using the channel amortization technique in~\cite{BHKW}, step-by-step removing either $\cE_{\Theta_i}$ or $\cD_{\Theta_i}$ depending on whats next in line. 
\end{proof}
This bound has some disadvantages in the superchannel case. In particular, it is dependent on the chosen decomposition of the superchannel. 
This is of course not desirable and we can easily see that in general the new bound in Equation~\eqref{Eq:stupidBoundisOK} is far from optimal because the bound does not take into account that the $S$ system is inaccessible: Consider two superchannels $\Theta_i$ for which the right hand side of Equation~\eqref{Eq:stupidBoundisOK} is finite and that can be decomposed into $\{ \cE_{\Theta_i}, \cD_{\Theta_i} \}$. Now, we construct the following channels,
\begin{align}
\hat\cE_{\Theta_i} = \cE_{\Theta_i} \otimes |i\rangle\langle i|_{S'} \\
\hat\cD_{\Theta_i} = \cD_{\Theta_i} \otimes \tr_{S'}.
\end{align}
It is now easy to see that the superchannels $\hat\Theta_i$ constructed from $\{\hat\cE_{\Theta_i}, \hat\cD_{\Theta_i}\}$ are equivalent to the respective $\Theta_i$ and should not be easier to distinguish. Nevertheless, even through $D^A(\cE_{\Theta_1} \| \cE_{\Theta_2})$ is assumed to be finite, $D^A(\hat\cE_{\Theta_1} \| \hat\cE_{\Theta_2})$ is infinite rendering the bound in Equation~\eqref{Eq:stupidBoundisOK} useless. This simple calculation also provides evidence that a good bound has to be based on the superchannel as a whole.

Now, if one only considers superchannels with trivial communication system $S$ we get that the above bound is indeed optimal for any sufficiently strong discrimination strategy. 
\begin{lemma}\label{lemma:trivialS}
For two superchannels $\Theta_1$ and $\Theta_2$ with trivial system $S$ we have in the Stein's setting,
\begin{align}
\zeta_{*}(\Theta_1,\Theta_2) = D^{A}(\cE_{\Theta_1}\|\cE_{\Theta_2}) + D^{A}(\cD_{\Theta_1}\|\cD_{\Theta_2}). 
\end{align}
where $*$ can be any strategy that is at least as powerful as the successive adaptive strategy.
\end{lemma}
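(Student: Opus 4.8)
The plan is to prove the two-sided identity by combining a converse valid for \emph{every} strategy with an achievability statement for the successive adaptive strategy, and then squeezing everything in between.

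\emph{Converse (all strategies).} For the upper bound I would apply Lemma~\ref{Lem:stupidBoundisOK} with the generalized divergence $\bD$ taken to be the relative entropy $D$: for any $n$-copy strategy $\cS$ with final decision distributions $p,q$ one has $D(p\Vert q)\le n\big(D^A(\cE_{\Theta_1}\Vert\cE_{\Theta_2})+D^A(\cD_{\Theta_1}\Vert\cD_{\Theta_2})\big)$. Feeding this into the binary relative-entropy estimate~\eqref{Eq:relEntLB} and rearranging exactly as in the weak-converse arguments of Section~\ref{sec:asymDisc} gives
\begin{align}
\zeta^n_{*}(\eps,\Theta_1,\Theta_2)\le\frac{1}{1-\eps}\left(D^A(\cE_{\Theta_1}\Vert\cE_{\Theta_2})+D^A(\cD_{\Theta_1}\Vert\cD_{\Theta_2})+\frac{h_2(\eps)}{n}\right),
\end{align}
so letting $n\to\infty$ and then $\eps\to0$ yields $\zeta_{*}(\Theta_1,\Theta_2)\le D^A(\cE_{\Theta_1}\Vert\cE_{\Theta_2})+D^A(\cD_{\Theta_1}\Vert\cD_{\Theta_2})$ for every admissible strategy $*$. (If the right-hand side is $+\infty$ the bound is vacuous, but then already a single use distinguishes the two superchannels perfectly and the claimed identity still holds.)

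\emph{Achievability via the successive adaptive strategy.} This is where the triviality of $S$ enters. With $|S|=1$ we may take $\cE_{\Theta_i}\in\cQ(C\to A)$, $\cD_{\Theta_i}\in\cQ(B\to D)$ and $(\Theta_i\otimes\id_R)(\cN_{AR\to BR})=(\cD_{\Theta_i}\otimes\id_R)\circ\cN\circ(\cE_{\Theta_i}\otimes\id_R)$. I would fix, once and for all, a ``routing'' channel $\cN^*$ with reference $R=\hat A\otimes\hat B$ (where $\hat A\cong A$, $\hat B\cong B$) that moves the incoming $A$-register into the reference slot $\hat A$, feeds the reference slot $\hat B$ into the outgoing $B$-register, and resets the remainder. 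A direct check shows that, up to discarding a trivial register and appending a fixed state, $(\Theta_i\otimes\id_R)(\cN^*)\cong\cE_{\Theta_i}\otimes\cD_{\Theta_i}$ for $i=1,2$. Since the successive adaptive strategy achieves the rate $D^{sA}(\Theta_1\Vert\Theta_2)=\sup_{\cN}D^A\big((\Theta_1\otimes\id_R)(\cN)\,\big\Vert\,(\Theta_2\otimes\id_R)(\cN)\big)$ (established in Section~\ref{sec:asymDisc}), choosing $\cN=\cN^*$ and using invariance of $D^A$ under appending/discarding fixed registers gives
\begin{align}
\zeta_{sa}(\Theta_1,\Theta_2)=D^{sA}(\Theta_1\Vert\Theta_2)\ge D^A(\cE_{\Theta_1}\otimes\cD_{\Theta_1}\,\Vert\,\cE_{\Theta_2}\otimes\cD_{\Theta_2})=D^A(\cE_{\Theta_1}\Vert\cE_{\Theta_2})+D^A(\cD_{\Theta_1}\Vert\cD_{\Theta_2}),
\end{align}
the last equality being additivity of the amortized relative entropy, Equation~\eqref{Eq:otimes1}.

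\emph{Conclusion and main obstacle.} Combining the two bounds, for any strategy $*$ at least as powerful as the successive adaptive one we obtain $D^A(\cE_{\Theta_1}\Vert\cE_{\Theta_2})+D^A(\cD_{\Theta_1}\Vert\cD_{\Theta_2})\le\zeta_{sa}(\Theta_1,\Theta_2)\le\zeta_{*}(\Theta_1,\Theta_2)\le D^A(\cE_{\Theta_1}\Vert\cE_{\Theta_2})+D^A(\cD_{\Theta_1}\Vert\cD_{\Theta_2})$, so all three coincide. The one genuinely delicate step is the construction of $\cN^*$ and the verification that $(\Theta_i\otimes\id_R)(\cN^*)\cong\cE_{\Theta_i}\otimes\cD_{\Theta_i}$: this decoupling is exactly what fails when $|S|>1$, since the inaccessible wire $S$ keeps the $\cE$- and $\cD$-halves of each superchannel linked no matter which $\cN$ the adaptive party inserts — which is, as anticipated in the discussion around Equation~\eqref{Eq:stupidBoundisOK}, precisely why Lemma~\ref{Lem:stupidBoundisOK} is far from tight in the general case.
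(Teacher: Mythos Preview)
Your proof is correct and follows essentially the same approach as the paper: the converse via Lemma~\ref{Lem:stupidBoundisOK} is identical, and for achievability both you and the paper exploit that with trivial $S$ one can decouple the $\cE$- and $\cD$-halves and reduce to channel discrimination. The only difference is presentational: the paper simply asserts that the successive adaptive strategy ``can easily be separated into two adaptive channel discrimination strategies'' and then invokes the channel Stein's lemma, whereas you make this decoupling explicit by constructing the routing channel $\cN^*$, plugging it into the already-established identity $\zeta_{sa}=D^{sA}$, and finishing with the additivity of $D^A$ from Equation~\eqref{Eq:otimes1}.
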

\begin{proof}
The converse follows directly from Lemma~\ref{Lem:stupidBoundisOK}. The achievability follows since the successive adaptive strategy can, given a trivial system $S$, be separated into two adaptive channel discrimination strategies for the respective channel pairs as follows. As initial input state we choose a product state $\rho_{CR}\otimes\rho_{BR'}$ and the adaptive operations within the superchannel are simply used to swap the system into place. The adaptive operations in between superchannels are used for arbitrary operations in product form $\cA_1\otimes\cA_2$, leading to 
\begin{align*}
(\cA_1\otimes\cA_2)(\cE_{\Theta}\otimes\cD_{\Theta} )(\rho_{CR}\otimes\rho_{BR'}),
\end{align*}
which gives the beginning of the tensor product of two adaptive channel discrimination strategies. Iterating swap operations and adaptively chosen product channels gives the full strategy. 
The results follows then from the Stein's Lemma for quantum channels. 
\end{proof}
One should however remember that this is usually not true when $S$ is not trivial as seen by the example above. As a remark, the same also holds for general $k$-combs and in terms of channel discrimination means that if an experimenter has access to $n$ uses of each channel in the  pair the optimal discrimination rate is the sum of the individual discrimination rates. 

\subsection{Replacer superchannels}\label{sec:replacer}

For channels we have replacer channels which essentially reduce channel problems to state problems. We can similarly define replacer superchannels which should also reduce the problem by one level. A replacer superchannel $\Theta_i$ takes any channel $\cN$ and outputs a fixed channel $\cN_i$, i.e. $\Theta_i(\cN) = \cN_i$. These replacer channels are simply constructed by taking the input state, applying $\cN_i$ and outputing the result, while at the same time feeding a dummy state into $\cN$ and immediately tracing out the result. Here, it is important that all replacer superchannels use the same dummy state, not to add any additional distinguishability, and we will therefore usually choose it to be the maximally mixed state. For a depiction of a replacer superchannel see Figure~\ref{Fig:super}. Note that replacer channels are a special case of the side-channel seizable channels discussed in the next section, but we state the following lemma for completeness.
\begin{lemma}
For two replacer superchannels we have,
\begin{align}
D(\Theta_1 \| \Theta_1 ) &= D(\cN_1 \| \cN_2 ), \\
D^{sA}(\Theta_1 \| \Theta_2 ) = D^{A}(\Theta_1 \| \Theta_2 )= D^{A^*}(\Theta_1 \| \Theta_2 ) &= D^A(\cN_1 \| \cN_2 ).
\end{align}
\end{lemma}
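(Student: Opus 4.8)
The plan is to exploit the very rigid structure of a replacer superchannel in its natural decomposition, in which the whole dependence on $i$ sits inside the ``$\cD$'' part. Concretely, one has $\cE_{\Theta_1}=\cE_{\Theta_2}=:\cE$, the fixed channel that routes the input $C$ into the side system $S$ (so $S\cong C$) and feeds the fixed dummy state $\tau$ into the access point, while $\cD_{\Theta_i}=(\cN_i)_{S\to D}\circ(\tr_B\otimes\id_S)$. The first thing I would check is the resulting factorisation: for every $\cN_{AR\to BR}$,
\begin{align*}
(\Theta_i\otimes\id_R)(\cN_{AR\to BR})=(\cD_{\Theta_i}\otimes\id_R)\circ\Phi^{\cN},\qquad \Phi^{\cN}\coloneqq\big(\id_S\otimes\cN_{AR\to BR}\big)\circ\big(\cE\otimes\id_R\big),
\end{align*}
where $\Phi^{\cN}$, a channel from $CR$ to $BSR$, carries no dependence on $i$; equivalently, after tracing out $B$, $(\Theta_i\otimes\id_R)(\cN)=(\cN_i)_{C\to D}\otimes\Lambda^{\cN}_{R\to R}$ with $\Lambda^{\cN}\coloneqq(\tr_B\otimes\id_R)\circ\Phi^{\cN}$ common to both hypotheses. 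Everything else follows by feeding this into the definitions.

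For $D(\Theta_1\|\Theta_2)=\sup_{\rho,\cN}D\big((\Theta_1\otimes\id_R)(\cN)(\rho)\,\|\,(\Theta_2\otimes\id_R)(\cN)(\rho)\big)$, set $\omega\coloneqq(\tr_B\otimes\id_{SR})\big(\Phi^{\cN}(\rho)\big)$, a state on $SR$ common to both hypotheses. The two output states are then $((\cN_j)_{S\to D}\otimes\id_R)(\omega)$, $j=1,2$, so by data processing $D(\Theta_1\|\Theta_2)\le\sup_{\omega}D\big((\cN_1\otimes\id_R)(\omega)\,\|\,(\cN_2\otimes\id_R)(\omega)\big)=D(\cN_1\|\cN_2)$ after the relabelling $S\cong C$. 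For the reverse inequality I would take $\cN_{AR\to BR}$ to discard $A$, prepare a fixed state on $B$, and act as identity on $R$; then $\Phi^{\cN}$ simply transports the input to the side system unchanged, every state on $SR$ is realisable as $\omega$, and the supremum recovers $D(\cN_1\|\cN_2)$.

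For the amortised quantities, Lemma~\ref{lemma:ineq-DA} already gives $D^{sA}(\Theta_1\|\Theta_2)\le D^{A}(\Theta_1\|\Theta_2)\le D^{A^*}(\Theta_1\|\Theta_2)$, so it suffices to prove $D^{sA}(\Theta_1\|\Theta_2)\ge D^A(\cN_1\|\cN_2)$ and $D^{A^*}(\Theta_1\|\Theta_2)\le D^A(\cN_1\|\cN_2)$ and squeeze. The bound on $D^{sA}$ is immediate: from $(\Theta_j\otimes\id_R)(\cN)=(\cN_j)_{C\to D}\otimes\Lambda^{\cN}$ with common $\Lambda^{\cN}$, subadditivity and additivity of $D^A$ (Lemma~\ref{lem:circ} and the tensor lemma) together with strong faithfulness of the relative entropy give $D^A\big((\Theta_1\otimes\id_R)(\cN)\,\|\,(\Theta_2\otimes\id_R)(\cN)\big)=D^A(\cN_1\|\cN_2)+D^A(\Lambda^{\cN}\|\Lambda^{\cN})=D^A(\cN_1\|\cN_2)$ for every $\cN$, hence $D^{sA}(\Theta_1\|\Theta_2)=D^A(\cN_1\|\cN_2)$. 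For the bound on $D^{A^*}$, the key point is that because $\cE_{\Theta_1}=\cE_{\Theta_2}=\cE$, the channel $(\Theta_i\otimes\id_R)(\cN)\circ\bar\cN$ in the first term of the definition of $\bD^{A^*}$ equals $(\cD_{\Theta_i}\otimes\id_R)\circ\cP$ with $\cP\coloneqq\cN\circ\cE\circ\bar\cN=\cN\circ\cE_{\Theta_1}\circ\bar\cN$ exactly the channel in the subtracted term; thus by Lemma~\ref{lem:circ} every term in the supremum defining $D^{A^*}(\Theta_1\|\Theta_2)$ is at most $D^A(\cD_{\Theta_1}\otimes\id_R\,\|\,\cD_{\Theta_2}\otimes\id_R)=D^A(\cD_{\Theta_1}\|\cD_{\Theta_2})$, and a second application of Lemma~\ref{lem:circ} to $\cD_{\Theta_i}=(\cN_i)_{S\to D}\circ(\tr_B\otimes\id_S)$, using $D^A(\tr_B\otimes\id_S\,\|\,\tr_B\otimes\id_S)=0$, gives $D^A(\cD_{\Theta_1}\|\cD_{\Theta_2})\le D^A(\cN_1\|\cN_2)$. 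Squeezing yields $D^{sA}=D^A=D^{A^*}=D^A(\cN_1\|\cN_2)$.

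I expect the main obstacle to be the system bookkeeping, particularly in the $D^{A^*}$ step: one must keep straight which wires are the side system $S$, the reference $R$, and the dummy input/output $A$/$B$, and verify the composition and type hypotheses of Lemma~\ref{lem:circ} and of the tensor-additivity lemma at each step, notably that $D^A$ is unchanged under tensoring with $\id_R$ and under the relabelling $S\cong C$. I would also flag explicitly that $\bD^{A^*}$ is defined relative to a chosen decomposition and that here the natural replacer decomposition (the one with $\cE_{\Theta_1}=\cE_{\Theta_2}$) is used; since the resulting value $D^A(\cN_1\|\cN_2)$ coincides with the decomposition-independent quantities $D^{sA}$ and $D^A$, no ambiguity is left.
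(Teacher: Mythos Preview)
Your proof is correct, but it takes a different route from the paper's. The paper obtains the upper bounds $D(\Theta_1\|\Theta_2)\le D(\cN_1\|\cN_2)$ and $D^{A^*}(\Theta_1\|\Theta_2)\le D^A(\cN_1\|\cN_2)$ by observing that replacer superchannels are a special case of side-channel parametrized superchannels and invoking Lemma~\ref{lemma:SidePara}, whose proof in turn relies on the chain rule of Fang \etal for the relative entropy. The lower bounds are then dispatched by choosing the channels in the suprema to be identities, so that $\Theta_i(\cN)=\cN_i$ on the nose. By contrast, you work directly with the explicit replacer decomposition, establish the tensor factorisation $(\Theta_j\otimes\id_R)(\cN)=(\cN_j)_{C\to D}\otimes\Lambda^{\cN}_{R\to R}$, and reduce everything to Lemma~\ref{lem:circ} together with additivity and strong faithfulness of $D^A$. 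Your argument is therefore more self-contained (no external chain rule needed) and in fact yields the exact value $D^A((\Theta_1\otimes\id_R)(\cN)\|(\Theta_2\otimes\id_R)(\cN))=D^A(\cN_1\|\cN_2)$ for \emph{every} $\cN$, not just a convenient one. The paper's route has the advantage of situating the result inside the general side-channel parametrized framework. One small point of wording: in your $D^{A^*}$ step the channel ``$\cP$'' is really two channels, one per hypothesis, and Lemma~\ref{lem:circ} is applied with $\cP=\cN\circ\cE\circ\bar\cN$ and $\cQ=\cM\circ\cE\circ\bar\cM$; the cancellation with the subtracted term then goes through exactly as you intend.
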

\begin{proof}
The upper bounds $D(\Theta_1 \| \Theta_1 ) \leq D(\cN_1 \| \cN_2 )$ and $D^{A^*}(\Theta_1 \| \Theta_2 ) \leq D^A(\cN_1 \| \cN_2 )$ are a special case of those for side-channel parametrized superchannels shown in Lemma~\ref{lemma:SidePara}. 

We can now furthermore see that $D(\Theta_1 \| \Theta_1 ) \geq D(\cN_1 \| \cN_2 )$ and $D^{sA}(\Theta_1 \| \Theta_2 ) \geq D^A(\cN_1 \| \cN_2 )$ by choosing all channels in the suprema in the quantities as identity channels. The inequalities then follow directly from the structure of the replacer channels which concludes the proof. 
\end{proof}
It follows that two replacer superchannels can be discriminated with the same rate as the fixed channels they output. For any strategy that's at least as powerful as the successive adaptive strategy the rate is given by the amortized channel relative entropy which is the optimal channel discrimination rate. Product strategies are naturally not sufficient to reach that rate and result in a discrimination rate according to the channel relative entropy.

\subsection{Environment and side-channel parametrized superchannels} 

An often considered class of channels are the so called environment parametrized channels. Two channels $\cN_1$, $\cN_2$ are called (jointly) environment parametrized if they differ only in the choice of an auxiliary state on an additional environment system, i.e. 
\begin{align}
\cN_i (\rho_A) = \cP_{AE\rightarrow B} (\rho_A\otimes\omega^i_E), 
\end{align} 
where $\omega^i_E$ is the parameterizing state. 

A natural extension of the concept is to define an environment parametrized superchannel as one with both defining channels being environment parametrized by a joint environment state $\omega^i_{E_1E_2}$, see Figure~\ref{Fig:superExamples}~b). Interestingly it is unclear how to give a good upper bound on $\bD^{A^*}$ for these channels, nevertheless we can give convenient converse bounds via the following meta-converse. Resolving this tension is directly related to the achievability of $D^{A^*}$. 
\begin{lemma}\label{lemma:EnvPara}
For two jointly environment parametrized superchannels $\Theta_1$ and $\Theta_2$ we have for any strategy $\cS$ and sub-additive divergence $\bD$,
\begin{align}
\mathbf{D}^{}(p\| q) \leq n \mathbf{D}^{}(\omega^1_{E_1E_2}\|\omega^2_{E_1E_2}).
\end{align}
\end{lemma}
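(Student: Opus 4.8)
The plan is to reduce the whole $n$-use discrimination problem to ordinary state discrimination between the two parametrizing environment states, exactly as in the environment-parametrized channel case. First I would unpack the definition: for jointly environment parametrized superchannels there are fixed channels $\cP^{\cE}_{CE_1\to AS}$ and $\cP^{\cD}_{BSE_2\to D}$, independent of the hypothesis $i$, with $\cE_{\Theta_i}(\cdot)=\cP^{\cE}_{CE_1\to AS}\big((\cdot)\otimes\tr_{E_2}\omega^i_{E_1E_2}\big)$ and $\cD_{\Theta_i}(\cdot)=\cP^{\cD}_{BSE_2\to D}\big((\cdot)\otimes\tr_{E_1}\omega^i_{E_1E_2}\big)$, where the two environment legs are fed the $E_1$ and $E_2$ halves of one and the same copy of $\omega^i_{E_1E_2}$. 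Equivalently, one use of $\Theta_i$ is a single fixed superchannel (built from $\cP^{\cE}$ and $\cP^{\cD}$, leaving the environment legs open) supplied with the state $\omega^i_{E_1E_2}$.

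Second, I would take an arbitrary $n$-use strategy $\cS$ and write it out following the general adaptive picture of Figure~\ref{Fig:gen-adap}: an input state $\rho$, the $2n-1$ adaptively chosen channels $\cA_j$, the $n$ fragments $\cE_{\Theta_i}$ and $n$ fragments $\cD_{\Theta_i}$ in some admissible order, linked by their $S$ wires, and a final measurement $\cQ$. Since the $\cA_j$ are forbidden from touching the internal $S$ (and a fortiori $E$) systems, and the circuit topology is fixed once $\cS$ is fixed, the only place the hypothesis $i$ enters is through the $n$ independent environment states. Hence all the remaining ingredients\,---\,$\rho$, the $\cA_j$, the fixed parts $\cP^{\cE}$ and $\cP^{\cD}$ of every fragment, and $\cQ$\,---\,can be collected into a single quantum-to-classical channel $\cC$ from $(E_1E_2)^{\otimes n}$ to the binary output register that does not depend on $i$, so that $p=\cC\big((\omega^1_{E_1E_2})^{\otimes n}\big)$ and $q=\cC\big((\omega^2_{E_1E_2})^{\otimes n}\big)$.

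Third, I would invoke data processing for the generalized divergence $\bD$ under $\cC$, followed by $n-1$ applications of sub-additivity with respect to tensor products (which is precisely the hypothesis of the lemma):
\[
\bD(p\Vert q)\le \bD\big((\omega^1_{E_1E_2})^{\otimes n}\,\Vert\,(\omega^2_{E_1E_2})^{\otimes n}\big)\le n\,\bD(\omega^1_{E_1E_2}\Vert\omega^2_{E_1E_2}),
\]
which is the claim. The only genuine work is the bookkeeping in the middle step: one must check that, despite the admissible interleaving of $\cE_{\Theta_i}$/$\cD_{\Theta_i}$ fragments with adaptive maps and the $S$ wires pairing each $\cE$ with its partner $\cD$, all $n$ copies of $\omega^i_{E_1E_2}$ can be reorganized to sit at the input of one fixed circuit. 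I expect this to be the only place requiring care, but because the $\cA_j$ cannot act on the $S$ or $E$ systems and each environment copy is tied to exactly one $(\cE_{\Theta_i},\cD_{\Theta_i})$ pair, this is the standard ``defer the state preparation'' rearrangement and goes through without obstruction.
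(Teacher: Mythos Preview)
Your approach is correct and is essentially the same as the paper's: view any $n$-use strategy as a fixed quantum-to-classical channel acting on $(\omega^i_{E_1E_2})^{\otimes n}$, then apply data processing and sub-additivity. One notational slip worth fixing: the displayed formulas $\cE_{\Theta_i}(\cdot)=\cP^{\cE}\big((\cdot)\otimes\tr_{E_2}\omega^i\big)$ and $\cD_{\Theta_i}(\cdot)=\cP^{\cD}\big((\cdot)\otimes\tr_{E_1}\omega^i\big)$, taken literally, feed only the marginals and would destroy the $E_1E_2$ correlation that the joint parametrization is meant to allow; your subsequent prose (``the $E_1$ and $E_2$ halves of one and the same copy'') shows you mean the correct thing, so just drop the traces and write the action of the full superchannel with a single joint copy of $\omega^i_{E_1E_2}$ threaded through both $\cP^{\cE}$ and $\cP^{\cD}$.
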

\begin{proof}
For the proof we simply note that since the superchannels are, besides the parameterizing environment state, the same, we can regard any discrimination strategy as a source preparing either $(\omega^1_{E_1E_2})^{\otimes n}$ or $(\omega^2_{E_1E_2})^{\otimes n}$ followed by a fixed identical channel. Removing that channel via data-processing and using sub-additivity leads to the desired bound. 
\end{proof}
Similar to the channel case~\cite{BHKW}, we call two jointly environment parametrized channels environment seizable if there additionally exists a superchannel $\Psi$ such that
\begin{align}
\Psi(\Theta_i) = \cR_{\omega^i_{E_1E_2}}, 
\end{align}
where $\cR_{\omega^i_{E_1E_2}}$ is the replacer channel that always outputs $\omega^i_{E_1E_2}$. 
In this case we can easily get achievability results, e.g. for the Stein's Lemma setting.
\begin{lemma}\label{lemma:EnvSeizable}
For two jointly environment seizable channels $\Theta_1$ and $\Theta_2$ we have
\begin{align}
\zeta_{*}(\Theta_1,\Theta_2) = D(\omega^1_{E_1E_2}\| \omega^2_{E_1E_2}). 
\end{align}
where $*$ can be any strategy that is at least as powerful as the product strategy, which includes every strategy discussed in the work.
\end{lemma}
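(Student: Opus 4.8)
The plan is to prove the two inequalities separately: the upper bound $\zeta_*(\Theta_1,\Theta_2)\leq D(\omega^1_{E_1E_2}\|\omega^2_{E_1E_2})$ for \emph{every} admissible strategy, using the meta-converse of Lemma~\ref{lemma:EnvPara}, and the matching lower bound already at the level of the product strategy, so that every strategy discussed in this work---being at least as powerful as the product one---attains exactly this rate.

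For the converse I would instantiate Lemma~\ref{lemma:EnvPara} with $\bD$ the relative entropy, which is sub-additive (indeed additive), obtaining $D(p\|q)\leq n\,D(\omega^1_{E_1E_2}\|\omega^2_{E_1E_2})$ for the final decision distributions $p,q$ of any $n$-copy strategy. Feeding this into the standard meta-converse machinery of this section, i.e. combining it with Equation~\eqref{Eq:relEntLB} and data processing of the relative entropy, one gets
\begin{align}
\zeta^n(\epsilon,\Theta_1,\Theta_2)\leq\frac{1}{1-\epsilon}\left(D(\omega^1_{E_1E_2}\|\omega^2_{E_1E_2})+\frac{h_2(\epsilon)}{n}\right),
\end{align}
and then $n\to\infty$ followed by $\epsilon\to0$ yields the claimed upper bound for any $*$.

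For achievability I would use environment seizability to reduce the task to discriminating the states $\omega^1_{E_1E_2}$ and $\omega^2_{E_1E_2}$. By hypothesis there is a supermap $\Psi$ with $\Psi(\Theta_i)=\cR_{\omega^i_{E_1E_2}}$. Decomposing $\Psi$ into its constituent channels---a pre-processing channel that prepares the input of $\cE_{\Theta_i}$ together with a retained memory, the channel that is plugged into the access point of $\Theta_i$ (carrying that memory), and a post-processing channel acting on the output and the memory---one sees that applying $\Psi$ to a single copy of $\Theta_i$ is nothing but a product single-shot protocol: the pre-processing fixes an input state $\rho$ on $CR$ (with $R$ playing the role of the memory), the plugged-in channel is a fixed $\cN_{AR\to BR}$, and the post-processing can be folded into the final measurement. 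Carrying this out independently on each of $n$ copies of $\Theta_i$ and then measuring jointly is therefore a legitimate product strategy whose effective output states are $(\omega^i_{E_1E_2})^{\otimes n}$; the quantum Stein's lemma for states then gives the achievable rate $D(\omega^1_{E_1E_2}\|\omega^2_{E_1E_2})$. Since this rate is attained by the product strategy, it is attained by every more powerful strategy as well, and together with the converse it is the exact value of $\zeta_*$.

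The part I expect to be the main obstacle is the bookkeeping in the achievability step: one has to check carefully that once the internal memory wires of the seizing supermap $\Psi$ are identified with the reference system of the product strategy, $\Psi$ genuinely collapses to an admissible triple $(\rho,\cN,\cQ)$, so that ``seize the environment, then discriminate states'' really is captured by the weakest strategy class rather than only by some adaptive one. The remaining ingredients---the meta-converse inequality, data processing, and Stein's lemma for states---are routine and already available above.
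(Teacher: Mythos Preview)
Your proposal is correct and follows essentially the same approach as the paper: the converse via Lemma~\ref{lemma:EnvPara} specialized to the relative entropy, and the achievability by applying the seizing supermap $\Psi$ to each copy and then invoking state Stein, with the observation that this is an admissible product strategy. The paper's proof is much terser and omits the bookkeeping you flag (identifying the memory wires of $\Psi$ with the reference system of the product triple $(\rho,\cN,\cQ)$ and folding the post-processing into the measurement), but that verification is straightforward and your discussion of it is accurate.
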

\begin{proof}
The proof follows easily, as an achieving strategy is simply to apply the superchannel $\Psi$ to every $\Theta_i$ and then discriminate the output states. 
\end{proof}
This reduces the discrimination of jointly environment-seizable superchannels with product, any parallel or adaptive strategies to that of quantum states. It also means that none of these strategies is more powerful that the simple product strategy.

Now, we discuss a different class of superchannels with a similar intuition: superchannels parametrized by a particular choice of channel. 
\begin{definition}
We call two superchannels $\Theta_1$ and $\Theta_2$ jointly side-channel parametrized if their action can be decomposed as
\begin{align}
\Theta_i(\cN) = \cD_{BS\rightarrow D}\circ(\cN_{A\rightarrow B}\otimes\cS_{i, S\rightarrow S})\circ\cE_{C\rightarrow AS}, \label{Eq:SideChannel}
\end{align}
for $i\in\{1,2\}$. 
\end{definition}
For a visualization see Figure~\ref{Fig:superExamples}~c). 
Note that the previously discussed replacer superchannels are a special case of side-channel parametrized superchannels. 
We now get the following bound.
\begin{lemma}\label{lemma:SidePara}
For two jointly side-channel parametrized superchannels $\Theta_1$ and $\Theta_2$ we have
\begin{align}
D^{}(\Theta_1\|\Theta_2)  &\leq D(\cS_1\| \cS_2), \\
D^{sA}(\Theta_1\|\Theta_2) \leq D^{A}(\Theta_1\|\Theta_2) \leq D^{A^*}(\Theta_1\|\Theta_2) &\leq D^A(\cS_1\| \cS_2).
\end{align}
\end{lemma}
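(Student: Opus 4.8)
The plan is to reduce all four quantities to channel divergences of the side-channels $\cS_1,\cS_2$. The inner chain $D^{sA}(\Theta_1\|\Theta_2)\leq D^{A}(\Theta_1\|\Theta_2)\leq D^{A^*}(\Theta_1\|\Theta_2)$ is immediate from Lemma~\ref{lemma:ineq-DA}, since the relative entropy is faithful, so only the two outer inequalities require work. The common thread is that in the decomposition~\eqref{Eq:SideChannel} the maps $\cN_{A\to B}$ and $\cS_{i,S\to S}$ act on disjoint systems and hence commute, so the only place where $\Theta_1$ and $\Theta_2$ differ can be isolated into a single application of $\cS_i$ sandwiched between channels that do not depend on $i$ (built from the common $\cE_{C\to AS}$, $\cD_{BS\to D}$, and whatever auxiliary channels the strategy uses).

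For $D(\Theta_1\|\Theta_2)\leq D(\cS_1\|\cS_2)$ I would fix an input state $\rho_{CR}$ and a channel $\cN_{AR\to BR}$ in the definition of the generalized superchannel divergence, absorb the common $\cE_{C\to AS}$ into the state to obtain $\sigma_{ASR}$ (independent of $i$), and then, using the commutation above, write $(\Theta_i\otimes\id_R)(\cN)(\rho) = \Lambda\big((\id_{AR}\otimes\cS_{i,S\to S})(\sigma_{ASR})\big)$ for a fixed channel $\Lambda$ assembled from $\cN_{AR\to BR}$ and $\cD_{BS\to D}$. Data processing of $D$ under $\Lambda$, followed by the definition of the generalized channel divergence of $\cS_i$ (with $AR$ playing the role of the reference), bounds this by $D(\cS_1\|\cS_2)$; taking the supremum over $\rho$ and $\cN$ finishes this part.

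For $D^{A^*}(\Theta_1\|\Theta_2)\leq D^A(\cS_1\|\cS_2)$ I would unfold Definition~\ref{Def:supAmoDiv} with channels $\cN,\cM,\bar\cN,\bar\cM$ and set $\cF_1:=\cN\circ\cE_{\Theta_1}\circ\bar\cN$, $\cF_2:=\cM\circ\cE_{\Theta_1}\circ\bar\cM$ (with the $S$ wire passing through $\cN,\cM$), which are exactly the channels in the subtracted term. Using again that $\cS_i$ commutes past $\cN$ (resp.\ $\cM$) and that both superchannels share $\cE$ and $\cD$, one rewrites $(\Theta_i\otimes\id_R)(\cN)\circ\bar\cN = \cD\circ(\id\otimes\cS_i)\circ\cF_i$. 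Then Lemma~\ref{lem:circ}, applied twice (first peeling off $\cD$, then peeling off $\id\otimes\cS_i$), together with $D^A(\cC\|\cC)=0$ for any channel $\cC$ — which holds because the relative entropy is strongly faithful, via Lemma~\ref{lem:AMOproperties} — gives
\begin{align}
D^A\big((\Theta_1\otimes\id_R)(\cN)\circ\bar\cN \,\big\|\, (\Theta_2\otimes\id_R)(\cM)\circ\bar\cM\big) \leq D^A(\cS_1\|\cS_2) + D^A(\cF_1\|\cF_2).
\end{align}
Since $D^A(\cF_1\|\cF_2)$ is precisely the term subtracted in $\bD^{A^*}$, it cancels, and taking the supremum over $\cN,\cM,\bar\cN,\bar\cM$ yields the claim.

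The main obstacle is purely organizational: keeping the system labels straight through the ``$\cN$ and $\cS_i$ commute'' rearrangements and checking that $\cF_1,\cF_2$ as defined really coincide with the channels in the subtracted term of $\bD^{A^*}$ — this uses crucially that $\cE_{\Theta_1}=\cE_{\Theta_2}$ and $\cD_{\Theta_1}=\cD_{\Theta_2}$, which is exactly what ``side-channel parametrized'' buys us. A minor technical point is the possible $\infty-\infty$ when $D^A(\cF_1\|\cF_2)=\infty$; this is handled by the usual convention that the difference is then $-\infty$ (or by restricting to the finite case, the bound being otherwise trivial). Everything else — data processing, Lemma~\ref{lem:circ}, and strong faithfulness — is off the shelf.
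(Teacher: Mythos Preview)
Your proposal is correct and follows essentially the same route as the paper's proof: the paper condenses the argument for the final inequality to ``data-processing followed by the chain rule from~\cite{fang2019chain}'', which is exactly your two applications of Lemma~\ref{lem:circ} (the first, with $D^A(\cD\|\cD)=0$, is data processing under post-composition; the second is the composition chain rule). Your write-up is more explicit about the system bookkeeping and about why $\cF_1,\cF_2$ match the subtracted term in $\bD^{A^*}$, which is helpful; the only small step you leave implicit is $D^A(\id\otimes\cS_1\|\id\otimes\cS_2)=D^A(\cS_1\|\cS_2)$, which follows by absorbing the identity factor into the reference.
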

\begin{proof}
The first line follows easily by using data-processing. 
In the second line, the first two inequalities follow by Lemma~\ref{lemma:ineq-DA}. 
For the third, let $\rho^*_{CR}$, $\sigma^*_{CR}$ be the optimizing states in $D^A(\Theta_1(\cN)\circ\bar\cN \| (\Theta_2)(\cM)\circ\bar\cM )$ for the chosen superchannels. We then observe, 
\begin{align}
&D^{A^*}(\Theta_1 \| \Theta_2 ) \nonumber\\
&\leq \sup_{ \substack{\cN,\cM \\ \bar\cN,\bar\cM}} D^A(\cD\circ(\cN\otimes\cS_{1})\circ\cE\circ\bar\cN \| \cD\circ(\cM\otimes\cS_{2})\circ\cE\circ\bar\cM ) 
 -  D^A(\cN\circ\cE\circ\bar\cN \| \cM\circ\cE\circ\bar\cM) \nonumber\\
&\leq \sup_{ \substack{\cN,\cM \\ \bar\cN,\bar\cM}} D(\cD\circ(\cN\otimes\cS_{1})\circ\cE\circ\bar\cN(\rho^*) \| \cD\circ(\cM\otimes\cS_{2})\circ\cE\circ\bar\cM(\sigma^*) )  -  D(\cN\circ\cE\circ\bar\cN(\rho^*) \| \cM\circ\cE\circ\bar\cM(\sigma^*)) \nonumber\\
&\leq \sup_{ \substack{\cN,\cM \\ \bar\cN,\bar\cM}} D((\cN\otimes\cS_{1})\circ\cE\circ\bar\cN(\rho^*) \| (\cM\otimes\cS_{2})\circ\cE\circ\bar\cM(\sigma^*) )  -  D(\cN\circ\cE\circ\bar\cN(\rho^*) \| \cM\circ\cE\circ\bar\cM(\sigma^*)) \nonumber\\
&\leq D^A(\cS_1\| \cS_2),
\end{align}
where the first inequality is by bounding the infimum by choosing $\cE$, the second by how we chose the states, the third uses data-processing and the forth the chain rule from~\cite{fang2019chain} separating the side-channels from the rest. 
\end{proof}

Similar to the environment seizable channels defined in~\cite{BHKW}, we can identify a restricted class for which the above inequalities are tight.
\begin{definition}
We call two superchannels $\Theta_1$ and $\Theta_2$ side-channel seizable if they are environment parametrized and additionally there exists a superchannel $\Psi$ such that
\begin{align}
\Psi(\Theta_i(\cN)) = \cS_i,
\end{align}
for $i\in\{1,2\}$ and some channel $\cN$. 
\end{definition}
For these superchannels we have the following.
\begin{lemma}\label{lemma:SideSeizable}
For two jointly side-channel seizable channels $\Theta_1$ and $\Theta_2$ we have
\begin{align}
D^{}(\Theta_1\|\Theta_2)  &= D(\cS_1\| \cS_2) \\
D^{sA}(\Theta_1\|\Theta_2) = D^{A}(\Theta_1\|\Theta_2) = D^{A^*}(\Theta_1\|\Theta_2) &= D^A(\cS_1\| \cS_2)
\end{align}
\end{lemma}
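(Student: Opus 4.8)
The plan is to sandwich every quantity between a bound that is already available and a matching lower bound extracted from the seizing superchannel $\Psi$, in direct analogy with the proof of Lemma~\ref{lemma:EnvSeizable}. Lemma~\ref{lemma:SidePara}, together with Lemma~\ref{lemma:ineq-DA}, already gives $D(\Theta_1\|\Theta_2)\le D(\cS_1\|\cS_2)$ and the chain $D^{sA}(\Theta_1\|\Theta_2)\le D^{A}(\Theta_1\|\Theta_2)\le D^{A^*}(\Theta_1\|\Theta_2)\le D^A(\cS_1\|\cS_2)$, so it remains only to prove the two lower bounds $D(\Theta_1\|\Theta_2)\ge D(\cS_1\|\cS_2)$ and $D^{sA}(\Theta_1\|\Theta_2)\ge D^A(\cS_1\|\cS_2)$; the second of these immediately squeezes all of $D^{sA}$, $D^{A}$ and $D^{A^*}$ to the common value $D^A(\cS_1\|\cS_2)$.

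For the first lower bound I would restrict the optimization defining $D(\Theta_1\|\Theta_2)$ to input channels of the product form $\cN_{A\to B}\otimes\id_R$; this yields $D(\Theta_1\|\Theta_2)\ge D(\Theta_1(\cN)\|\Theta_2(\cN))$ for every channel $\cN_{A\to B}$, where the right-hand side is the generalized channel divergence (with its own reference). Since $\Psi$ is a superchannel and the generalized channel divergence is monotone under superchannels\,---\,immediate from data processing of the underlying state divergence: pre-composition restricts the admissible inputs, post-composition is data processing, and the ancilla introduced by the superchannel only enlarges the available reference\,---\,we get $D(\Theta_1(\cN)\|\Theta_2(\cN))\ge D(\Psi(\Theta_1(\cN))\|\Psi(\Theta_2(\cN)))=D(\cS_1\|\cS_2)$ using $\Psi(\Theta_i(\cN))=\cS_i$. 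Combined with the upper bound this proves $D(\Theta_1\|\Theta_2)=D(\cS_1\|\cS_2)$.

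For the amortized quantities I would run the same argument with $D^A$ in place of $D$. Taking the reference system trivial in $D^{sA}(\Theta_1\|\Theta_2)=\sup_{\cN}D^A\big((\Theta_1\otimes\id_R)(\cN)\,\|\,(\Theta_2\otimes\id_R)(\cN)\big)$ gives $D^{sA}(\Theta_1\|\Theta_2)\ge D^A(\Theta_1(\cN)\|\Theta_2(\cN))$ for any channel $\cN_{A\to B}$. The third bullet of Lemma~\ref{lem:AMOproperties}\,---\,monotonicity of the amortized channel divergence under a superchannel\,---\,applied to $\Psi$ then yields $D^A(\Theta_1(\cN)\|\Theta_2(\cN))\ge D^A(\Psi(\Theta_1(\cN))\|\Psi(\Theta_2(\cN)))=D^A(\cS_1\|\cS_2)$, so $D^{sA}(\Theta_1\|\Theta_2)\ge D^A(\cS_1\|\cS_2)$ and the squeeze closes.

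The argument is essentially a two-sided estimate, so I do not expect a genuine obstacle; the two points that need a little care are that the map $\Psi$ of side-channel seizability is a bona fide superchannel (so that both monotonicity statements apply) and the ``typing'', namely that $\Psi$ transforms channels $C\to D$ into channels $S\to S$ so that $\Psi(\Theta_i(\cN))$ is literally the channel $\cS_i$. Restricting both optimizations to reference-free inputs $\cN_{A\to B}$ is precisely what lets me avoid any bookkeeping of the reference $R$. As an incidental point, the stated definition of ``side-channel seizable'' says ``environment parametrized'' where it should read ``side-channel parametrized'' in order for Lemma~\ref{lemma:SidePara} to apply verbatim; I would correct this wording along the way.
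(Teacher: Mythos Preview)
Your proposal is correct and follows essentially the same approach as the paper: the upper bounds come from Lemma~\ref{lemma:SidePara}, and the matching lower bounds come from applying the seizing superchannel $\Psi$ and invoking monotonicity of the (amortized) channel divergence under superchannels. Your observation that the definition should read ``side-channel parametrized'' rather than ``environment parametrized'' is also a genuine typo in the paper.
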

\begin{proof}
The proof follows easily because both the channel relative entropy and the amortized channel relative entropy are monotone under superchannels. 
\end{proof}
We can conclude that for side-channel seizable channels the task of asymptotic asymmetric discrimination reduces to channel discrimination with same rates as discriminating the parameterizing channels $\cS_1$ and $\cS_2$, which is a direct generaliztion of the earlier result for replacer channels.  

Finally it should be remarked that the decomposition in Equation~\ref{Eq:SideChannel} is not unique, similar to the discussion in Section~\ref{SSec:Networks}. In particular, for any such superchannel one can always find a different decomposition with the same structure where the channels $\cS_i$ are isometries which are perfectly discriminable by a finite number of rounds, i.e. any converse bound via Lemma~\ref{lemma:SidePara} would become infinite. Therefore one has to find the right decomposition in order to get the best converse bound. In the case of side-channel seizable superchannels that decomposition follows from the construction and is given by one that matches the achievable rate.

\section{Beyond Steins Lemma}\label{sec:BeyondStein}

So far we have mostly considered the asymptotic asymmetric discrimination setting, a.k.a. Steins setting. However, since we have stated many previous results in terms of generalized divergences they also allow us to state bounds on other settings, specifically the strong converse exponent and the symmetric setting. The results below can mostly be found by applying the meta-converses from the previous sections to these scenarios using techniques that can be found e.g. in~\cite{BHKW}. We therefore keep the discussion short. Also note that the examples in Section~\ref{sec:examples} often only use few properties of the relative entropy, e.g. its chain rule, and hence many of them can easily be transfered to the settings described in this section if the used quantities fulfill the same properties. 

\subsection{Symmetric Discrimination --- Chernoff's bound} 

Symmetric hypothesis testing describes an alternative scenario in which we aim to simultaneously minimize the two possible errors. 
This is sometimes also described as the Bayesian setting of hypothesis testing. Given an {\it a priori} probability $p\in(0,1)$ that the first superchannel $\Theta_1$ is selected, the non-asymptotic symmetric error exponent is defined as
\begin{align}
\xi^n(p,\Theta_1,\Theta_2)\coloneqq\sup_{\cS}-\frac{1}{n}\log\Big(p\cdot \alpha_n(\cS)+(1-p)\beta_n(\cS)\Big).
\end{align}
Given that the expression above involves an optimization over all final measurements $\cQ$, we can employ a well known result relating optimal error probability to trace distance, see also Section~\ref{sec:one-shot}, to conclude that
\begin{equation}
\xi^n(p,\Theta_1,\Theta_2)=\sup_{\cS}-\frac{1}{n}\log\left(\frac{1}{2}\left(1- \left\Vert p \rho- (1-p)\tau\right\Vert_1\right)\right),
\end{equation}
where $\rho$ and $\tau$ are the possible output states of the chosen strategy depending on whether the superchannel was $\Theta_1$ or $\Theta_2$, respectively. 
We are then interested in the asymptotic symmetric error exponent
\begin{align}
\underline{\xi}(\Theta_1,\Theta_2)\coloneqq\liminf_{n \to \infty} \xi_n(p,\Theta_1,\Theta_2), \\
\overline{\xi}(\Theta_1,\Theta_2)\coloneqq\limsup_{n \to \infty} \xi_n(p,\Theta_1,\Theta_2).
\end{align}

Thanks to the meta-converses in Section~\ref{sec:asymDisc}, we can easily get converse bounds on the superchannel Chernoff bound as well. 
\begin{theorem}\label{thm:convese-Chernoff}
For two superchannels $\Theta_1$ and $\Theta_2$, we have depending on the choice of strategy $\cS$, the following bounds:
\begin{align}
\xi_{p}^n(p,\Theta_1,\Theta_2) &\leq -\frac1n \log[p(1-p)] + \tilde D_{1/2}(\Theta_1 \| \Theta_2 ),  \\
\xi_{sp}^n(p,\Theta_1,\Theta_2) &\leq -\frac1n \log[p(1-p)] + \tilde D^{s\infty}_{1/2}(\Theta_1 \| \Theta_2 ), \\
\xi_{cp}^n(p,\Theta_1,\Theta_2) &\leq -\frac1n \log[p(1-p)] + \tilde D^{c\infty}_{1/2}(\Theta_1 \| \Theta_2 ), \\
\xi_{fp}^n(p,\Theta_1,\Theta_2) &\leq -\frac1n \log[p(1-p)] + \tilde D^{\infty}_{1/2}(\Theta_1 \| \Theta_2 ), \\
\xi_{sa}^n(p,\Theta_1,\Theta_2) &\leq -\frac1n \log[p(1-p)] + \tilde D^{sA}_{1/2}(\Theta_1 \| \Theta_2 ),  \\
\xi_{na}^n(p,\Theta_1,\Theta_2) &\leq -\frac1n \log[p(1-p)] + \tilde D^{A}_{1/2}(\Theta_1 \| \Theta_2 ),  \\
\xi_{fg}^n(p,\Theta_1,\Theta_2) &\leq -\frac1n \log[p(1-p)] + \tilde D^{A^*}_{1/2}(\Theta_1 \| \Theta_2 ), 
\end{align}
\end{theorem}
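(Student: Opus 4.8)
The proof will follow the same template as the Stein-setting converses in Section~\ref{sec:asymDisc}, now run with the choice $\bD=\widetilde{D}_{1/2}$ in place of the relative entropy. Two facts make this possible. First, $\widetilde{D}_{1/2}(\rho\|\sigma)=-\log F(\rho,\sigma)$ is a faithful generalized divergence which is additive on tensor products and monotone under channels, and its amortized channel divergence is monotone under superchannels; hence Lemmas~\ref{lem:AMOproperties}, \ref{lem:circ} and~\ref{lemma:ineq-DA}, the meta-converse for nested adaptive strategies, and Theorem~\ref{thm:mc-general} all hold verbatim with $\bD=\widetilde{D}_{1/2}$, and the induced superchannel divergences $\widetilde{D}^{s\infty}_{1/2},\widetilde{D}^{c\infty}_{1/2},\widetilde{D}^{\infty}_{1/2},\widetilde{D}^{sA}_{1/2},\widetilde{D}^{A}_{1/2},\widetilde{D}^{A^*}_{1/2}$ are well defined. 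Second, fix any strategy $\cS$ of the given class and let $\rho_n$ (under $\Theta_1$), $\tau_n$ (under $\Theta_2$) be the states passed to the final measurement. Then
\begin{align}
p\,\alpha_n(\cS)+(1-p)\beta_n(\cS)\;\geq\;\tfrac12\big(1-\|p\rho_n-(1-p)\tau_n\|_1\big)\;\geq\;p(1-p)\,F(\rho_n,\tau_n),
\end{align}
where the first inequality is Holevo--Helstrom and the second follows from Uhlmann's theorem (purify $\rho_n,\tau_n$, discard the purifying system, and use the pure-state Helstrom value together with $1-\sqrt{1-x}\ge x/2$). Since $F(\rho_n,\tau_n)=2^{-\widetilde{D}_{1/2}(\rho_n\|\tau_n)}$, applying $-\tfrac1n\log$ yields
\begin{align}
-\tfrac1n\log\!\Big(p\,\alpha_n(\cS)+(1-p)\beta_n(\cS)\Big)\;\leq\;-\tfrac1n\log[p(1-p)]\;+\;\tfrac1n\,\widetilde{D}_{1/2}(\rho_n\|\tau_n).
\end{align}

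It then remains to upper bound $\tfrac1n\widetilde{D}_{1/2}(\rho_n\|\tau_n)$ by the superchannel divergence appropriate to each class and take the supremum over $\cS$. For product strategies $\rho_n=\sigma_1^{\otimes n}$, $\tau_n=\sigma_2^{\otimes n}$ with $\sigma_i=(\Theta_i\otimes\id_R)(\cN)(\rho)$, so additivity of $\widetilde{D}_{1/2}$ and the definition of the superchannel divergence give $\tfrac1n\widetilde{D}_{1/2}(\rho_n\|\tau_n)=\widetilde{D}_{1/2}(\sigma_1\|\sigma_2)\leq\widetilde{D}_{1/2}(\Theta_1\|\Theta_2)$. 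For the three parallel classes, data processing bounds $\widetilde{D}_{1/2}(\rho_n\|\tau_n)$ by the $n$-copy (fully-, channel- or state-) regularized superchannel divergence; because optimal inputs can be tensored and $\widetilde{D}_{1/2}$ is additive, these $n$-copy quantities form superadditive sequences in $n$, so Fekete's lemma gives $\tfrac1n(\,\cdot\,)\leq\widetilde{D}^{\infty}_{1/2}$, $\widetilde{D}^{c\infty}_{1/2}$, resp.\ $\widetilde{D}^{s\infty}_{1/2}$ --- exactly the step already used for the corresponding Stein converses. For successive adaptive strategies, fixing the channel $\cN$ turns the protocol into an adaptive channel-discrimination protocol for $(\Theta_i\otimes\id_R)(\cN)$, so the amortized channel meta-converse of~\cite{BHKW} (valid for general divergences) gives $\widetilde{D}_{1/2}(\rho_n\|\tau_n)\leq n\,\widetilde{D}^{A}_{1/2}\big((\Theta_1\otimes\id_R)(\cN)\,\|\,(\Theta_2\otimes\id_R)(\cN)\big)$, and $\sup_\cN$ with Definition~\ref{Def:supAmoDiv} yields $\widetilde{D}^{sA}_{1/2}(\Theta_1\|\Theta_2)$. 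For nested adaptive strategies, the meta-converse for nested adaptive strategies with $\bD=\widetilde{D}_{1/2}$ gives $\widetilde{D}_{1/2}(\rho_n\|\tau_n)\leq n\,\widetilde{D}^{A}_{1/2}(\Theta_1\|\Theta_2)$; and for the fully general class, Theorem~\ref{thm:mc-general} gives $\widetilde{D}_{1/2}(\rho_n\|\tau_n)\leq n\,\widetilde{D}^{A^*}_{1/2}(\Theta_1\|\Theta_2)$. Substituting these into the displayed inequality and taking the supremum over $\cS$ in the corresponding class produces all seven stated bounds.

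The conceptual work is thus entirely carried by the meta-converses of Section~\ref{sec:asymDisc}; the new input is the single-shot fidelity bound. I expect the main obstacle to be bookkeeping rather than depth: one must check that $\widetilde{D}_{1/2}$ satisfies every hypothesis invoked (data processing, faithfulness, tensor additivity, and monotonicity of its amortized channel divergence under superchannels), deal with the usual $\infty-\infty$ caveat when ``adding a zero'' in the amortized chain rules (Lemma~\ref{lem:circ}) by passing to supports, and, for the parallel-strategy cases, make the Fekete/superadditivity argument precise for $\widetilde{D}_{1/2}$ --- which is structurally identical to the relative-entropy case already treated in Section~\ref{sec:asymDisc}.
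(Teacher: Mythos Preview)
Your proposal is correct and follows essentially the same approach as the paper: the paper cites the single-shot inequality $-\log\!\big(\tfrac12(1-\|p\rho-(1-p)\sigma\|_1)\big)\leq -\log[p(1-p)]+\widetilde D_{1/2}(\rho\|\sigma)$ from~\cite{BHKW} and then, just as you do, invokes the appropriate meta-converse (for the adaptive classes) or optimizes over inputs (for the parallel/regularized classes). Your write-up is simply more explicit on two points the paper leaves implicit --- the Uhlmann/pure-state derivation of the fidelity bound, and the superadditivity/Fekete step that passes from the $n$-copy divergence to the regularized one.
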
 
\begin{proof}
The results follow all similarly by first taking the following inequality from~\cite{BHKW}, 
\begin{align}
-\log\left( \frac12 (1-\| p\rho - (1-p)\sigma\|_1)\right) \leq -\log(p(1-p)) + \tilde D_{1/2}(\rho\|\sigma), 
\end{align}
and then either using the meta-converse for the corresponding strategy or, for the regularized quantities, simply optimizing over all possible channels and input states.
\end{proof}
In the case of the product strategy we can furthermore obtain the asymptotic result 
\begin{align}
\xi_{p}(\Theta_1,\Theta_2) = C(\Theta_1,\Theta_2),
\end{align}
by reduction to a state discrimination problem as previously described for the Stein's setting in Section~\ref{SSec:Prod}. From this we get for all the other strategies asymptotic statements of the following form:
\begin{align}
C(\Theta_1,\Theta_2) \leq \underline{\xi}_{fg}(\Theta_1,\Theta_2) \leq \overline{\xi}_{fg}(\Theta_1,\Theta_2) \leq \tilde D^{A^*}_{1/2}(\Theta_1 \| \Theta_2 )
\end{align}
in the case of fully general strategies and similarly for all the others. Note that also in the channel setting no tighter bounds are known.

\subsection{Strong converse exponent -- Han-Kobayashi}\label{sec:han-kob}

The strong converse exponent is a refinement of the asymmetric hypothesis testing quantity discussed above. For $r>0$, we are interested in characterizing the non-asymptotic quantity
\begin{align}
H^n(r,\Theta_1,\Theta_2)\coloneqq\inf_{\cS}\left\{-\frac{1}{n}\log(1-\alpha_n(\cS))\middle|\beta_n(\cS)\leq2^{-rn}\right\},
\end{align}
as well as the asymptotic quantities
\begin{equation}
\underline{H}(r,\Theta_1,\Theta_2)\coloneqq\liminf_{n \to \infty} H^n(r,\Theta_1,\Theta_2), \qquad \overline{H}(r,\Theta_1,\Theta_2)\coloneqq\limsup_{n \to \infty} H^n(r,\Theta_1,\Theta_2).
\end{equation}
The interpretation is that the type~II error probability is constrained to tend to zero exponentially fast at a rate $r > 0$, but then if $r$ is too large, the type~I error probability will necessarily tend to one exponentially fast, and we are interested in the exact rate of exponential convergence. Note that this strong converse exponent is only non-trivial if $r$ is sufficiently large.

Based on the results in Section~\ref{sec:asymDisc} and the techniques from~\cite{BHKW} we get the following result.
\begin{theorem}\label{Thm:StrongConverseExp}
For two superchannels $\Theta_1$ and $\Theta_2$, we have
\begin{align}
H_{p}^n(r,\Theta_1,\Theta_2) &\geq \sup_{\alpha>1}\frac{\alpha-1}{\alpha} \left( r -  \tilde D_{\alpha}(\Theta_1 \| \Theta_2 ) \right),  \\
H_{sp}^n(r,\Theta_1,\Theta_2) &\geq\sup_{\alpha>1}\frac{\alpha-1}{\alpha} \left( r -  \tilde D^{s\infty}_{\alpha}(\Theta_1 \| \Theta_2 ) \right), \\
H_{cp}^n(r,\Theta_1,\Theta_2) &\geq \sup_{\alpha>1}\frac{\alpha-1}{\alpha} \left( r -  \tilde D^{c\infty}_{\alpha}(\Theta_1 \| \Theta_2 ) \right), \\
H_{fp}^n(r,\Theta_1,\Theta_2) &\geq \sup_{\alpha>1}\frac{\alpha-1}{\alpha} \left( r -  \tilde D^{\infty}_{\alpha}(\Theta_1 \| \Theta_2 ) \right), \\
H_{sa}^n(r,\Theta_1,\Theta_2) &\geq \sup_{\alpha>1}\frac{\alpha-1}{\alpha} \left( r -  \tilde D^{sA}_{\alpha}(\Theta_1 \| \Theta_2 ) \right),  \\
H_{na}^n(r,\Theta_1,\Theta_2) &\geq \sup_{\alpha>1}\frac{\alpha-1}{\alpha} \left( r -  \tilde D^{A}_{\alpha}(\Theta_1 \| \Theta_2 ) \right),  \\
H_{fg}^n(r,\Theta_1,\Theta_2) &\geq \sup_{\alpha>1}\frac{\alpha-1}{\alpha} \left( r -  \tilde D^{A^*}_{\alpha}(\Theta_1 \| \Theta_2 ) \right), 
\end{align}
\end{theorem}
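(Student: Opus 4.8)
The plan is to follow the template of the proof of Theorem~\ref{thm:convese-Chernoff}: isolate a state-level bound on the strong converse exponent taken from~\cite{BHKW}, and then substitute into it the meta-converses of Section~\ref{sec:asymDisc} evaluated for the sandwiched R\'enyi divergence $\widetilde{D}_\alpha$ with $\alpha>1$. The point to keep in mind throughout is that, for $\alpha>1$, $\widetilde{D}_\alpha$ is a faithful generalized divergence that is additive under tensor products, so Lemma~\ref{lem:AMOproperties}, the meta-converse for nested adaptive strategies and Theorem~\ref{thm:mc-general} all apply to it verbatim, and the amortized/regularized superchannel quantities $\widetilde{D}^{sA}_\alpha,\widetilde{D}^{A}_\alpha,\widetilde{D}^{A^*}_\alpha,\widetilde{D}^{s\infty}_\alpha,\dots$ are well defined.

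\emph{Step 1: the state inequality.} Fix an admissible strategy $\cS$ with output states $\rho_1,\rho_2$ and final POVM $\{Q_1,Q_2\}$, inducing the classical decision distributions $p=(1-\alpha_n(\cS),\alpha_n(\cS))$ and $q=(\beta_n(\cS),1-\beta_n(\cS))$. Data processing of $\widetilde{D}_\alpha$ under the measurement channel, together with dropping the second (non-negative) term of the two-term classical R\'enyi sum, gives
\begin{align}
\widetilde{D}_\alpha(\rho_1\|\rho_2)\;\geq\;\widetilde{D}_\alpha(p\|q)\;\geq\;\frac{1}{\alpha-1}\log\!\big((1-\alpha_n(\cS))^\alpha\,\beta_n(\cS)^{1-\alpha}\big).
\end{align}
Multiplying by $\alpha-1>0$, using $\beta_n(\cS)\leq 2^{-rn}$ together with $1-\alpha<0$, and dividing by $-\alpha n<0$ rearranges this into
\begin{align}
-\frac1n\log\!\big(1-\alpha_n(\cS)\big)\;\geq\;\frac{\alpha-1}{\alpha}\Big(r-\frac1n\widetilde{D}_\alpha(\rho_1\|\rho_2)\Big),
\end{align}
which is exactly the single-system inequality underlying the state strong converse exponent in~\cite{BHKW}.

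\emph{Step 2: insert the meta-converses.} It remains, for each strategy class, to bound $\tfrac1n\widetilde{D}_\alpha(\rho_1\|\rho_2)$ by the associated superchannel quantity. For the product strategy $\rho_i$ is an $n$-fold tensor power of a single-copy output, so additivity of $\widetilde{D}_\alpha$ and optimisation over the fixed input state and intermediate channel give $\tfrac1n\widetilde{D}_\alpha(\rho_1\|\rho_2)\leq\widetilde{D}_\alpha(\Theta_1\|\Theta_2)$. For the successive adaptive, nested adaptive and fully general adaptive strategies one applies, respectively, the amortised channel meta-converse of~\cite{BHKW} after fixing the intermediate channel and optimising over it, the meta-converse for nested adaptive strategies, and Theorem~\ref{thm:mc-general} — all with $\bD=\widetilde{D}_\alpha$ — to obtain $\tfrac1n\widetilde{D}_\alpha(\rho_1\|\rho_2)\leq\widetilde{D}^{sA}_\alpha(\Theta_1\|\Theta_2)$, $\leq\widetilde{D}^{A}_\alpha(\Theta_1\|\Theta_2)$ and $\leq\widetilde{D}^{A^*}_\alpha(\Theta_1\|\Theta_2)$. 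For the three parallel families the same measurement-data-processing step bounds $\widetilde{D}_\alpha(\rho_1\|\rho_2)$ by a sandwiched-R\'enyi \emph{channel} divergence of $n$ parallel copies of the relevant induced channels, and one passes to $\widetilde{D}^{s\infty}_\alpha$, $\widetilde{D}^{c\infty}_\alpha$, $\widetilde{D}^{\infty}_\alpha$ by superadditivity of that channel divergence under tensor powers followed by the optimisation over the intermediate channel/input state, exactly as in the achievability discussion of Section~\ref{sec:asymDisc}. Substituting each of these bounds into the inequality of Step~1, taking the supremum over $\alpha>1$, and finally the infimum over admissible strategies, yields the seven displayed inequalities.

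\emph{Main obstacle.} The rearrangement of Step~1 is routine; the delicate part is the bookkeeping for the regularised cases ($sp,cp,fp$): for a \emph{fixed} $n$ the meta-converse only produces $\tfrac1n$ times a sandwiched-R\'enyi channel divergence of $n$ copies, and identifying this with the regularised superchannel divergence requires both superadditivity of the relevant channel sandwiched-R\'enyi divergence and care with the order of the limit and the optimisation over the intermediate channel (equivalently, a blocking argument over $N=nm$ copies with $m\to\infty$, as in~\cite{BHKW}). The only other thing to check is that $\widetilde{D}_\alpha$ with $\alpha>1$ genuinely satisfies every hypothesis used by the meta-converses invoked above — data processing, faithfulness, additivity under tensor products, and the amortised-channel-divergence properties of Lemma~\ref{lem:AMOproperties} — all of which are standard but should be recorded.
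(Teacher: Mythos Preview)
Your proof is correct and follows essentially the same route as the paper: start from the state-level inequality $\tfrac{\alpha}{\alpha-1}\log(1-\alpha_n(\cS))+nr\leq\widetilde D_\alpha(\rho_1\|\rho_2)$ quoted from~\cite{BHKW} (which you rederive in Step~1), and then plug in the meta-converse appropriate to each strategy class with $\bD=\widetilde D_\alpha$. The paper's proof is a two-line sketch that does not spell out the superadditivity/Fekete argument needed to pass from $\tfrac1n\widetilde D_\alpha$ of $n$ copies to the regularised quantities in the $sp$, $cp$, $fp$ cases; your ``Main obstacle'' paragraph correctly identifies and handles exactly this point, so if anything your write-up is more complete than the original.
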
 
\begin{proof}
The results follow all similarly by first taking the following inequality from~\cite{BHKW}, 
\begin{align}
\frac{\alpha}{\alpha-1}\log(1-\alpha_n(\cS)) +n r \leq \tilde D_{\alpha}(\rho\|\sigma), 
\end{align}
and then either using the meta-converse of the corresponding strategy or, for the regularized quantities, simply optimizing over all possible channels and input states.
\end{proof}
For more on the state and channel case of the strong converse exponent we refer to~\cite{mosonyi2015quantum,fawzi2020defining}.

\begin{remark}
A different refinement is the error exponent, or Hoeffdings bound, in the sense that the type~II error probability is constrained to decrease exponentially with exponent $r>0$. We are then interested in characterizing the  error exponent of the type~I error probability under this constraint. That is, we are interested in characterizing the non-asymptotic quantity
\begin{align}
B_n(r,\Theta_1,\Theta_2)\coloneqq\sup_{\cS}\left\{-\frac{1}{n}\log\alpha_n(\cS)\middle|\beta_n(\cS)\leq2^{-rn}\right\}.
\end{align}
Note that this error exponent is non-trivial only if $r$ is not too large. Already in the channel case this scenario is much more difficult to handle and bounds are only known in some special cases. We leave the investigation of this case for quantum superchannels and quantum networks for future research. For the state case we refer to~\cite{mosonyi2011quantum} and the channel case to~\cite{BHKW}. 
\end{remark}

\section{Quantum Networks}\label{sec:networks}
We have so far focused on superchannels, which are instances of networks with exactly one access point. In this section, we generalize the results to general quantum networks. Most of the tools used in this section are generalizations of the superchannel case and we therefore keep the presentation short. First we have to define the generalized divergences for networks. We start with the rather simple case of the generalized quantum network divergence. For two quantum networks, represented by $k$-combs $\Theta_1^k$ and $\Theta_2^k$, we have~\cite[Definition 1]{wang2019resource}
\begin{align}
\bD(\Theta_1^k \| \Theta_2^k) = \sup_{(\cA_i)_{k-1}, \rho} \bD( \Theta_1^k((\cA_i)_{k-1})(\rho) \| \Theta_2^k((\cA_i)_{k-1})(\rho) ), 
\end{align}
where $(\cA_i)_{k-1} = (\cA_1, \dots , \cA_{k-1})$ are the $k-1$ channels the $k$-comb acts on. 
From here a regularized divergence is defined in the usual way,
\begin{align}
\bD^\infty(\Theta_1^k \| \Theta_2^k) = \lim_{n\rightarrow\infty} \frac1n \bD((\Theta_1^k)^{\otimes n} \| (\Theta_2^k)^{\otimes n}) 
\end{align}
And lastly, we need to define the amortized quantum network divergences,
\begin{align}
\bD^A(\Theta_1^k \| \Theta_2^k) = \sup_{\substack{(\cA_i)_{k-1}, \\ (\bar \cA_i)_{k-1}}} \inf_{\Omega_1^{k-1}} &\bD^{A}(\Theta_1^k( (\cA_i)_{k-1}) \| \Theta_2^k( (\bar \cA_i)_{k-1}) )  \nonumber\\ 
&-  \bD^{A}( \cA_{k-1}\circ \Omega_1^{k-1}((\cA_i)_{k-2}) \| \bar \cA_{k-1}\circ\Omega_1^{k-1}((\bar \cA_i)_{k-2})),  \\
\bD^{A^*}(\Theta_1^k \| \Theta_2^k) = \sup_{\substack{(\cA_i)_{k-1}, \\ (\bar \cA_i)_{k-1}, \\ \cA_0,\bar\cA_0}} \inf_{\Omega_1^{k-1}} &\bD^{A}(\Theta_1^k( (\cA_i)_{k-1}) \circ\cA_0\| \Theta_2^k( (\bar \cA_i)_{k-1})\circ\bar\cA_0 )  \nonumber\\
&-  \bD^{A}( \cA_{k-1}\circ\Omega_1^{k-1}((\cA_i)_{k-2})\circ\cA_0 \| \bar \cA_{k-1}\circ\Omega_1^{k-1}((\bar \cA_i)_{k-2})\circ\bar\cA_0), 
\end{align}
which generalize the amortized superchannel divergence and the fully-amortized superchannel divergence. 
Here $\Omega_1^{k-1}$ is a $k-1$-comb that takes the role of the channel $\cF$ in the superchannel case. 
Note that just as in the case of superchannels, one can define several intermediate versions of the regularized and amortized quantum network divergences, e.g. one where only the input state or certain channels are amortized. The definitions follow similarly as in the superchannel case and we omit them here for brevity. 

Of course one now needs to classify the possible discrimination strategies in order to investigate their error performance. The previous discussion on superchannel discrimination should convince the reader that the number of possible strategies, with potentially different ability to discriminate, is too large to discuss them here one-by-one. We will instead limit the discussion to a few notable strategies, however remark that all in some way more limited strategies can as well be investigated similarly considering the suitable extensions of the superchannel setting. 

The first example is again the simple product strategy. By reduction to the state discrimination case one easily gets, 
\begin{align}
\zeta_{p}(\Theta^k_1,\Theta^k_2) &=  D(\Theta_1^k\|\Theta_2^k), \\
\xi_{p}(\Theta^k_1,\Theta^k_2) &=  C(\Theta_1^k\|\Theta_2^k).  
\end{align}
Furthermore, one can check that the proof strategy for parallel discrimination of superchannels in Section~\ref{Sec:ParMPC} also extends to the general network case. This leads to the following result in the Stein's setting for fully parallel strategies, 
\begin{align}
\zeta_{fp}(\Theta^k_1,\Theta^k_2) &=  D^\infty(\Theta_1^k\|\Theta_2^k).
\end{align}

The main result in this section is the following generalization of Theorem~\ref{thm:mc-general} to general quantum networks. 
\begin{theorem}[Meta-converse for arbitrary strategies]\label{thm:mc-qn-general}
For two $k$-combs $\Theta_1^k$ and $\Theta_2^k$ and any adaptive network discrimination strategy we have
\begin{align}
\mathbf{D}(p\Vert q) &\leq n \mathbf{D}^{A^*}(\Theta_1^k\|\Theta_2^k). \label{Eq:sec-ineq-gn}
\end{align}
\end{theorem}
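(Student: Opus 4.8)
The plan is to mimic the proof of Theorem~\ref{thm:mc-general} with the two components $\{\cE_{\Theta_i},\cD_{\Theta_i}\}$ of a superchannel replaced by the $k$ components $(\cN_i^1,\dots,\cN_i^k)$ of the comb $\Theta_i^k$, peeling off the last comb use one access point at a time. First I would describe an arbitrary adaptive network discrimination strategy as in Figure~\ref{Fig:gen-adap}: an input state $\rho$, an interleaving of the $nk$ comb components (the $k$ components of each of the $n$ uses of $\Theta_i^k$) with $nk-1$ adaptively chosen channels $\cA_\ell$ that never touch the internal wires of a comb use, subject to the causal constraint that the components of each comb use appear in the order $1,2,\dots,k$, and a final binary measurement. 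Writing $\rho_i=\Lambda_i(\rho)$ for the composite channel $\Lambda_i$ of the whole circuit, data processing under the measurement followed by the amortization step from the channel and superchannel proofs gives
\begin{align}
\mathbf{D}(p\Vert q)\le\mathbf{D}(\rho_1\Vert\rho_2)\le\mathbf{D}^A(\Lambda_1\Vert\Lambda_2).
\end{align}

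Next I would single out the comb use whose $k$-th component occurs latest, and (absorbing any trailing adaptive channels into it) assume it is the last operation of the circuit. Bundling all adaptive channels and components of the other $n-1$ comb uses that precede its first component into a channel $\cA_0$ (resp.\ $\bar\cA_0$), and those lying between its $j$-th and $(j{+}1)$-th components into a channel $\cB_j$ (resp.\ $\bar\cB_j$) for $j=1,\dots,k-1$, one has $\Lambda_1=\Theta_1^k((\cB_j)_{k-1})\circ\cA_0$ and $\Lambda_2=\Theta_2^k((\bar\cB_j)_{k-1})\circ\bar\cA_0$. I would then add and subtract the bridge term in which the removed comb is replaced on both sides by the fixed dummy $(k{-}1)$-comb $\Theta_1^{k,k-1}$,
\begin{align}
\mathbf{D}^A\bigl(\cB_{k-1}\circ\Theta_1^{k,k-1}((\cB_j)_{k-2})\circ\cA_0\,\big\Vert\,\bar\cB_{k-1}\circ\Theta_1^{k,k-1}((\bar\cB_j)_{k-2})\circ\bar\cA_0\bigr),
\end{align}
and replace the difference of the first amortized term and the bridge term by its supremum over $\cA_0,\bar\cA_0,\cB_j,\bar\cB_j$; by the definition of $\mathbf{D}^{A^*}$ for $k$-combs this is exactly $\mathbf{D}^{A^*}(\Theta_1^k\Vert\Theta_2^k)$, leaving $\mathbf{D}(p\Vert q)\le\mathbf{D}^{A^*}(\Theta_1^k\Vert\Theta_2^k)$ plus the bridge term.

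Then I would argue that the bridge term is $\mathbf{D}^A$ of the composite channels of a legal adaptive discrimination strategy for $n-1$ copies of the $k$-comb. Indeed, the inserted copies of $\Theta_1^{k,k-1}$ are identical on the two sides, so they can be merged into the neighbouring $\cB_j$ and adaptive channels as fixed, hypothesis-independent pieces, after which only the $(n{-}1)k$ components of the remaining comb uses carry the $\Theta_1^k$ versus $\Theta_2^k$ dependence; a final data-processing step re-exposes a fixed first and last component of this reduced circuit, exactly as removing $\cA_{2n-1}$ does in the superchannel proof, reducing to the same situation with $n$ replaced by $n-1$ (and a harmless, possibly larger reference system). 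Iterating the peeling $n$ times yields $\mathbf{D}(p\Vert q)\le n\,\mathbf{D}^{A^*}(\Theta_1^k\Vert\Theta_2^k)$.

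The hard part will be the combinatorial bookkeeping of the causal structure under a single peeling step: one must check that the last comb use can always be arranged to end the circuit and that the surrounding material splits cleanly into the $k$ channels $\cA_0,\cB_1,\dots,\cB_{k-1}$ feeding its input and $k-1$ access points, and that after substituting $\Theta_1^{k,k-1}$ for it and absorbing that dummy comb no access point of any other comb use ends up fed by material that depends on a component not yet applied in the reduced circuit, so that causality and the comb structure are preserved. Once this structural point is settled, the displayed chain of inequalities is the verbatim analogue of the one in Theorem~\ref{thm:mc-general}, needing only data processing of $\mathbf{D}$ and $\mathbf{D}^A$ together with the definition of $\mathbf{D}^{A^*}$ for networks.
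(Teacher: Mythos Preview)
Your proposal is correct and follows essentially the same approach as the paper, which simply states that the proof is a direct extension of Theorem~\ref{thm:mc-general} to $k$-combs by iteratively removing the currently last comb while allowing its access-point channels to contain components of earlier combs. In fact you spell out considerably more detail than the paper does, including the exact bridge term matching the definition of $\mathbf{D}^{A^*}$ for networks and the causal bookkeeping needed for the reduction step, which the paper leaves implicit.
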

\begin{proof}
The proof is a direct extension of that of Theorem~\ref{thm:mc-general} considering $k$-combs while always assuming that when removing the currently last comb all access point might act on channel sequences which include components of previous combs. Iteratively removing all the combs leads to the desired result. 
\end{proof}
This covers the most general setting and allows us to extend the discrimination results from the superchannels case to networks, where we get fundamental converse bounds on the asymptotic distinguishability of two quantum networks. 
\begin{theorem}
For two $k$-combs $\Theta_1^k$ and $\Theta_2^k$ and any adaptive network discrimination strategy we have in the Stein's setting
\begin{align}
\zeta^n_{fg}(\epsilon,\Theta_1^k,\Theta_2^k)   \leq  \frac{1}{1-\eps} \left(  D^{A^*}(\Theta_1^k\|\Theta_2^k) +\frac1n h_2(\varepsilon)\right),
\end{align}
in the symmetric Chernoff setting,
\begin{align}
\xi_{fg}^n(p,\Theta_1^k,\Theta_2^k) &\leq -\frac1n \log[p(1-p)] + \tilde D^{A^*}_{1/2}(\Theta_1^k \| \Theta_2^k ).
\end{align}
and for the strong converse exponent,
\begin{align}
H_{fg}^n(r,\Theta_1^k,\Theta_2^k) &\geq \sup_{\alpha>1}\frac{\alpha-1}{\alpha} \left( r -  \tilde D^{A^*}_{\alpha}(\Theta_1^k \| \Theta_2^k ) \right).
\end{align}
\end{theorem}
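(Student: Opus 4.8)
The plan is to obtain all three displayed bounds as immediate corollaries of the network meta-converse in Theorem~\ref{thm:mc-qn-general}, by instantiating the generalized divergence $\mathbf{D}$ appropriately and then invoking, for each setting, the same one-shot conversion inequalities that were already used for superchannels in Sections~\ref{sec:asymDisc} and~\ref{sec:BeyondStein} (and which trace back to~\cite{HP91,BHKW}). Throughout, fix an arbitrary fully general adaptive network discrimination strategy $\cS$ with final binary outcome distributions $p$ (under $\Theta_1^k$) and $q$ (under $\Theta_2^k$); Theorem~\ref{thm:mc-qn-general} gives $\mathbf{D}(p\Vert q)\le n\,\mathbf{D}^{A^*}(\Theta_1^k\Vert\Theta_2^k)$ for every generalized divergence $\mathbf{D}$, since its proof only uses data processing under channels and measurements together with the amortization and monotonicity properties recorded in Lemmas~\ref{lem:AMOproperties}, \ref{lem:circ}, and~\ref{lemma:ineq-DA}, all of which hold for $D$, $\widetilde D_{1/2}$, and $\widetilde D_\alpha$ with $\alpha>1$.

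For the Stein bound, assume $\alpha_n(\cS)\le\varepsilon$. Equation~\eqref{Eq:relEntLB} applied to the binary distributions $p,q$ gives $D(p\Vert q)\ge -h_2(\varepsilon)-(1-\varepsilon)\log\beta_n(\cS)$, hence $-\log\beta_n(\cS)\le\frac{1}{1-\varepsilon}\big(D(p\Vert q)+h_2(\varepsilon)\big)$. Taking $\mathbf{D}=D$ in the meta-converse, $D(p\Vert q)\le n\,D^{A^*}(\Theta_1^k\Vert\Theta_2^k)$; dividing by $n$ and taking the supremum over admissible strategies yields $\zeta^n_{fg}(\epsilon,\Theta_1^k,\Theta_2^k)\le\frac{1}{1-\eps}\big(D^{A^*}(\Theta_1^k\Vert\Theta_2^k)+\frac1n h_2(\varepsilon)\big)$, exactly as in Equation~\eqref{Eq:fg-converse-Stein}. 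For the symmetric Chernoff bound, take $\rho,\tau$ to be the output states of $\cS$ and use the inequality of~\cite{BHKW}, $-\log\!\big(\tfrac12(1-\Vert p\rho-(1-p)\tau\Vert_1)\big)\le-\log(p(1-p))+\widetilde D_{1/2}(\rho\Vert\tau)$; instantiating the meta-converse with $\mathbf{D}=\widetilde D_{1/2}$ bounds $\widetilde D_{1/2}(\rho\Vert\tau)\le n\,\widetilde D^{A^*}_{1/2}(\Theta_1^k\Vert\Theta_2^k)$, and optimizing over strategies gives the claimed bound on $\xi^n_{fg}$. For the strong converse exponent, assume $\beta_n(\cS)\le2^{-rn}$ and use, for each $\alpha>1$, the inequality of~\cite{BHKW}, $\frac{\alpha}{\alpha-1}\log(1-\alpha_n(\cS))+nr\le\widetilde D_\alpha(\rho\Vert\tau)$; applying the meta-converse with $\mathbf{D}=\widetilde D_\alpha$ gives $\widetilde D_\alpha(\rho\Vert\tau)\le n\,\widetilde D^{A^*}_\alpha(\Theta_1^k\Vert\Theta_2^k)$, so rearranging for $-\frac1n\log(1-\alpha_n(\cS))$, taking the infimum over admissible strategies, and then the supremum over $\alpha>1$ produces $H^n_{fg}(r,\Theta_1^k,\Theta_2^k)\ge\sup_{\alpha>1}\frac{\alpha-1}{\alpha}\big(r-\widetilde D^{A^*}_\alpha(\Theta_1^k\Vert\Theta_2^k)\big)$.

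Since Theorem~\ref{thm:mc-qn-general} is already available, each of the three arguments is a short packaging step, and the only point that genuinely requires care is confirming that the meta-converse is stated and proved for a \emph{generic} generalized divergence rather than for the relative entropy alone; granting that (which the excerpt does), there is no real obstacle. If one wished to be fully self-contained one would additionally note that $\widetilde D_\alpha$ for $\alpha>1$ satisfies the amortization inequalities entering the proof of Theorem~\ref{thm:mc-general}/\ref{thm:mc-qn-general}, so that $\widetilde D^{A^*}_\alpha$ is well-defined and the iteration removing the $n$ combs goes through verbatim; this is the step I would expect to need the most bookkeeping, but it is entirely analogous to the superchannel case and introduces no new ideas. (Alternatively, the Chernoff and strong-converse statements can be seen as special cases of Theorem~\ref{thm:convese-Chernoff} and Theorem~\ref{Thm:StrongConverseExp} with the meta-converse for superchannels replaced by its network version, which is the route I would actually write up.)
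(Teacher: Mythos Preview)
Your proposal is correct and matches the paper's own proof essentially verbatim: the paper simply states that the three bounds follow analogously to Equation~\eqref{Eq:fg-converse-Stein}, Theorem~\ref{thm:convese-Chernoff}, and Theorem~\ref{Thm:StrongConverseExp} by plugging the network meta-converse of Theorem~\ref{thm:mc-qn-general} into the appropriate generalized divergence, which is precisely the argument you have written out in detail. Your closing parenthetical remark is in fact exactly the route the paper takes.
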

\begin{proof}
This follows analogously to Equation~\eqref{Eq:fg-converse-Stein}, Theorem~\ref{thm:convese-Chernoff} and Theorem~\ref{Thm:StrongConverseExp}, using the meta-converse in Theorem~\ref{thm:mc-qn-general} applied to the appropriate generalized divergence. 
\end{proof}

Combining the above results implies that for the best possible (fully general) strategy, we have
\begin{align}
 D^\infty(\Theta_1^k\|\Theta_2^k) &\leq  \zeta_{fg}(\Theta^k_1,\Theta^k_2) \leq   D^{A^*}(\Theta_1^k\|\Theta_2^k) ,\\
  C(\Theta_1^k\|\Theta_2^k) &\leq  \xi_{fg}(\Theta^k_1,\Theta^k_2) \leq   \tilde D^{A^*}_{1/2}(\Theta_1^k\|\Theta_2^k) .
\end{align}

Of course the crucial question is now for which channels those inequalities become equalities. A simple example we can discuss here is that of classical networks for which the amortized network relative entropy always collapses. 
\begin{lemma}
For two classical networks $\theta_1^k$ and $\theta_2^k$, we have
\begin{align}
D(\theta_1^k\|\theta_2^k) =  D^{A^*}(\theta_1^k\|\theta_2^k).
\end{align}
\end{lemma}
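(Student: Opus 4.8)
The plan is to generalize the proof of Theorem~\ref{thm:class-superchannel} from superchannels to $k$-combs. The inequality $D(\theta_1^k\|\theta_2^k)\le D^{A^*}(\theta_1^k\|\theta_2^k)$ holds without any classicality assumption: in the definition of $\bD^{A^*}$ one sets the barred access channels equal to the unbarred ones and $\cA_0=\bar\cA_0=\id$, so that the subtracted amortized channel divergence is taken of a channel with itself and hence vanishes by faithfulness of the amortized channel divergence (Lemma~\ref{lem:AMOproperties}), while $\bD^A\ge\bD$ by the same lemma; optimizing over the access channels then reproduces $D(\theta_1^k\|\theta_2^k)$. Hence it suffices to establish the reverse inequality $D^{A^*}(\theta_1^k\|\theta_2^k)\le D(\theta_1^k\|\theta_2^k)$.

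For this I would use, exactly as in Theorem~\ref{thm:class-superchannel}, that for any pair of classical channels the amortized channel relative entropy collapses to the ordinary channel relative entropy, $D^A(\cN\|\cM)=D(\cN\|\cM)$, a consequence of Hayashi's result~\cite{Hayashi09}. Substituting this into the definition of $\bD^{A^*}$ for $k$-combs, and using that $\Theta_i^k((\cA_j)_{k-1})\circ\cA_0$ is the linear composition $\cN_i^k\circ\big(\cA_{k-1}\circ\Theta_i^{k,k-1}((\cA_j)_{k-2})\big)\circ\cA_0$ (the access channels not being allowed to act on the intermediate side systems $S_j$), one is left with
\[
D^{A^*}(\theta_1^k\|\theta_2^k)=\sup\Big[\,D\big(\cN_1^k\circ\cM_1\,\big\|\,\cN_2^k\circ\cM_2'\big)-D\big(\cM_1\,\big\|\,\cM_2\big)\Big],
\]
where $\cM_1=\cA_{k-1}\circ\theta_1^{k,k-1}((\cA_j)_{k-2})\circ\cA_0$, $\cM_2=\bar\cA_{k-1}\circ\theta_1^{k,k-1}((\bar\cA_j)_{k-2})\circ\bar\cA_0$, $\cM_2'=\bar\cA_{k-1}\circ\theta_2^{k,k-1}((\bar\cA_j)_{k-2})\circ\bar\cA_0$, and the supremum runs over all classical access channels and all classical $\cA_0,\bar\cA_0$.

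The remaining, and main, step is the $k$-comb version of the identity ``checked by direct calculation'' in the proof of Theorem~\ref{thm:class-superchannel}: that the supremum above equals $\sup_{(\cA_j)_{k-1},\cA_0}D\big(\theta_1^k((\cA_j)_{k-1})\circ\cA_0\,\big\|\,\theta_2^k((\cA_j)_{k-1})\circ\cA_0\big)$, i.e.\ that the optimal barred maps may be taken to coincide with the unbarred ones (so that $\cM_2=\cM_1$ and the subtracted term drops). The ``$\ge$'' direction is immediate. For ``$\le$'' one works with the explicit classical conditional distributions: the channel divergence in the first term is attained on a single deterministic input letter, fixing which reduces both terms to relative entropies between fixed distributions; a chain-rule argument along the wire feeding $\cN_i^k$, combined with the fact that the downstream comb component $\cN_i^k$ is common to all strategies and with data processing / the classical version of Lemma~\ref{lem:circ} for the upstream segments, then removes the barred maps from the supremum. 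A final application of data processing removes the free pre-processing $\cA_0$, yielding $\sup_{(\cA_j)_{k-1}}D(\theta_1^k((\cA_j)_{k-1})\|\theta_2^k((\cA_j)_{k-1}))=D(\theta_1^k\|\theta_2^k)$ and closing the chain. I expect this ``$\le$'' step to be the main obstacle: for $k>2$ the maps $\cM_1,\cM_2,\cM_2'$ conceal $k-1$ nested, adaptively chosen access channels interleaved with the comb components, the mismatch between the $\theta_1^{k,k-1}$ in $\cM_2$ and the $\theta_2^{k,k-1}$ in $\cM_2'$ must be handled with care, and one must verify that the chain-rule bookkeeping that is transparent for the single access point of a superchannel still isolates the subtracted term at the level of a general comb; when $D^{A^*}(\theta_1^k\|\theta_2^k)=+\infty$ one instead checks directly that $D(\theta_1^k\|\theta_2^k)=+\infty$, so the identity is trivial in that case.
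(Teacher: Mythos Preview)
Your proposal is correct and follows essentially the same approach as the paper: the paper's own proof consists of the single sentence ``The proof is similar to that of Theorem~\ref{thm:class-superchannel},'' and you have unpacked exactly what that generalization entails---collapsing $D^A$ to $D$ on the classical components and then reducing the barred/unbarred supremum via the classical chain-rule step. Your explicit flagging of the mismatch between $\theta_1^{k,k-1}$ and $\theta_2^{k,k-1}$ in the subtracted term, and of the inductive bookkeeping across the $k-1$ access points, is appropriate caution: these are precisely the details the paper elides, and your outline of how to handle them (fix a deterministic input, apply the classical chain rule at the last wire, use data processing to peel off $\cA_0$) is the natural route.
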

\begin{proof}
The proof is similar to that of Theorem~\ref{thm:class-superchannel}. 
\end{proof}
It follows that for two classical networks product strategies are optimal for discrimination in the Stein's setting. 
The bounds presented here give fundamental limits on the ability to discriminate between two quantum networks, however, it is also clear that many important questions in the general setting are still unanswered and invite further research.

\section{Applications }\label{sec:applications}
Quantum channel discrimination has found many applications~\cite{Hayashi09,Cooney2016,L08}. Often it can make sense to generalize these applications to networks, for example when we want to allow for additional interaction in a given protocol. In the following we will discuss a generalization of the well known quantum illumination problem in which we allow for a relay station on the object we wish to detect that can aid by altering the signal it receives. 

\subsection{Active quantum illumination}

Quantum illumination~\cite{L08} describes the task of detecting the presence or absence of a certain object making use of quantum tools to enhance detection. There are usually two possible scenarios to consider, both start with shining a light source at the suspected location of an object. In the first case, one assumes that either the object is missing and the light can pass unhindered or the object is present and the light is blocked. Let's say the default noise the light probe $\rho$ is experiencing is given by a channel $\cN$, then the receiver behind the object will either receive $\cN(\rho)$ if the probe passes or the fixed state $\tau$ if the probe is blocked. As $\rho$ can be subject to optimization, the resulting task is equivalent to discriminating the channels $\cN$ and $\cR_\tau$, with the latter being the associated replacer channel. In the second case, the position of the detector is changed such that a missing object is represented by receiving $\tau$ while the presence of the object gives the reflection of the probe, i.e. $\cN(\rho)$, therefore the role of null and alternative hypothesis are interchanged. 

In~\cite{Cooney2016} it was shown that at least in the first case, adaptive strategies do not give any advantage in many scenarios, namely when one is interested in the error rates according to Stein's Lemma or the strong converse scenario. See also~\cite{BHKW} for a proof based on amortized channel divergences. 

In the quantum illumination task described above, the object is considered passive in the sense that it can only absorb or alternatively reflect the object. Here we propose active quantum illumination in which we allow the object to actively aid the discrimination task by receiving the probe state, acting on it and sending it on. As typically one might use quantum illumination to detect an object in space, we might already have a relay station on that object capable of using quantum optics to aid its observation. In full generality, the task now becomes to discriminate between the superchannel output $\Theta_1(\cN)(\rho)$ and a fixed state $\tau$, i.e. the replacer superchannel $\Theta_{\cR_\tau}$ that always outputs the replacer channel $\cR_\tau$. 

While we leave the general investigation of the problem to future work, we show here that in the case of the superchannel Stein's setting the simple product strategy is optimal, giving an extension of the results in~\cite{Cooney2016}. 
\begin{lemma}\label{Lem:Illum}
Let $\Theta_1$ be an arbitrary superchannel and $\Theta_{\cR_\tau}$ the replacer superchannel that always outputs the replacer channel $\cR_\tau$. We have, 
\begin{align}
\zeta_p(\Theta_1, \Theta_{\cR_\tau}) = \zeta_{fg}(\Theta_1, \Theta_{\cR_\tau}) = D(\Theta_1 \| \Theta_{\cR_\tau} ). 
\end{align}
\end{lemma}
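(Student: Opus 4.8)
The claim is a three–way equality, so the plan is to establish the two nontrivial inclusions around the achievable rate. First, $\zeta_p(\Theta_1,\Theta_{\cR_\tau}) = D(\Theta_1\|\Theta_{\cR_\tau})$ is exactly the product–strategy result of Section~\ref{SSec:Prod}, and $\zeta_p(\Theta_1,\Theta_{\cR_\tau})\le\zeta_{fg}(\Theta_1,\Theta_{\cR_\tau})$ since the product strategy is a special case of a fully general adaptive one. Hence everything reduces to the converse $\zeta_{fg}(\Theta_1,\Theta_{\cR_\tau})\le D(\Theta_1\|\Theta_{\cR_\tau})$, i.e.\ to showing that adaptivity buys nothing when the alternative hypothesis is a replacer superchannel.

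For the converse I would exploit the defining feature of $\Theta_{\cR_\tau}$: both of its components discard their inputs and reset (the $\cE$–component prepares a fixed dummy on $A$, the $\cD$–component outputs the fixed state $\tau$ on $D$), independently of what is plugged in. Consequently, for any $n$-use adaptive strategy the output state under the hypothesis $\Theta_{\cR_\tau}$ is a fixed state $\sigma_n$ determined by the strategy alone, and the task is a discrimination of a single state $\rho_n$ (under $\Theta_1$) against the fixed state $\sigma_n$. The standard one–shot bound, Eq.~\eqref{Eq:relEntLB} together with data processing under the final measurement, then gives $\zeta^n_{fg}(\eps,\Theta_1,\Theta_{\cR_\tau})\le\frac{1}{1-\eps}\big(\tfrac1n D(\rho_n\|\sigma_n)+\tfrac1n h_2(\eps)\big)$, so it suffices to prove the per-$n$ estimate $D(\rho_n\|\sigma_n)\le n\,D(\Theta_1\|\Theta_{\cR_\tau})$ and then let $\eps\to 0$.

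I would prove $D(\rho_n\|\sigma_n)\le n\,D(\Theta_1\|\Theta_{\cR_\tau})$ by induction on $n$, peeling off the copy of the superchannel whose $\cD$–component is applied last (as in the proof of Theorem~\ref{thm:mc-general}). Because $\sigma_n = \tau_D\otimes\sigma_n'$, the exact chain rule $D(\rho_n\|\tau_D\otimes\sigma_n') = D(\rho_n\|\tau_D\otimes\tr_D\rho_n)+D(\tr_D\rho_n\|\sigma_n')$ splits the divergence. For the first term, writing $\rho_n=(\Theta_1\otimes\id)(\cC)(\psi)$ where $\cC$ is the part of the protocol sitting between the peeled copy's $\cE$– and $\cD$–components (which may itself contain the other $n-1$ copies) and $\psi$ is the state fed into that $\cE$, one applies the chain rule in reverse with the comparison state $(\Theta_{\cR_\tau}\otimes\id)(\cC)(\psi)$ and nonnegativity of the relative entropy to get $D(\rho_n\|\tau_D\otimes\tr_D\rho_n)\le D\big((\Theta_1\otimes\id)(\cC)(\psi)\,\|\,(\Theta_{\cR_\tau}\otimes\id)(\cC)(\psi)\big)\le D(\Theta_1\|\Theta_{\cR_\tau})$ by the definition of the superchannel divergence (enlarging the reference system as needed, which is WLOG). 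For the second term one notes that tracing out $D$ means not applying the peeled $\cD$ and discarding its $B,S$ wires; absorbing the peeled $\cE_{\Theta_1}$ (a fixed channel $C\to A$ after tracing $S$) and using, on the $\Theta_{\cR_\tau}$ side, that the corresponding slot is a replacer channel $\cR_\pi:C\to A$ with $\cR_\pi=\cR_\pi\circ\cE_{\Theta_1}'$, one rewrites $D(\tr_D\rho_n\|\sigma_n')$ as (a bound on) the output divergence of an $(n-1)$-use strategy in which the peeled slot has been fixed to the common channel $\cR_\pi$, which is $\le (n-1)D(\Theta_1\|\Theta_{\cR_\tau})$ by the induction hypothesis. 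Adding the two bounds closes the induction, and combining with the first paragraph yields the stated equalities.

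The step I expect to be the main obstacle is precisely the treatment of the second term: since the peeled copy's input–side component is hypothesis–dependent ($\cE_{\Theta_1}$ versus a discard–and–reset channel) \emph{and} is applied to a state that already differs between the two hypotheses, tracing out $D$ does not by itself reduce the problem to a genuine $(n-1)$-round discrimination instance. One must replace the peeled slot by a single channel common to both hypotheses — this is the role of the auxiliary $\cE_{\Theta_1}$ appearing in both arguments of the subtracted term in the definition of $D^{A^*}$ — and then verify via a data–processing argument that doing so does not decrease the relative entropy. Equivalently, the entire converse can be phrased as the statement $D^{A^*}(\Theta_1\|\Theta_{\cR_\tau})=D(\Theta_1\|\Theta_{\cR_\tau})$ (the collapse of the fully–amortized superchannel divergence when the alternative is a constant–output replacer, the superchannel analogue of the replacer–channel collapse underlying the passive quantum illumination results of~\cite{Cooney2016,BHKW}), after which the converse follows immediately from Theorem~\ref{thm:mc-general} and Eq.~\eqref{Eq:fg-converse-Stein}; the combination with Lemma~\ref{lemma:ineq-DA} and faithfulness of the relative entropy gives the reverse inequality $D(\Theta_1\|\Theta_{\cR_\tau})\le D^{A^*}(\Theta_1\|\Theta_{\cR_\tau})$ for free.
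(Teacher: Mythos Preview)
Your high-level reduction is correct: the equality $\zeta_p = D(\Theta_1\|\Theta_{\cR_\tau})$ is Section~\ref{SSec:Prod}, the inequality $\zeta_p\le\zeta_{fg}$ is trivial, and the remaining converse does follow once one establishes the collapse $D^{A^*}(\Theta_1\|\Theta_{\cR_\tau})\le D(\Theta_1\|\Theta_{\cR_\tau})$ and invokes Eq.~\eqref{Eq:fg-converse-Stein}. Your chain-rule split $D(\rho_n\|\tau_D\otimes\sigma_n')=D(\rho_n\|\tau_D\otimes\tr_D\rho_n)+D(\tr_D\rho_n\|\sigma_n')$ is valid, and your bound on the first term by $D(\Theta_1\|\Theta_{\cR_\tau})$ is sound.

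The gap is your treatment of the second term. After absorbing $\cE_{\Theta_1}'=\tr_S\circ\cE_{\Theta_1}$ into the protocol and rewriting $\cR_\pi=\cR_\pi\circ\cE_{\Theta_1}'$ on the $\Theta_{\cR_\tau}$ side, the peeled slot still carries $\id_A$ in the first argument versus $\cR_\pi$ in the second. ``Fixing the slot to the common channel $\cR_\pi$'' means replacing $\id_A$ by $\cR_\pi$ inside the first argument; this is applying a channel on an internal wire and is \emph{not} a post-processing of the output state, so data processing gives the opposite inequality. You correctly flag this as the main obstacle, but the sketched resolution does not go through, and consequently neither the induction nor the equivalent collapse $D^{A^*}=D$ is actually established in your proposal.

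The paper closes this gap differently: it does not attempt the per-round induction but instead bounds $D^{A^*}(\Theta_1\|\Theta_{\cR_\tau})$ by the auxiliary quantity $D^{\tilde A}(\Theta_1\|\Theta_{\cR_\tau})$ via Lemma~\ref{Lem:AleqA} (a nontrivial result proved separately in Appendix~\ref{Ap:amortize}). In $D^{\tilde A}$ the subtracted term already contains $\cE_{\Theta_1}$ in \emph{both} arguments by definition, so a single data-processing step followed by the chain-rule identity $D(\omega_{DR}\|\tau_D\otimes\nu_R)-D(\omega_R\|\nu_R)=D(\omega_{DR}\|\tau_D\otimes\omega_R)$ (cf.~\cite[Lemma~38]{BHKW}) removes the amortization cleanly and yields $D(\Theta_1\|\Theta_{\cR_\tau})$. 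The passage through $D^{\tilde A}$ is precisely what handles the hypothesis-dependent encoder and is the missing technical ingredient in your plan.
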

\begin{proof}
Based on the results from the previous sections the only thing left to show is that 
\begin{align}
D^{A^*}(\Theta_1 \| \Theta_{\cR_\tau} ) \leq D(\Theta_1 \| \Theta_{\cR_\tau} ). 
\end{align}
We will do so here,
\begin{align*}
D^{A^*}(\Theta_1 \| \Theta_{\cR_\tau} ) &\leq \sup_{\cN,\cM,\rho,\sigma} D(\Theta_1(\cN)(\rho) \| \Theta_{\cR_\tau}(\cM)(\sigma)) - D(\cN\circ\cE_1(\rho) \| \cM\circ\cE_1(\sigma)) \\
&= \sup_{\cN,\cM,\rho,\sigma} D(\Theta_1(\cN)(\rho) \| \tau\otimes\cM(\sigma)_R) - D(\cN\circ\cE_1(\rho) \| \cM\circ\cE_1(\sigma))  \\
&\leq \sup_{\cN,\cM,\rho,\sigma} D(\Theta_1(\cN)(\rho) \| \tau\otimes\cM(\sigma)_R) - D(\cN(\rho)_R \| \cM(\sigma)_R)  \\
&= \sup_{\cN,\rho} D(\Theta_1(\cN)(\rho) \| \tau\otimes\cN(\rho)_R) \\ 
&= D(\Theta_1 \| \Theta_{\cR_\tau} ),
\end{align*}
where the first inequality follows from Lemma~\ref{Lem:AleqA} which we will prove in Appendix~\ref{Ap:amortize}, the first equality follows by definition of $\Theta_{\cR_\tau}$, the second inequality is data-processing. The subscript $R$ denotes restriction on the $R$ system, i.e. the result of tracing out all other systems. For the second equality notice that $\cN(\rho)_R=\tr_D \Theta_1(\cN)(\rho)$ and by direct calculation one can confirm the rule
\begin{align*}
D(\rho_{AR} \| \sigma_A \otimes \sigma_R) - D(\rho_R \| \sigma_R) = D(\rho_{AR} \| \sigma_A \otimes \rho_R),
\end{align*}
using $\log (A\otimes B) = \log A \otimes \Id + \Id \otimes \log B$, see also~\cite[Lemma 38]{BHKW}. The final equality is by definition. This concludes the proof.
\end{proof}

\section{Conclusions}\label{sec:conclusions}

In this manuscript we have extended the framework of quantum hypothesis testing to the setting of discriminating two quantum networks when many uses are available. The additional structure provided by networks makes this a rich and enticing problem and we have discussed a host of potential discrimination strategies. For each strategy we provided a converse bound, giving a fundamental limit to their performance. In most cases we also discussed achievability of those converse bounds, determining the optimal asymptotic error behavior. 

Nevertheless, many open problems remain. First and foremost, the achievability for the fully general class of strategies. Most interestingly, could there be some class of strategies that can outperform the nested adaptive strategies? It seems that this is indeed a possibility, based on our converse bounds. That would imply that there are particular strategies that outperform parallel strategies which would be a departure from the results in the channel discrimination setting. It is however also entirely possible that one can find a sharper converse bound that improves on the result presented in this work.

Secondly, proving the asymptotic equipartition property for the smooth max-relative entropy as discussed in Appendix~\ref{Ap:chainRules}. This would allow us to to simplify the relationship between several classes of strategies, show directly that our bound for nested adaptive strategies in the Stein's setting is achievable and therefore optimal, and also be valuable in its own right. 

Additionally, there are a lot of other open problems, for examples a better understanding of the symmetric hypothesis testing setting, even for the case of channels. In particular in the network setting, it would also be interesting to look at more limited classes of strategies, e.g. where different parts of the networks can only be acted on by locally separated parties, which is commonly known as distributed hypothesis testing. In the opposite direction, one could also look at more general settings including non-causal strategies and explore whether the tools developed in this work can also apply there. It has previously been shown that non-causal strategies can outperform causal ones in certain settings~\cite{bavaresco2021strict}.  Finally, considering the numerous applications of state and channel discrimination it should be worthwhile to further investigate those of network discrimination, for example as a tool to determine bounds on the communication rates over different network architectures.

\begin{acknowledgments}
I would like to thank Mario Berta, Mark M. Wilde and Andreas Winter for helpful comments and references. Furthermore, I acknowledge financial support from VILLUM FONDEN via the QMATH Centre of Excellence (Grant No.10059) and the QuantERA ERA-NET Cofund in Quantum Technologies implemented within the European Union’s Horizon 2020 Programme (QuantAlgo project) via the Innovation Fund Denmark.
\end{acknowledgments}

\appendix


\section{AEPs and chain rules} \label{Ap:chainRules}

The aim of this section will be to develop tools that allow us to better compare amortized and regularized superchannel relative entropies. We will show, given that a certain technical assumption is true, that the regularized superchannel relative entropy defined in the main text is equal to the amortized superchannel relative entropy. The path will lead us via generalizing the chain rule for quantum channels proven in~\cite{fang2019chain}. To this end, we will mostly work with the max-relative entropy and its smoothed version,
\begin{align}
D_{\max} (\rho\|\sigma) = \inf\{ \lambda:\rho\leq2^\lambda\sigma\}, \\
D^\epsilon_{\max} (\rho\|\sigma) = \inf_{\tilde\rho\in B_\epsilon(\rho)} D_{\max} (\tilde\rho\|\sigma), 
\end{align}
where $B_\epsilon(\rho) = \{ \tilde\rho\in\cS_\leq(A)\, : \, P(\tilde\rho, \rho)\leq \epsilon\}$ is the $\epsilon$-ball around $\rho$ with respect to the purified distance and $\cS_\leq(A)$ denoting the set of subnormalized states. 
An important property of the smooth max-relative entropy is that it asymptotically converges to the relative entropy as given by the asymptotic equipartition property (AEP)~\cite{tomamichel2012framework,tomamichel2013hierarchy}, i.e.
\begin{align}
\lim_{n\rightarrow\infty} \frac1n D^\epsilon_{\max} (\rho^{\otimes n}\|\sigma^{\otimes n}) = D(\rho\|\sigma). \label{sAEP}
\end{align}
We will also need the smooth max-channel relative entropy, 
\begin{align}
D^\epsilon_{\max} (\cN\|\cM) = \inf_{\tilde\cN \approx_\epsilon \cN} D^\epsilon_{\max} (\tilde\cN\|\cM), 
\end{align}
where the infimum is over all quantum channels $\tilde\cN$ with $\frac12 \| \tilde\cN - \cN \|_\diamond\leq\epsilon$. Its regularization is given by,
\begin{align}
D^{\epsilon, \infty}_{\max} (\cN\|\cM) = \lim_{n\rightarrow\infty} \frac1n D^\epsilon_{\max} (\cN^{\otimes n}\|\cM^{\otimes n}).
\end{align}
Unfortunately, it is unknown whether this quantity has a similar asymptotic behavior as its state equivalent. We do know that the following holds,
\begin{align}
D^\infty(\cN\| \cM) \leq D^{\epsilon, \infty}_{\max} (\cN\|\cM) \leq D_{\max}(\cN\|\cM). \label{Eq:weknow}
\end{align}
This follows e.g. from~\cite{wang2019resource,BHKW}. Here, we however want to assume that the quantity behaves nicely and we make the following technical assumption when needed, 
\begin{align}
\sup_{\epsilon>0} D^{\epsilon, \infty}_{\max} (\cN\|\cM) \stackrel{?}{=} D^\infty(\cN\| \cM). \label{cAEP}
\end{align} 
Some brief remarks here: First, the regularization of the relative entropy is necessary~\cite{fang2019chain} and, second, the smoothing is necessary, as we know that without smoothing the right inequality in Equation~\eqref{Eq:weknow} becomes an equality~\cite{BHKW}. It is currently unclear whether the supremum over $\epsilon$ is actually needed, but the stated version is sufficient for our application. 

The equality in Equation~\eqref{cAEP} was first conjectured by Andreas Winter~\cite{WinterAEP} and formally discussed in~\cite{liu2019resource}. 
Proving Equation~\eqref{cAEP} would also be useful beyond the questions discussed here and in~\cite{liu2019resource}, e.g. solve an open problem in the resource theory of asymmetric distinguishability of quantum channels~\cite[Section VI.6]{wang2019resource}. Work towards proving the conjecture can also been found in~\cite{gour2019quantify}. 

With these preliminaries out of the way, we can now turn to the chain rules. In~\cite{fang2019chain}, Fang \etal proved a new chain rule for the relative entropy under quantum channels, which lead to the realization that the amortized channel relative entropy is equal to the regularized channel relative entropy. We will now generalize this result to superchannels by following the same approach. For that we first prove a chain rule for the max-relative entropy. 
\begin{theorem}\label{Thm:DmaxCR}
For two superchannels $\Theta_1$ and $\Theta_2$ with decomposition $\{ \cE_1, \cD_1 \}$ and $\{ \cE_2, \cD_2 \}$ respectively, quantum channels $\cN$ and $\cM$ and $\epsilon$, $\epsilon_s$, $\epsilon_c > 0$, we have
\begin{align}
&D^{\tilde\epsilon}_{\max} ( \Theta_1( \cN)(\rho)^{\otimes m} \| \Theta_2( \cM)(\sigma)^{\otimes m}) \nonumber\\
&\leq \sup_{\bar\rho, \bar\cN} D^\epsilon_{\max} ( \Theta_1( \bar\cN)(\bar\rho)^{\otimes m} \| \Theta_2( \bar\cN)(\bar\rho)^{\otimes m}) -m\log(1-\epsilon_s) + m D^{\epsilon_c}_{\max} (\cN\|\cM) + m D^{\epsilon_s}_{\max} (\rho\|\sigma),
\end{align}
with $\tilde\epsilon=\epsilon +  m \epsilon_s +\sqrt{m \epsilon_s} + m \sqrt{2\epsilon_c}$. 
\end{theorem}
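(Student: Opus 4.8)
The plan is to adapt the channel chain rule of Fang \etal~\cite{fang2019chain} to superchannels, isolating an ``engine'' built from exact operator inequalities and relegating all $\epsilon$-smoothing to an accounting of purified distances. For orientation, the non-smooth version (all $\epsilon$'s $\to 0$) should read $D_{\max}(\Theta_1(\cN)(\rho)\|\Theta_2(\cM)(\sigma))\le D_{\max}(\Theta_1\|\Theta_2)+D_{\max}(\cN\|\cM)+D_{\max}(\rho\|\sigma)$, and this follows from three nested inequalities: replacing $\cN$ by $\cM$ inside $\Theta_1$ costs $D_{\max}(\cN\|\cM)$, then $\rho\mapsto\sigma$ costs $D_{\max}(\rho\|\sigma)$ by complete positivity of $\Theta_1(\cM)$, and finally $\Theta_1\mapsto\Theta_2$ costs $D_{\max}(\Theta_1(\cM)(\sigma)\|\Theta_2(\cM)(\sigma))\le D_{\max}(\Theta_1\|\Theta_2)$. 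Theorem~\ref{Thm:DmaxCR} is the smoothed version of this (for $D_{\max}$ no regularization is needed, only smoothing), and I may assume the three smoothed divergences on the right are finite, else the bound is vacuous.

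First I would fix witnesses: a subnormalized $\tilde\rho$ with $P(\tilde\rho,\rho)\le\epsilon_s$ and $\tilde\rho\le 2^{\lambda_s}\sigma$, where $\lambda_s:=D^{\epsilon_s}_{\max}(\rho\|\sigma)$ and $\tr\tilde\rho\ge 1-\epsilon_s$; and a channel $\tilde\cN$ with $\tfrac12\|\tilde\cN-\cN\|_\diamond\le\epsilon_c$ and $D_{\max}(\tilde\cN\|\cM)=\lambda_c:=D^{\epsilon_c}_{\max}(\cN\|\cM)$, the latter being equivalent to the uniform operator bound $(\tilde\cN\otimes\id_R)(\xi)\le 2^{\lambda_c}(\cM\otimes\id_R)(\xi)$ for all $\xi\ge 0$ on any $AR$. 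The engine is then a single chain of operator inequalities placed on the \emph{second} hypothesis: writing $\Theta_2=\cD_2\circ((\cdot)\otimes\id_{S_2})\circ\cE_2$ and using that CP maps preserve $\le$, together with the two bounds above,
\begin{align*}
\Theta_2(\tilde\cN)(\tilde\rho)\;\le\;2^{\lambda_c}\,\Theta_2(\cM)(\tilde\rho)\;\le\;2^{\lambda_c+\lambda_s}\,\Theta_2(\cM)(\sigma),
\end{align*}
and, tensorizing the inequality over the $m$ copies, $\Theta_2(\tilde\cN)(\tilde\rho)^{\otimes m}\le 2^{m(\lambda_c+\lambda_s)}\,\Theta_2(\cM)(\sigma)^{\otimes m}$. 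The reason for putting this on the second argument is that it then simply \emph{composes} with the smoothed superchannel bound, avoiding a transfer map on the first argument, which would cost a factor exponential in $m$.

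Next I would invoke the supremum defining the right-hand side at $\bar\cN=\tilde\cN$ and $\bar\rho=\tilde\rho/\tr\tilde\rho$ — this normalization of the subnormalized $\tilde\rho$ is the origin of the $-m\log\tr(1-\epsilon_s)$ correction — obtaining $\hat\omega\in B_\epsilon(\Theta_1(\tilde\cN)(\tilde\rho)^{\otimes m})$ with $\hat\omega\le 2^{\mu}\,\Theta_2(\tilde\cN)(\tilde\rho)^{\otimes m}$, where $\mu$ denotes that supremum. Composing with the engine gives $\hat\omega\le 2^{\mu+m(\lambda_c+\lambda_s)}\,\Theta_2(\cM)(\sigma)^{\otimes m}$, so $\hat\omega$ witnesses the left-hand side as soon as it is close to $\Theta_1(\cN)(\rho)^{\otimes m}$. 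For that I would use
\begin{align*}
P\big(\hat\omega,\Theta_1(\cN)(\rho)^{\otimes m}\big)\;\le\;\epsilon+P\big(\Theta_1(\tilde\cN)(\tilde\rho)^{\otimes m},\Theta_1(\cN)(\rho)^{\otimes m}\big),
\end{align*}
and split the last term through $\Theta_1(\tilde\cN)(\rho)^{\otimes m}$: the $\tilde\rho\to\rho$ replacement is absorbed by monotonicity of $P$ under the channel $\Theta_1(\tilde\cN)^{\otimes m}$ and subadditivity of $P^2$ across tensor factors (this produces the $\epsilon_s$-dependent contributions to $\tilde\epsilon$), while the $\tilde\cN\to\cN$ replacement uses contractivity of $\Theta_1$ in diamond norm, $\tfrac12\|\Theta_1(\tilde\cN)-\Theta_1(\cN)\|_\diamond\le\epsilon_c$, then $P\le\sqrt{2T}$ and telescoping over the $m$ normalized copies (giving $m\sqrt{2\epsilon_c}$). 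Summing yields $P(\hat\omega,\Theta_1(\cN)(\rho)^{\otimes m})\le\tilde\epsilon$, so by definition of $D^{\tilde\epsilon}_{\max}$ one concludes $D^{\tilde\epsilon}_{\max}(\Theta_1(\cN)(\rho)^{\otimes m}\|\Theta_2(\cM)(\sigma)^{\otimes m})\le\mu+m(\lambda_c+\lambda_s)-m\log\tr(1-\epsilon_s)$, which is the claim.

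The step I expect to be delicate is the purified-distance bookkeeping in the last paragraph: tracking a subnormalized $\tilde\rho$ through an $m$-fold tensor power and through $\Theta_1(\tilde\cN)$, converting diamond-norm closeness into purified-distance closeness, and switching between subnormalized and normalized states when invoking the supremum, all while producing exactly the constants in $\tilde\epsilon$ and the $-m\log\tr(1-\epsilon_s)$ term. The conceptual content is already pinned down by the observation that the channel- and state-smoothing operator inequalities can be pushed onto the second argument of $D_{\max}$, which keeps the bound additive and stable under smoothing; everything else is careful estimation.
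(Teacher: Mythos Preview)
Your approach is essentially the same as the paper's: fix witnesses $\tilde\rho$ and $\tilde\cN$ achieving the smoothed max-divergences, push the operator inequalities $\tilde\cN\le 2^{\lambda_c}\cM$ and $\tilde\rho\le 2^{\lambda_s}\sigma$ through $\Theta_2$ on the second argument, invoke the supremum over $(\bar\rho,\bar\cN)$ at the normalized point $\tilde\rho/\tr\tilde\rho$ and $\tilde\cN$ to produce the witness $\hat\omega$ (the paper calls it $\tau$), compose, and then bound $P(\hat\omega,\Theta_1(\cN)(\rho)^{\otimes m})$ by triangle inequalities and monotonicity. The paper organizes the chain in the order supremum $\to$ normalization $\to$ channel $\to$ state rather than pre-packaging the last two into an ``engine'', and it splits the purified-distance estimate through $\cN^{\epsilon_c}\circ\cE_1(\rho)^{\otimes m}$ (stripping $\cD_1$ first) rather than through $\Theta_1(\tilde\cN)(\rho)^{\otimes m}$, but these are cosmetic differences.

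One small slip to fix when you write it up: after invoking the supremum at $\bar\rho=\tilde\rho/\tr\tilde\rho$, the witness $\hat\omega$ lies in $B_\epsilon\big(\Theta_1(\tilde\cN)(\bar\rho)^{\otimes m}\big)$ and satisfies $\hat\omega\le 2^{\mu}\,\Theta_2(\tilde\cN)(\bar\rho)^{\otimes m}$, not the $\tilde\rho$-versions you wrote; the normalization factor $(\tr\tilde\rho)^{-m}$ is precisely what produces the $-m\log(1-\epsilon_s)$ correction when you pass back to $\tilde\rho$ on the second argument, and the ball center must remain $\bar\rho$ for the purified-distance estimate (where the $m\epsilon_s+\sqrt{m\epsilon_s}$ bound from Fang \etal\ handles $P(\bar\rho^{\otimes m},\rho^{\otimes m})$). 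You clearly know this already, just keep the notation consistent.
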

\begin{proof}
First, by definition there exists a $\rho^{\epsilon_s}\in B_{\epsilon_s}(\rho)$, such that
\begin{align}
\rho^{\epsilon_s} \leq 2^{D^{\epsilon_s}_{\max} (\rho\|\sigma)}\sigma. \label{Eq:max-state}
\end{align}
We define, $\nu^{\epsilon_s} = \frac{\rho^{\epsilon_s}}{\tr\rho^{\epsilon_s}}$. Furthermore, there also exists a $\cN^{\epsilon_c}$ with $\cN^{\epsilon_c}\approx_{\epsilon_c}\cN$ such that
\begin{align}
\forall \rho,\qquad \cN^{\epsilon_c}(\rho) \leq 2^{D^{\epsilon_c}_{\max} (\cN^{\epsilon_c}\|\cM)}\cM(\rho). \label{Eq:max-channel}
\end{align}
To continue, we also have a $\tau\in B_\epsilon(\cD_1\circ \cN^{\epsilon_c}\circ\cE_1(\nu^{\epsilon_s})^{\otimes m})$, such that
\begin{align*}
\tau &\leq 2^{ \sup_{\bar\rho, \bar\cN} D^\epsilon_{\max} ( \cD_1\circ \bar\cN\circ\cE_1(\bar\rho)^{\otimes m} \| \cD_2\circ \bar\cN\circ\cE_2(\bar\rho)^{\otimes m})} \cD_2\circ \cN^{\epsilon_c}\circ\cE_2(\nu^{\epsilon_s})^{\otimes m} \\
&\leq 2^{  \sup_{\bar\rho, \bar\cN} D^\epsilon_{\max} ( \cD_1\circ \bar\cN\circ\cE_1(\bar\rho)^{\otimes m} \| \cD_2\circ \bar\cN\circ\cE_2(\bar\rho)^{\otimes m}) -m\log(1-\epsilon_s)} \cD_2\circ \cN^{\epsilon_c}\circ\cE_2(\rho^{\epsilon_s})^{\otimes m} \\
&\leq 2^{  \sup_{\bar\rho, \bar\cN} D^\epsilon_{\max} ( \cD_1\circ \bar\cN\circ\cE_1(\bar\rho)^{\otimes m} \| \cD_2\circ \bar\cN\circ\cE_2(\bar\rho)^{\otimes m}) -m\log(1-\epsilon_s) + m D^{\epsilon_c}_{\max} (\cN\|\cM)} \cD_2\circ \cM\circ\cE_2(\rho^{\epsilon_s})^{\otimes m} \\
&\leq 2^{  \sup_{\bar\rho, \bar\cN} D^\epsilon_{\max} ( \cD_1\circ \bar\cN\circ\cE_1(\bar\rho)^{\otimes m} \| \cD_2\circ \bar\cN\circ\cE_2(\bar\rho)^{\otimes m}) -m\log(1-\epsilon_s) + m D^{\epsilon_c}_{\max} (\cN\|\cM) + m D^{\epsilon_s}_{\max} (\rho\|\sigma)} \cD_2\circ \cM\circ\cE_2(\sigma)^{\otimes m},
\end{align*}
where the second inequality follows from the definition of $\nu^{\epsilon_s}$ and the fact that $\tr\rho^{\epsilon_s}\geq 1-\epsilon_s$, see also~\cite[Proof of Proposition 3.2]{fang2019chain}. The third inequality follows by Equation~\eqref{Eq:max-channel} and the forth by Equation~\eqref{Eq:max-state}. 
We are now left with bounding, 
\begin{align}
&P(\tau , \cD_1\circ \cN\circ\cE_1(\rho)^{\otimes m}) \nonumber\\
\leq &P(\tau , \cD_1\circ \cN^{\epsilon_c}\circ\cE_1(\nu^{\epsilon_s})^{\otimes m}) + P(\cD_1\circ \cN^{\epsilon_c}\circ\cE_1(\nu^{\epsilon_s})^{\otimes m}, \cD_1\circ \cN\circ\cE_1(\rho)^{\otimes m}) \nonumber\\
\leq &\epsilon +  P(\cN^{\epsilon_c}\circ\cE_1(\nu^{\epsilon_s})^{\otimes m}, \cN\circ\cE_1(\rho)^{\otimes m}) \nonumber\\
\leq &\epsilon +  P(\cN^{\epsilon_c}\circ\cE_1(\nu^{\epsilon_s})^{\otimes m}, \cN^{\epsilon_c}\circ\cE_1(\rho)^{\otimes m}) + P(\cN^{\epsilon_c}\circ\cE_1(\rho)^{\otimes m}, \cN\circ\cE_1(\rho)^{\otimes m}) \nonumber\\
\leq &\epsilon +  P(\cN^{\epsilon_c}\circ\cE_1(\nu^{\epsilon_s})^{\otimes m}, \cN^{\epsilon_c}\circ\cE_1(\rho)^{\otimes m}) + m \epsilon_c \nonumber\\
\leq &\epsilon +  m \epsilon_s +\sqrt{m \epsilon_s} + m \sqrt{2\epsilon_c}, 
\end{align}
where the fist and third inequality follows by triangle inequality, the second and fourth by definition and the fifth along the lines of~\cite{fang2019chain}. We also used in the fourth inequality that $P(\rho,\sigma) \leq \sqrt{\| \rho-\sigma\|_1}$, see e.g.~\cite{tomamichel2012framework}. This concludes the proof. 
\end{proof}
We can now make use of the asymptotic equipartition property described in Equation~\eqref{sAEP} to get the following chain rule. 
\begin{corollary}\label{Cor:CR}
With the definitions as above, we have, 
\begin{align}
D( \Theta_1(\cN)(\rho) \| \Theta_2(\cM)(\sigma)) \leq  D^\infty (\Theta_1 \| \Theta_2) +  D^{\epsilon_c, \infty}_{\max} (\cN\|\cM) + D(\rho\|\sigma),
\end{align}
\end{corollary}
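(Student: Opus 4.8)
The plan is to upgrade the max-relative-entropy chain rule of Theorem~\ref{Thm:DmaxCR} to a relative-entropy chain rule by applying it to tensor powers of all the objects involved and then invoking the state asymptotic equipartition property, Equation~\eqref{sAEP}; this is exactly the mechanism by which Fang \etal~\cite{fang2019chain} pass from the single-shot to the regularized channel chain rule.

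First I would apply Theorem~\ref{Thm:DmaxCR}, specialized to output-copy parameter $1$, to the data $\Theta_1^{\otimes n}$, $\Theta_2^{\otimes n}$ (with decomposition $\{\cE_i^{\otimes n},\cD_i^{\otimes n}\}$), $\cN^{\otimes n}$, $\cM^{\otimes n}$, $\rho^{\otimes n}$, $\sigma^{\otimes n}$, keeping $\epsilon,\epsilon_s,\epsilon_c>0$ fixed. Using $\Theta_i^{\otimes n}(\cN^{\otimes n})(\rho^{\otimes n})=(\Theta_i(\cN)(\rho))^{\otimes n}$ and writing $D^\epsilon_{\max}(\Theta_1^{\otimes n}\|\Theta_2^{\otimes n})$ for the supremum over the auxiliary state and channel on the right-hand side of Theorem~\ref{Thm:DmaxCR}, this yields, with $\tilde\epsilon:=\epsilon+\epsilon_s+\sqrt{\epsilon_s}+\sqrt{2\epsilon_c}$ a fixed constant and $c_n:=-\log\tr\big((\rho^{\otimes n})^{\epsilon_s}\big)\ge 0$,
\begin{align*}
D^{\tilde\epsilon}_{\max}\big((\Theta_1(\cN)(\rho))^{\otimes n}\big\|(\Theta_2(\cM)(\sigma))^{\otimes n}\big)\le D^\epsilon_{\max}(\Theta_1^{\otimes n}\|\Theta_2^{\otimes n})+D^{\epsilon_c}_{\max}(\cN^{\otimes n}\|\cM^{\otimes n})+D^{\epsilon_s}_{\max}(\rho^{\otimes n}\|\sigma^{\otimes n})+c_n .
\end{align*}

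Next I would divide by $n$ and send $n\to\infty$ term by term. On the left, since $\tilde\epsilon$ is held fixed, Equation~\eqref{sAEP} gives $\tfrac1n D^{\tilde\epsilon}_{\max}(\cdots)\to D(\Theta_1(\cN)(\rho)\|\Theta_2(\cM)(\sigma))$; likewise $\tfrac1n D^{\epsilon_s}_{\max}(\rho^{\otimes n}\|\sigma^{\otimes n})\to D(\rho\|\sigma)$. The channel term converges to $D^{\epsilon_c,\infty}_{\max}(\cN\|\cM)$ by definition of the regularized smooth max-channel divergence, and $\tfrac1n c_n\to0$ because $c_n$ is bounded by a constant depending only on $\epsilon_s$. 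Finally $\tfrac1n D^\epsilon_{\max}(\Theta_1^{\otimes n}\|\Theta_2^{\otimes n})\to D^{\epsilon,\infty}_{\max}(\Theta_1\|\Theta_2)$, and under the standing technical assumption of this appendix---the analogue of Equation~\eqref{cAEP} at the level of superchannels, which together with the superchannel version of the sandwich in Equation~\eqref{Eq:weknow} forces $D^{\epsilon,\infty}_{\max}(\Theta_1\|\Theta_2)=D^\infty(\Theta_1\|\Theta_2)$---this limit equals $D^\infty(\Theta_1\|\Theta_2)$. Collecting the four limits gives the stated inequality.

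The step I expect to be the genuine obstacle is this last one, the superchannel term: unlike the state terms there is no unconditional AEP available, and its regularization is a priori only pinned between $D^\infty(\Theta_1\|\Theta_2)$ and $D_{\max}(\Theta_1\|\Theta_2)$, so one really does need the superchannel counterpart of the conjecture~\eqref{cAEP}. I would try to deduce it from the channel conjecture via the decomposition $\Theta_i=\{\cE_i,\cD_i\}$, or else simply carry it as part of the hypotheses already in force here. A secondary, purely bookkeeping point is to check that the shorthand $D^\epsilon_{\max}(\Theta_1^{\otimes n}\|\Theta_2^{\otimes n})$ genuinely reproduces the regularized smooth max-superchannel divergence---i.e.\ that allowing the auxiliary channel in Theorem~\ref{Thm:DmaxCR} to carry an arbitrary reference system matches the definition used in $D^{\epsilon,\infty}_{\max}(\Theta_1\|\Theta_2)$---so that no interchange of a supremum with the $n\to\infty$ limit is actually required.
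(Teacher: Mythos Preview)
Your specialization to $m=1$ in Theorem~\ref{Thm:DmaxCR} is the source of the gap. Corollary~\ref{Cor:CR} is meant to hold \emph{unconditionally}; the channel AEP conjecture~\eqref{cAEP} is only invoked in the \emph{next} corollary, and a superchannel-level analogue of~\eqref{cAEP} is never assumed anywhere in the paper. By throwing away the $m$-copy structure you are left with $\tfrac1n D^\epsilon_{\max}(\Theta_1^{\otimes n}\|\Theta_2^{\otimes n})$, a regularized smooth max-superchannel divergence for which there is no known AEP, so you are forced to import an assumption the statement does not carry.

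The paper (following~\cite{fang2019chain}) keeps \emph{both} parameters live: it reassigns $\Theta_i\to\Theta_i^{\otimes n}$, $\cN\to\cN^{\otimes n}$, $\rho\to\rho^{\otimes n}$, etc.\ in Theorem~\ref{Thm:DmaxCR} and then takes a double limit, first $m\to\infty$ and then $n\to\infty$. The point of the $m$ parameter is precisely to handle the superchannel term: for fixed $n,\bar\rho,\bar\cN$ the objects $\Theta_i^{\otimes n}(\bar\cN)(\bar\rho)$ are just two fixed states, so the ordinary state AEP~\eqref{sAEP} (together with a uniform second-order bound as in~\cite{fang2019chain}) gives
\[
\lim_{m\to\infty}\tfrac1m\sup_{\bar\rho,\bar\cN}D^\epsilon_{\max}\big(\Theta_1^{\otimes n}(\bar\cN)(\bar\rho)^{\otimes m}\big\|\Theta_2^{\otimes n}(\bar\cN)(\bar\rho)^{\otimes m}\big)\le D(\Theta_1^{\otimes n}\|\Theta_2^{\otimes n}),
\]
after which dividing by $n$ and sending $n\to\infty$ yields $D^\infty(\Theta_1\|\Theta_2)$ with no conjecture needed. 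The remaining terms behave exactly as you describe. In short: do not set $m=1$; the $m$-tensor on the right-hand side of Theorem~\ref{Thm:DmaxCR} is the device that replaces your missing superchannel AEP.
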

\begin{proof}
To prove the above inequality, we use the result from Theorem~\ref{Thm:DmaxCR} and reasign $\cD_i\to\cD_i^{\otimes n}$, $\cE_i\to\cE_i^{\otimes n}$, $\cN\to\cN^{\otimes n}$, $\cM\to\cM^{\otimes n}$, $\rho\to\rho^{\otimes n}$ and $\sigma\to\sigma^{\otimes n}$. Using the AEP in Equation~\eqref{sAEP} we get by taking the appropiate limits, similar to~\cite{fang2019chain}, the desired result.
\end{proof}
While this is interesting in its own right, we want to combine the result with our AEP assumption. 
\begin{corollary}
Assuming that Equation~\eqref{cAEP} holds, we have, 
\begin{align}
D( \Theta_1(\cN)(\rho) \| \Theta_2(\cM)(\sigma)) \leq  D^\infty (\Theta_1 \| \Theta_2) +  D^{\infty} (\cN\|\cM) + D(\rho\|\sigma),
\end{align} 
and therefore
\begin{align}
D^A (\Theta_1 \| \Theta_2) =  D^\infty (\Theta_1 \| \Theta_2).
\end{align} 
\end{corollary}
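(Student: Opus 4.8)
The plan is to read both statements off the chain rule already established (Corollary~\ref{Cor:CR}), the sandwich bound~\eqref{Eq:weknow}, the assumption~\eqref{cAEP}, and the channel-level collapse $D^\infty(\cN\|\cM)=D^A(\cN\|\cM)$ of~\cite{fang2019chain}.

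First I would dispose of the displayed inequality. Corollary~\ref{Cor:CR} gives, for every fixed $\epsilon_c>0$,
\begin{align*}
D(\Theta_1(\cN)(\rho)\|\Theta_2(\cM)(\sigma)) \le D^\infty(\Theta_1\|\Theta_2) + D^{\epsilon_c,\infty}_{\max}(\cN\|\cM) + D(\rho\|\sigma).
\end{align*}
By the left inequality in~\eqref{Eq:weknow} we always have $D^{\epsilon_c,\infty}_{\max}(\cN\|\cM)\ge D^\infty(\cN\|\cM)$, so the hypothesis~\eqref{cAEP}, which reads $\sup_{\epsilon_c>0}D^{\epsilon_c,\infty}_{\max}(\cN\|\cM)=D^\infty(\cN\|\cM)$, in fact pins down $D^{\epsilon_c,\infty}_{\max}(\cN\|\cM)=D^\infty(\cN\|\cM)$ for every $\epsilon_c>0$. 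Substituting this into the displayed bound yields $D(\Theta_1(\cN)(\rho)\|\Theta_2(\cM)(\sigma)) \le D^\infty(\Theta_1\|\Theta_2) + D^\infty(\cN\|\cM) + D(\rho\|\sigma)$, which is the first claim.

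For the identity $D^A(\Theta_1\|\Theta_2)=D^\infty(\Theta_1\|\Theta_2)$ I would prove the two inequalities separately. The direction $D^\infty(\Theta_1\|\Theta_2)\le D^A(\Theta_1\|\Theta_2)$ is already contained in~\eqref{Eq:inftyAmortized}, since fully parallel strategies form a sub-class of nested adaptive strategies and $D^A$ is a weak converse rate for the latter. For the reverse direction I would re-run the chain rule just obtained with $\Theta_i$ replaced by $\Theta_i\otimes\id_R$ and with $\cN,\cM$ ranging over channels $\cN_{AR\to BR},\cM_{AR\to BR}$; this gives, for all such $\cN,\cM$ and all input states $\rho,\tau$,
\begin{align*}
D((\Theta_1\otimes\id_R)(\cN)(\rho) \| (\Theta_2\otimes\id_R)(\cM)(\tau)) - D(\rho\|\tau) \le D^\infty(\Theta_1\|\Theta_2) + D^\infty(\cN\|\cM).
\end{align*}
Now use $D^\infty(\cN\|\cM)=D^A(\cN\|\cM)$ from~\cite{fang2019chain}, move that term to the left, take the supremum over $\rho,\tau$ to recognize the left-hand side as $D^A((\Theta_1\otimes\id_R)(\cN) \| (\Theta_2\otimes\id_R)(\cM)) - D^A(\cN\|\cM)$, and finally take the supremum over $\cN,\cM$; by Definition~\ref{Def:supAmoDiv} this last quantity is precisely $D^A(\Theta_1\|\Theta_2)$, so $D^A(\Theta_1\|\Theta_2)\le D^\infty(\Theta_1\|\Theta_2)$. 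Together with the reverse inequality this gives equality.

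The step I expect to require the most care is the passage from the single-channel chain rule to the amortized superchannel divergence: one must check that attaching the further reference needed to evaluate the amortized channel divergences keeps everything inside the optimizations defining $D^\infty(\Theta_1\|\Theta_2)$ and $D^A(\cN\|\cM)$, so that no reference-augmented variant of $D^A(\cN\|\cM)$ sneaks in — this is the usual reference-system bookkeeping used throughout the paper. Beyond that the argument is purely formal, since the two genuinely hard ingredients, namely the max-relative-entropy chain rule of Theorem~\ref{Thm:DmaxCR} and the AEP hypothesis~\eqref{cAEP}, are taken as given.
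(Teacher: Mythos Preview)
Your proof is correct and follows essentially the same approach as the paper: the first inequality is obtained by plugging the assumption~\eqref{cAEP} into Corollary~\ref{Cor:CR}, and the equality $D^A=D^\infty$ then follows from the channel-level identity $D^\infty(\cN\|\cM)=D^A(\cN\|\cM)$ of~\cite{fang2019chain} together with rearranging and taking suprema over channels and states. Your explicit treatment of the reference-system bookkeeping and of the $\geq$ direction via~\eqref{Eq:inftyAmortized} simply spells out steps the paper leaves implicit.
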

\begin{proof}
The first statement is a direct application of the assumption. The second follows because $D^{\infty} (\cN\|\cM)=D^{A} (\cN\|\cM)$, reordering and then taking the supremum over all channels and states. 
\end{proof}

We now want to point out that by simplifying the above proof strategy, we can also get the following results. 
\begin{corollary}
With the definitions as above, we have, 
\begin{align}
D( \Theta_1(\cN)(\rho) \| \Theta_2(\cM)(\rho)) \leq  D^{c\infty} (\Theta_1 \| \Theta_2) +  D^{\epsilon_c, \infty}_{\max} (\cN\|\cM),
\end{align}
\end{corollary}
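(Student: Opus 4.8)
The plan is to mirror the two-step structure already used for Corollary~\ref{Cor:CR}: first establish a finite-blocklength chain rule for the smooth max-relative entropy, and then regularize it using the state AEP of Equation~\eqref{sAEP}. The crucial simplification here, compared with Theorem~\ref{Thm:DmaxCR}, is that the input state is the \emph{same} $\rho$ on both sides, so there is nothing to smooth on the state and $\rho$ never has to be replaced by a nearby state. Consequently the two factor-of-$m$ state-smoothing terms $-m\log\tr(1-\epsilon_s)$ and $mD^{\epsilon_s}_{\max}(\rho\|\sigma)$ simply disappear, and — more importantly — the state stays in product form throughout the reassignment, which is precisely why the regularized superchannel term ends up being the channel-regularized $D^{c\infty}(\Theta_1\|\Theta_2)$ rather than the fully regularized $D^{\infty}(\Theta_1\|\Theta_2)$.

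Concretely, the first step is to prove that for all $\epsilon,\epsilon_c>0$ and all $m$,
\[ D^{\tilde\epsilon}_{\max}\big(\Theta_1(\cN)(\rho)^{\otimes m}\big\|\Theta_2(\cM)(\rho)^{\otimes m}\big)\ \le\ \sup_{\bar\cN}D^{\epsilon}_{\max}\big(\Theta_1(\bar\cN)(\rho)^{\otimes m}\big\|\Theta_2(\bar\cN)(\rho)^{\otimes m}\big)\ +\ m\,D^{\epsilon_c}_{\max}(\cN\|\cM), \]
with $\tilde\epsilon=\epsilon+\sqrt{2m\epsilon_c}$. The argument is exactly that of Theorem~\ref{Thm:DmaxCR} with the state part stripped out: pick $\cN^{\epsilon_c}\approx_{\epsilon_c}\cN$ with $\cN^{\epsilon_c}(\cdot)\le 2^{D^{\epsilon_c}_{\max}(\cN^{\epsilon_c}\|\cM)}\cM(\cdot)$ as in Equation~\eqref{Eq:max-channel}; choose $\tau$ in the $\epsilon$-ball of $\cD_1\circ\cN^{\epsilon_c}\circ\cE_1(\rho)^{\otimes m}$ that is dominated by $2^{\sup_{\bar\cN}D^\epsilon_{\max}(\cdots)}\,\cD_2\circ\cN^{\epsilon_c}\circ\cE_2(\rho)^{\otimes m}$; then use the channel max-bound to pass from $\cN^{\epsilon_c}$ to $\cM$; and finally control $P\big(\tau,\cD_1\circ\cN\circ\cE_1(\rho)^{\otimes m}\big)$ by the triangle inequality, data processing of the purified distance under $\cD_1^{\otimes m}$, and $P(\cdot,\cdot)\le\sqrt{\|\cdot-\cdot\|_1}$, along the lines of~\cite{fang2019chain}. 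Note that only the channel, not the state, gets smoothed, so the supremum above ranges over $\bar\cN$ alone.

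The second step is the regularization, carried out precisely as in the proof of Corollary~\ref{Cor:CR}: reassign $\cE_i\to\cE_i^{\otimes n}$, $\cD_i\to\cD_i^{\otimes n}$, $\cN\to\cN^{\otimes n}$, $\cM\to\cM^{\otimes n}$, $\rho\to\rho^{\otimes n}$, divide by $n$, send the smoothing parameters and then $n$ to their limits, and invoke Equation~\eqref{sAEP}. The left-hand side converges to $D(\Theta_1(\cN)(\rho)\|\Theta_2(\cM)(\rho))$; the channel term converges to $D^{\epsilon_c,\infty}_{\max}(\cN\|\cM)$; and the supremum term — a supremum over $n$-partite channels $\bar\cN$ acting with the \emph{product} input $\rho^{\otimes n}$ — is bounded by the corresponding optimization over product states, and hence in the limit by $D^{c\infty}(\Theta_1\|\Theta_2)$, since the fixed $\rho$ is an admissible product-state choice. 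Collecting these three contributions yields the claimed inequality, with $\epsilon_c$ left free as required.

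I expect the main obstacle to be the step that is likewise glossed over in Corollary~\ref{Cor:CR}: keeping the various smoothing parameters $(\epsilon,\epsilon_c,\tilde\epsilon)$ mutually consistent through the nested limits, and in particular interchanging the limit with the supremum over $\bar\cN$ so that the smoothed max term collapses cleanly onto the unsmoothed regularized divergence $D^{c\infty}$ — a step that again rides on the state AEP and follows the bookkeeping of~\cite{fang2019chain}. Everything else — the operator-inequality manipulations and the purified-distance estimates — is routine and essentially identical to the superchannel case already treated in Theorem~\ref{Thm:DmaxCR}.
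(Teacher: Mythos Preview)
Your proposal is correct and is precisely the simplification the paper has in mind: the paper does not spell out a separate proof but states that the corollary follows ``by simplifying the above proof strategy,'' and your plan---dropping the state-smoothing step of Theorem~\ref{Thm:DmaxCR} because $\rho=\sigma$, so that the input stays in product form and the supremum term lands in $D^{c\infty}$ rather than $D^{\infty}$---implements exactly this. The caveat you flag about exchanging the limit with the supremum over $\bar\cN$ is the same one implicitly carried over from Corollary~\ref{Cor:CR} and~\cite{fang2019chain}, so you are not adding any new gap.
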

\begin{corollary}
Assuming that Equation~\eqref{cAEP} holds, we have, 
\begin{align}
D( \Theta_1(\cN)(\rho) \| \Theta_2(\cM)(\sigma)) \leq  D^{c\infty} (\Theta_1 \| \Theta_2) +  D^{\infty} (\cN\|\cM),
\end{align} 
and therefore
\begin{align}
D^{cA} (\Theta_1 \| \Theta_2) \leq  D^{c\infty} (\Theta_1 \| \Theta_2). \label{Eq:DDc}
\end{align} 
\end{corollary}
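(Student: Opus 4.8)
The plan is to obtain both displayed statements from the corollary immediately above, which already establishes, for every $\epsilon_c>0$, all channels $\cN,\cM$, and every input state $\rho$,
\[
D(\Theta_1(\cN)(\rho)\|\Theta_2(\cM)(\rho))\ \le\ D^{c\infty}(\Theta_1\|\Theta_2)+D^{\epsilon_c,\infty}_{\max}(\cN\|\cM).
\]
To eliminate the smoothing parameter I invoke the assumption~\eqref{cAEP}. Since~\eqref{Eq:weknow} always gives $D^\infty(\cN\|\cM)\le D^{\epsilon_c,\infty}_{\max}(\cN\|\cM)$, while~\eqref{cAEP} asserts $\sup_{\epsilon>0}D^{\epsilon,\infty}_{\max}(\cN\|\cM)=D^\infty(\cN\|\cM)$, sandwiching forces $D^{\epsilon_c,\infty}_{\max}(\cN\|\cM)=D^\infty(\cN\|\cM)$ for each fixed $\epsilon_c>0$. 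Substituting this equality into the displayed bound yields the first claimed inequality,
\[
D(\Theta_1(\cN)(\rho)\|\Theta_2(\cM)(\rho))\ \le\ D^{c\infty}(\Theta_1\|\Theta_2)+D^{\infty}(\cN\|\cM).
\]

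For~\eqref{Eq:DDc} I follow the same route used a few lines earlier to prove $D^A=D^\infty$. I apply the inequality just obtained with $\Theta_1,\Theta_2$ replaced by $\Theta_1\otimes\id_R,\Theta_2\otimes\id_R$ and with $\cN,\cM$ taken to be channels $\cN_{AR\to BR},\cM_{AR\to BR}$; this is legitimate because the proof of Theorem~\ref{Thm:DmaxCR}, and hence of the two intervening corollaries, goes through verbatim when the components $\cE_i,\cD_i$, the channels $\cN,\cM$, and the state $\rho$ all carry an additional inert reference system, and because $D^{c\infty}$ is by definition already optimized over such references. Taking the supremum over the remaining input state/reference on the left reproduces the channel divergence, so
\[
D( (\Theta_1\otimes\id_R)(\cN_{AR\to BR}) \,\|\, (\Theta_2\otimes\id_R)(\cM_{AR\to BR}) )\ \le\ D^{c\infty}(\Theta_1\|\Theta_2)+D^{\infty}(\cN\|\cM).
\]
Now I use $D^{\infty}(\cN\|\cM)=D^{A}(\cN\|\cM)$, the channel chain-rule collapse of Fang \etal recalled in Section~\ref{sec:divergences-c}, bring $D^{A}(\cN\|\cM)$ to the left-hand side, and take the supremum over all channels $\cN,\cM$. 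By Definition~\ref{Def:supAmoDiv} the resulting left-hand side is exactly $D^{cA}(\Theta_1\|\Theta_2)$, which gives $D^{cA}(\Theta_1\|\Theta_2)\le D^{c\infty}(\Theta_1\|\Theta_2)$.

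Essentially everything here is bookkeeping: the rearrangement, the passage from a state divergence to a channel divergence, and the harmless adjunction of a reference system. The only non-routine ingredient — and the one that makes the corollary conditional — is the assumption~\eqref{cAEP}; modulo it there is no real obstacle. The single point that deserves a line of care is checking that the preceding corollary, whose statement suppresses reference systems, really covers the with-reference channels $\cN_{AR\to BR}$ needed to match the definition of $D^{cA}$; as indicated above, this follows by re-running its proof with an inert system adjoined throughout.
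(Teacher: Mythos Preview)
Your proof is correct and follows exactly the route the paper indicates (the paper merely says ``by simplifying the above proof strategy'' and gives no explicit argument): you invoke the preceding $\epsilon_c$-corollary, use the assumption~\eqref{cAEP} together with~\eqref{Eq:weknow} to collapse $D^{\epsilon_c,\infty}_{\max}$ to $D^\infty$, then use $D^\infty=D^A$ for channels, rearrange, and optimize---precisely mirroring the paper's proof that $D^A=D^\infty$. Your care about the reference system and your tacit correction of the evident typo in the first display (the $\sigma$ should be $\rho$, consistent with the preceding corollary and with the definition of $D^{cA}$) are both appropriate.
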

As described in Section~\ref{Sec:NestedAd}, what we are currently missing to make Equation~\eqref{Eq:DDc} an equality is a discrimination strategy that includes parallel strategies with product states as a special case and has $D^{cA} (\Theta_1 \| \Theta_2)$ as a valid converse bound.

Finally, we remark that the chain rule in Corollary~\ref{Cor:CR} can be extended to $k$-combs following a similar derivation, resulting in additional terms for all pairs of channels the combs act on. Investigating the applications for the discrimination of general quantum networks appears to be a promising future direction.


\section{Other ways to amortize?}\label{Ap:amortize}

In this section we will briefly comment on yet another potentially different way of amortizing superchannel divergences. Based on the definition of the state-amortized superchannel divergence and a possible intuition that it covers only amortization at the input state of the superchannel, one might be tempted to define the following amortized quantity,
\begin{align}
\bD^{\tilde A}(\Theta_1 \| \Theta_2) = \sup_{\cN, \cM, \rho, \sigma} \bD(\Theta_1(\cN)(\rho) \| \Theta_2(\cM)(\sigma) ) - \bD(\cN\circ\cE_1(\rho) \| \cM\circ\cE_1(\sigma)), 
\end{align}
which conveys the intuition that the input states and the input channels are amortized simultaneously. 

Indeed, using techniques established in the previous sections, one can show a meta-converse for any strategy $\cS$, including the fully general ones,
\begin{align}
\bD(p\| q) \leq \bD^{\tilde A}(\Theta_1 \| \Theta_2),
\end{align}
and one can also see that the bound is tight for classical channels and when the second superchannel always outputs a fixed state (see Lemma~\ref{Lem:Illum} and its proof). 

However, we also have the following result that states that bounds in terms of $\bD^{\tilde A}(\Theta_1 \| \Theta_2)$ are never better than those we gave in the main text of this work. 
\begin{lemma}\label{Lem:AleqA} 
For two superchannels $\Theta_1$ and $\Theta_2$, we have
\begin{align}
\bD^{A^*}(\Theta_1 \| \Theta_2) \leq \bD^{\tilde A}(\Theta_1 \| \Theta_2).
\end{align}
\end{lemma}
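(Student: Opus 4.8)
The plan is to unfold both quantities down to differences of \emph{state} divergences, at which point the input‑state contributions cancel. Fix channels $\cN,\cM$ together with pre‑processing channels $\bar\cN,\bar\cM$ appearing in the supremum that defines $\bD^{A^*}(\Theta_1\|\Theta_2)$, and abbreviate
\[
\cK_1 := (\Theta_1\otimes\id_R)(\cN)\circ\bar\cN,\qquad \cK_2 := (\Theta_2\otimes\id_R)(\cM)\circ\bar\cM,
\]
\[
\cL_1 := \cN\circ\cE_{\Theta_1}\circ\bar\cN,\qquad \cL_2 := \cM\circ\cE_{\Theta_1}\circ\bar\cM.
\]
The structural observation making everything work is that $\cK_1,\cK_2,\cL_1,\cL_2$ all share the same input system---namely the input of $\bar\cN$ (resp.\ $\bar\cM$)---so one and the same pair of input states is feasible in the amortization defining $\bD^A(\cK_1\|\cK_2)$ and in the one defining $\bD^A(\cL_1\|\cL_2)$.

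Given $\eta>0$, I would choose input states $\rho,\tau$ (with an arbitrary reference attached, as in the definition of $\bD^A$) that are $\eta$‑optimal for $\bD^A(\cK_1\|\cK_2)=\sup_{\rho,\tau}\big[\bD(\cK_1(\rho)\|\cK_2(\tau))-\bD(\rho\|\tau)\big]$, and then use that very same pair merely as a feasible point for the lower bound $\bD^A(\cL_1\|\cL_2)\ge \bD(\cL_1(\rho)\|\cL_2(\tau))-\bD(\rho\|\tau)$. Subtracting, the common term $\bD(\rho\|\tau)$ cancels and one is left with
\[
\bD^A(\cK_1\|\cK_2)-\bD^A(\cL_1\|\cL_2)\ \le\ \bD(\cK_1(\rho)\|\cK_2(\tau))-\bD(\cL_1(\rho)\|\cL_2(\tau))+\eta .
\]

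To finish, write $\rho_0:=\bar\cN(\rho)$ and $\tau_0:=\bar\cM(\tau)$ (identities on the references understood). Then $\cK_1(\rho)=\Theta_1(\cN)(\rho_0)$, $\cK_2(\tau)=\Theta_2(\cM)(\tau_0)$, $\cL_1(\rho)=\cN\circ\cE_{\Theta_1}(\rho_0)$ and $\cL_2(\tau)=\cM\circ\cE_{\Theta_1}(\tau_0)$, so the right‑hand side above (discarding $\eta$) is exactly
\[
\bD(\Theta_1(\cN)(\rho_0)\,\|\,\Theta_2(\cM)(\tau_0))-\bD(\cN\circ\cE_{\Theta_1}(\rho_0)\,\|\,\cM\circ\cE_{\Theta_1}(\tau_0))\ \le\ \bD^{\tilde A}(\Theta_1\|\Theta_2),
\]
the last step being the choice $(\cN,\cM,\rho_0,\tau_0)$ in the supremum defining $\bD^{\tilde A}$. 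Taking the supremum over $\cN,\cM,\bar\cN,\bar\cM$ and then letting $\eta\to0$ gives $\bD^{A^*}(\Theta_1\|\Theta_2)\le\bD^{\tilde A}(\Theta_1\|\Theta_2)$. The one place to be careful is the ``supremum minus supremum'' bookkeeping---the key trick is precisely to $\eta$‑optimize the first term while only \emph{lower‑bounding} the second, so that the state divergences match and cancel---together with checking that the composed channels genuinely share an input space and with the routine convention for handling $+\infty$ values; I do not expect any of this to be a genuine obstacle.
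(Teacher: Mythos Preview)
Your argument is correct and is essentially the same as the paper's: both proofs unfold the two amortized channel divergences to state divergences using one and the same pair of input states, so that the common $\bD(\rho\|\tau)$ term cancels and what remains is a feasible point for $\bD^{\tilde A}$. The only cosmetic differences are that the paper runs the chain starting from $\bD^{\tilde A}$ and lower-bounds it (assuming the suprema are attained), whereas you start from an arbitrary feasible $(\cN,\cM,\bar\cN,\bar\cM)$ in $\bD^{A^*}$ and upper-bound it with an $\eta$-approximation; your version is in fact slightly cleaner in that it avoids assuming existence of optimizers.
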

\begin{proof}
We begin by fixing $\bar\cN$ and $\bar\cM$ as the channels that achieve the corresponding supremum in $\bD^{A^*}(\Theta_1 \| \Theta_2)$ as defined in Definition~\ref{Def:supAmoDiv}, 
\begin{align}
&\bD^{\tilde A}(\Theta_1 \| \Theta_2) \nonumber\\
&= \sup_{\cN, \cM, \rho, \sigma} \bD(\Theta_1(\cN)(\rho) \| \Theta_2(\cM)(\sigma) ) - \bD(\cN\circ\cE_1(\rho) \| \cM\circ\cE_1(\sigma)) \nonumber\\ 
&\geq \sup_{\cN, \cM, \rho, \sigma} \bD(\Theta_1(\cN)(\bar\cN(\rho)) \| \Theta_2(\cM)(\bar\cM(\sigma)) ) - \bD(\cN\circ\cE_1(\bar\cN(\rho)) \| \cM\circ\cE_1(\bar\cM(\sigma))) \nonumber\\ 
&= \sup_{\cN, \cM, \rho, \sigma} \bD(\Theta_1(\cN)(\bar\cN(\rho)) \| \Theta_2(\cM)(\bar\cM(\sigma)) ) - \bD(\cN\circ\cE_1(\bar\cN(\rho)) \| \cM\circ\cE_1(\bar\cM(\sigma)))  + \bD(\rho \| \sigma) - \bD(\rho \| \sigma),  \label{eq:bridge-stepApA1}
\end{align}
where the first equality is by definition, the inequality by restricting the supremum to output states of $\bar\cM$ and $\bar\cN$ and the final equality by adding a zero. 
We now fix $\hat\rho$ and $\hat\sigma$ as the optimizing states in $\bD^{A}(\Theta_1(\cN)\circ\bar\cN\|\Theta_2(\cM)\circ\bar\cM)$ and continue,
\begin{align*}
&\text{Eq.~\eqref{eq:bridge-stepApA1}} \\
&\geq \sup_{\cN, \cM} \bD(\Theta_1(\cN)(\bar\cN(\hat\rho)) \| \Theta_2(\cM)(\bar\cM(\hat\sigma)) ) - \bD(\cN\circ\cE_1(\bar\cN(\hat\rho)) \| \cM\circ\cE_1(\bar\cM(\hat\sigma)))  + \bD(\hat\rho \| \hat\sigma) - \bD(\hat\rho \| \hat\sigma) \\
&= \sup_{\cN, \cM} \bD^{A}(\Theta_1(\cN)\circ\bar\cN\|\Theta_2(\cM)\circ\bar\cM) - \bD(\cN\circ\cE_1\circ\bar\cN(\hat\rho) \| \cM\circ\cE_1\circ\bar\cM(\hat\sigma))  + \bD(\hat\rho \| \hat\sigma) \\
&\geq \sup_{\cN, \cM} \bD^{A}(\Theta_1(\cN)\circ\bar\cN\|\Theta_2(\cM)\circ\bar\cM) - \sup_{\rho,\sigma}\left[ \bD(\cN\circ\cE_1\circ\bar\cN(\rho) \| \cM\circ\cE_1\circ\bar\cM(\sigma))  + \bD(\rho \| \sigma)  \right] \\
&= \sup_{\cN, \cM} \bD^{A}(\Theta_1(\cN)\circ\bar\cN\|\Theta_2(\cM)\circ\bar\cM) - \bD^A(\cN\circ\cE_1\circ\bar\cN \| \cM\circ\cE_1\circ\bar\cM) \\
&\geq \bD^{A^*}(\Theta_1 \| \Theta_2) ,
\end{align*}
where the first equality follows by the choice of $\hat\rho$ and $\hat\sigma$, the second inequality by taking an infimum and the second and third equality by definition of the corresponding amortized quantity. This concludes the proof. 
\end{proof}
Interestingly it remains open whether the two quantities are actually different. Currently we only know of, rather particular, examples where they turn out to be the same. Should they indeed be different, that would further our understanding that superchannel divergences should be based on channel divergences. If they are the same we would get a possibly more convenient way of working with $\bD^{A^*}$. 

\section{Bonus content: parallel $\subseteq$ adaptive}\label{App:Bonus}

In the main text we have used many times the idea that parallel strategies are a special case of certain adaptive strategies. This can be verified by picking a particular adaptive strategy that implements the parallel strategy. 
For the convenience of the reader, we want to depict the transformation of a fully parallel strategy into a nested adaptive strategy in an animation in Figure~\ref{anim:nes}. The first frame gives the fully parallel strategy for $n=3$. Running the animation by using the arrows below shows how to reorganize the different elements in order to get a special case of the nested adaptive strategies as depicted in Figure~\ref{Fig:nes-adap}. Generally speaking, crossing lines in the final frame can be seen as an operation that swaps the position of two quantum system. To get the general adaptive operations, we simply allow for general quantum channels acting on all available systems instead of the swap operations. 

\textit{Running the animation requires to open this document in a modern pdf viewer, e.g. Adobe Acrobat. }

\begin{figure}[t!]
\begin{animateinline}[controls]{8}
   \multiframe{21}{nxb=0+.05}{%

    \begin{tikzpicture}
    
    \draw[white] (-8,3) rectangle (10,-10) ;
    
    \coordinate (S) at (0,0) ;
    \coordinate (P1s) at (-3.5,0.625-4);
    \coordinate (P1e) at (-7.5,0.5);
    \coordinate (P1) at ($(P1s)!\nxb!(P1e)$);
    
    \coordinate (S1u3e) at (0,1.65) ;
    \coordinate (S1u3) at ($(S)!\nxb!(S1u3e)$);
    \coordinate (S1le) at (-4,0) ;
    \coordinate (S1l) at ($(S)!\nxb!(S1le)$);
    
    \coordinate (S2une) at (0,4/3) ;
    \coordinate (S2un) at ($(S)!\nxb!(S2une)$);
    
    \coordinate (S1re) at (4,0);
    \coordinate (S1r) at ($(S)!\nxb!(S1re)$);
        
\draw[] ($(0,0)+(S2un)$) rectangle (1,2) node[pos=0.5]{$\cN$};
\draw[fill=SCcolor] (-1.5,1) -- (-0.5,1) -- (-0.5,-0.5) -- (1.5,-0.5) -- (1.5,1) -- (2.5,1) -- (2.5,-1.5) -- (-1.5,-1.5) -- cycle;
\draw (-0.5,0.5) -- (0,0.5) node[pos=0.5,sloped,above]{$A$};
\draw (1,0.5) -- (1.5,0.5) node[pos=0.5,sloped,above]{$B$};
\draw ($(-2.5,-0.25)+(S1u3)$) to [out=0,in=180] (-1.5,-0.25);    
\draw (2.5,-0.25) to [out=0,in=180] ($(3.5,-0.25)+(S1u3)$);   
\draw ($(3.5,-0.25)+(S1u3)$) -- ($(3.5,-0.25)+(S1u3)+(S1r)$); 
\node at (0.5,-1){$\Theta$};
\draw (-2.5,1.5) -- (0,1.5);
\draw (1,1.5) -- ($(3.5,1.5)+(S1r)$);
\draw ($(-2.5,-0.25)+(S1u3)$) -- ($(-2.5,-0.25)+(S1u3)+(S1l)$) -- (P1) -- ($(-2.5,1.5)!\nxb!(-6.5,1.5)$) -- (-2.5,1.5);

    \coordinate (S21e) at (-2,3.5) ;
    \coordinate (S21) at ($(S)!\nxb!(S21e)$);
    
    \coordinate (S21ae) at (-1.5,3.25) ;
    \coordinate (S21a) at ($(S)!\nxb!(S21ae)$);
    
    \coordinate (S22e) at (-2,2.5) ;
    \coordinate (S22) at ($(S)!\nxb!(S22e)$);
    \coordinate (S23e) at (2,3.5) ;
    \coordinate (S23) at ($(S)!\nxb!(S23e)$);
    
    \coordinate (S23be) at (1.5,3.25) ;
    \coordinate (S23b) at ($(S)!\nxb!(S23be)$);

    \coordinate (S24e) at (2,2.5) ;
    \coordinate (S24) at ($(S)!\nxb!(S24e)$);

    \coordinate (S2u1e) at (0,2.5) ;
    \coordinate (S2u1) at ($(S)!\nxb!(S2u1e)$);

    \coordinate (S2u2e) at (0,3.25) ;
    \coordinate (S2u2) at ($(S)!\nxb!(S2u2e)$);
    
    \coordinate (S2ne) at (0,2) ;
    \coordinate (S2n) at ($(S)!\nxb!(S2ne)$);
    
    \coordinate (S2u3e) at (0,2.05) ;
    \coordinate (S2u3) at ($(S)!\nxb!(S2u3e)$);
    
    \coordinate (S2le) at (-2,0) ;
    \coordinate (S2l) at ($(S)!\nxb!(S2le)$);

\draw[] ($(0,0-4)+(S2n)+(S2un)+(S2un)$) rectangle ($(1,2-4)+(S2n)+(S2un)$) node[pos=0.5]{$\cN$};
\draw[fill=SCcolor] ($(-1.5,1-4)+(S21)$) -- ($(-1.5,1-4)+(1,0)+(S21)$) -- ($(-0.5,-0.5-4)+(S22)$) -- ($(1.5,-0.5-4)+(S24)$) -- ($(1.5,1-4)+(S23)$) -- ($(2.5,1-4)+(S23)$) -- ($(2.5,-1.5-4)+(S24)$) -- ($(-1.5,-1.5-4)+(S22)$) -- cycle;
\draw ($(-0.5,0.5-4)+(S21)$)  to [out=0,in=180]   ($(0,0.5-4)+(S21a)+(S1u3)$);   
\draw ($(0,0.5-4)+(S21a)+(S1u3)$) --   ($(0,0.5-4)+(S2u2)+(S1u3)$);
\draw ($(1,0.5-4)+(S23b)+(S1u3)$)  to [out=0,in=180]   ($(1.5,0.5-4)+(S23)$);   
\draw ($(1,0.5-4)+(S2u2)+(S1u3)$) --   ($(1,0.5-4)+(S23b)+(S1u3)$); 
\draw ($(-2.5,-0.25-4)+(S21)+(S2u3)$)  to [out=0,in=180]    ($(-1.5,-0.25-4)+(S21)$);  
\draw ($(2.5,-0.25-4)+(S23)$)  to [out=0,in=180] ($(3.5,-0.25-4)+(S23)+(S2u3)$);  
\draw ($(3.5,-0.25-4)+(S23)+(S2u3)$) -- ($(3.5,-0.25-4)+(S23)+(S2u3)-(S2l)$);
\node at ($(0.5,-1-4)+(S2u1)$){$\Theta$};
\draw ($(-2.5,-0.25-4)+(S21)+(S2u3)$) -- ($(-2.5,-0.25-4)+(S21)+(S2u3)+(S2l)$) -- (P1);

    \coordinate (S31e) at (-4,7) ;
    \coordinate (S31) at ($(S)!\nxb!(S31e)$);
    
    \coordinate (S31ae) at (-3.5,6.8) ;
    \coordinate (S31a) at ($(S)!\nxb!(S31ae)$);
    
    \coordinate (S32e) at (-4,5) ;
    \coordinate (S32) at ($(S)!\nxb!(S32e)$);
    \coordinate (S33e) at (4,7) ;
    \coordinate (S33) at ($(S)!\nxb!(S33e)$);
    
    \coordinate (S33be) at (3.5,6.8) ;
    \coordinate (S33b) at ($(S)!\nxb!(S33be)$);
    
    \coordinate (S34e) at (4,5) ;
    \coordinate (S34) at ($(S)!\nxb!(S34e)$);
    \coordinate (S3u1e) at (0,5) ;
    \coordinate (S3u1) at ($(S)!\nxb!(S3u1e)$);

    \coordinate (S3u2e) at (0,6.8) ;
    \coordinate (S3u2) at ($(S)!\nxb!(S3u2e)$);
    
        \coordinate (S3ne) at (0,6.66) ;
    \coordinate (S3n) at ($(S)!\nxb!(S3ne)$);
    
    \coordinate (S3u3e) at (0,2) ;
    \coordinate (S3u3) at ($(S)!\nxb!(S3u3e)$);

\draw[] ($(0,0-8)+(S3n)+(S2un)$) rectangle ($(1,2-8)+(S3n)$) node[pos=0.5]{$\cN$};
\draw[fill=SCcolor] ($(-1.5,1-8)+(S31)$) -- ($(-1.5,1-8)+(1,0)+(S31)$) -- ($(-0.5,-0.5-8)+(S32)$) -- ($(1.5,-0.5-8)+(S34)$) -- ($(1.5,1-8)+(S33)$) -- ($(2.5,1-8)+(S33)$) -- ($(2.5,-1.5-8)+(S34)$) -- ($(-1.5,-1.5-8)+(S32)$) -- cycle;
\draw ($(-0.5,0.5-8)+(S31)$) to [out=0,in=180]  ($(0,0.5-8)+(S31a)+(S3u3)$); 
\draw ($(0,0.5-8)+(S31a)+(S3u3)$) --  ($(0,0.5-8)+(S3u2)+(S3u3)$);
\draw ($(1,0.5-8)+(S33b)+(S3u3)$) to [out=0,in=180]  ($(1.5,0.5-8)+(S33)$);   
\draw ($(1,0.5-8)+(S3u2)+(S3u3)$) --  ($(1,0.5-8)+(S33b)+(S3u3)$);
\draw ($(-2.5,-0.25-8)+(S31)$) -- ($(-1.5,-0.25-8)+(S31)$) node[pos=0.5,sloped,above]{$C$};
\draw ($(2.5,-0.25-8)+(S33)$) -- ($(3.5,-0.25-8)+(S33)$) node[pos=0.5,sloped,above]{$D$};
\node at ($(0.5,-1-8)+(S3u1)$){$\Theta$};
\draw ($(-2.5,-0.25-8)+(S31)$) -- (P1);

    \pgfmathparse{\nxb==0}
    \let\r\pgfmathresult
\ifnum \r=1
\draw[] (0.5,-0) -- (0.5,-0.2);
\draw[dotted] (0.5,-0.2) -- (0.5,-0.4);
\draw[dotted] (0.5,-1.6) -- (0.5,-1.8);
\draw[] (0.5,-1.8) -- (0.5,-2);
\draw[] (0.5,-0-4) -- (0.5,-0.2-4);
\draw[dotted] (0.5,-0.2-4) -- (0.5,-0.4-4);
\draw[dotted] (0.5,-1.6-4) -- (0.5,-1.8-4);
\draw[] (0.5,-1.8-4) -- (0.5,-2-4);
\fi

     \coordinate (M1e) at (4,0) ;
    \coordinate (M1) at ($(S)!\nxb!(M1e)$);
      \coordinate (M2e) at (4,5) ;
    \coordinate (M2) at ($(S)!\nxb!(M2e)$);
      \coordinate (M3e) at (3.51,2.5) ;
    \coordinate (M3) at ($(S)!\nxb!(M3e)$);
 
\draw ($(3.5,2)+(M1)$) -- ($(3.5,-8.5)+(M2)$);
\draw[] ($(3.5,2)+(M1)$) to [out=290,in=70] ($(3.5,-8.5)+(M2)$);
\draw[snake] ($(2.53+2,-3.25)+(M3)$) -- ($(3.5+2,-3.25)+(M3)$) node[pos=0.5,sloped,above]{$p$};

    \end{tikzpicture}
    }
\end{animateinline}
\caption{\label{anim:nes} Animation of how the fully parallel strategy is transformed into a special case of a nested adaptive strategy. Control with arrows at the bottom. Requires viewing in a modern pdf reader.} 
\end{figure}

\bibliographystyle{quantum}

\bibliography{reference}

\end{document}